\let\accentvec\vec  
\let\vec\accentvec 
\newenvironment{claimproof}{\begin{proof}}{\hfill\qed\end{proof}}
\providecommand*{\toclevel@title}{0}
\providecommand*{\toclevel@author}{0}
\newcommand{\drawvertex}[1]{\draw[fill=white] #1 circle (0.07);}
\newcommand{\drawblackvertex}[1]{\draw[fill=black] #1 circle (0.1);}
\newcommand{\drawemptyvertex}[1]{\draw[fill=white] #1 circle (0.1);}
\newcommand{\drawgreyvertex}[1]{\draw[fill=white!50!black] #1 circle (0.1);}
\newcommand{\introduceparameterizedproblem}[4]{
\begin{quote}
\noindent\textbf{#1}

\noindent\textbf{Input:} #2

\noindent\textbf{Parameter:} #3

\noindent\textbf{Question:} #4
\end{quote}}
\newcommand{\KB}{\ensuremath{B_{K_4}}\xspace}
\newcommand{\yes}[0]{\textsc{yes}\xspace}
\newcommand{\no}[0]{\textsc{no}\xspace}
\newcommand{\collapse}[0]{PH~$= \Sigma _3 ^p$\xspace}
\newcommand{\containment}[0]{NP~$\subseteq$~coNP$/$poly\xspace}
\newcommand{\ncontainment}[0]{NP~$\not \subseteq$~coNP$/$poly\xspace}
\newcommand{\K}[0]{\ensuremath{\mathcal{K}}\xspace}
\newcommand{\B}[0]{\ensuremath{\mathcal{B}}\xspace}
\newcommand{\I}[0]{\ensuremath{\mathcal{I}}\xspace}
\renewcommand{\H}[0]{\ensuremath{\mathcal{H}}\xspace}
\newcommand{\R}[0]{\ensuremath{\mathcal{R}}\xspace}
\renewcommand{\P}{\ensuremath{\mathcal{P}}\xspace}
\newcommand{\T}{\ensuremath{\mathcal{T}}\xspace}
\renewcommand{\L}[0]{\ensuremath{\mathcal{L}}\xspace}
\newcommand{\C}[0]{\ensuremath{\mathcal{C}}\xspace}
\newcommand{\G}[0]{\ensuremath{\mathcal{G}}\xspace}
\newcommand{\F}[0]{\ensuremath{\mathcal{F}}\xspace}
\newcommand{\tw}[0]{\mathop{\mathrm{\textsc{tw}}}}
\newcommand{\opt}[0]{{\mathop{\mathrm{\textsc{opt}}}}}
\newcommand{\mbound}[0]{\ensuremath{m(m+3)/2}\xspace}
\newlength{\baseImageHeight}
\newcommand{\octproblem}[0]{\textsc{Odd Cycle Transversal}\xspace}
\newcommand{\weightedoctproblem}[0]{\textsc{Weighted Odd Cycle Transversal}\xspace}
\newcommand{\oct}[0]{\textsc{oct}\xspace}
\newcommand{\loct}[0]{$\ell$-\textsc{oct}\xspace}
\newcommand{\forest}[0]{\textsc{forest}\xspace}
\newcommand{\cluster}[0]{\textsc{cluster}\xspace}
\newcommand{\cocluster}[0]{\textsc{co-cluster}\xspace}
\newcommand{\bipartite}[0]{\textsc{bip}\xspace}
\newcommand{\outerplanar}[0]{\textsc{outerplanar}\xspace}
\newcommand{\edgeless}[0]{\textsc{edgeless}\xspace}
\newcommand{\gtw}[1]{\G_{\tw({#1})}\xspace}
\newcommand{\biptwoct}[0]{$(\bipartite \cap \gtw{w})$-\oct}
\newcommand{\annotatedbiptwoct}[0]{\textsc{Annotated} $(\bipartite \cap \gtw{w})$-\oct}
\newcommand{\restrictedbiptwoct}[0]{\textsc{Restricted Annotated} $(\bipartite \cap \gtw{w})$-\oct}
\spnewtheorem*{theoremstar}{Theorem}{\bfseries}{\itshape}
\newcommand{\sectref}[1]{Section~\ref{#1}}
\newcommand{\defref}[1]{Definition~\ref{#1}}
\newcommand{\lemmaref}[1]{Lemma~\ref{#1}}
\newcommand{\thmref}[1]{Theorem~\ref{#1}}
\newcommand{\imgref}[1]{Figure~\ref{#1}}
\newcommand{\proposref}[1]{Proposition~\ref{#1}}
\newcommand{\BigO}{\mathcal{O}}
\newcommand{\middlemid}{\,\middle\vert\,}
\titlerunning{On Polynomial Kernels for Structural Parameterizations of OCT}
\title{On Polynomial Kernels for Structural Parameterizations of Odd Cycle Transversal\thanks{This work was supported by the Netherlands Organization for Scientific Research (NWO), project ``KERNELS: Combinatorial Analysis of Data Reduction''.}}
\author{Bart M. P. Jansen and Stefan Kratsch}
\institute{
Utrecht University, The Netherlands,
\email{$\{$bart,kratsch$\}$@cs.uu.nl}
}
\begin{document}

\maketitle

\begin{abstract}
The \octproblem problem (\oct) asks whether a given graph can be made bipartite (i.e.,~$2$-colorable) by deleting at most~$\ell$ vertices.
We study structural parameterizations of \oct with respect to their polynomial kernelizability, i.e., whether instances can be efficiently reduced to a size polynomial in the chosen parameter. It is a major open problem in parameterized complexity whether \octproblem admits a polynomial kernel when parameterized by~$\ell$.

On the positive side, we show a polynomial kernel for \oct when parameterized by the vertex deletion distance to the class of bipartite graphs of treewidth at most~$w$ (for any constant~$w$); this generalizes the parameter feedback vertex set number (i.e., the distance to a forest).

Complementing this, we exclude polynomial kernels for \oct parameterized by the distance to outerplanar graphs, conditioned on the assumption that~NP~$\nsubseteq$~coNP/poly. Thus the bipartiteness requirement for the treewidth~$w$ graphs is necessary. Further lower bounds are given for parameterization by distance from cluster and co-cluster graphs respectively, as well as for \textsc{Weighted} \oct parameterized by the vertex cover number (i.e., the distance from an independent set).
\end{abstract}

\begin{bibunit}[abbrv]

\section{Introduction}
\textsc{Odd Cycle Transversal} (\oct), also called \textsc{Graph Bipartization}, is the task of making an undirected graph bipartite by deleting as few vertices as possible; such a set is a transversal of the odd-length cycles in the graph. The \oct problem has applications in computational biology~\cite{RizziBIL02,Wernicke03}, amongst others. It is NP-complete and admits a polynomial-time~$\BigO(\log n)$-factor approximation algorithm~\cite{GargVY94}; no constant-factor approximation is possible unless Khot's Unique Games Conjecture fails~\cite{Khot02,Wernicke03}. 

In this work we study the parameterized complexity~\cite{DowneyF99} of \oct, focusing on data reduction and kernelization. Parameterized analysis measures the complexity of an algorithm in two dimensions, the input size~$|x|$ and an additional \emph{parameter}~$k \in \mathbb{N}$ which expresses some property of the instance, such as the size of the desired solution. A parameterized problem is a language~$Q \subseteq \Sigma ^* \times \mathbb{N}$, and~$Q$ is (strictly uniformly) \emph{fixed-parameter tractable} (FPT) if there is an algorithm that decides whether~$(x, k) \in Q$ with running time bounded by~$f(k) |x|^{\BigO(1)}$ for some computable function~$f$. 

In the standard parameterization of \octproblem which we call \loct, the parameter~$k := \ell$ measures the number of allowed vertex deletions~$\ell$: an instance is a tuple~$((G,\ell), k := \ell)$ where~$G$ is a graph and~$\ell \in \mathbb{N}$, and the question is whether there is a set~$S \subseteq V(G)$ of size at most~$\ell$ such that~$G - S$ is bipartite. The \loct problem has been very important to the development of parameterized algorithmics, since the algorithm given by Reed, Smith and Vetta~\cite{ReedSV04} to solve \loct in~$\BigO(4^{\ell} \ell mn)$ time\footnote{H\"uffner~\cite{Huffner09} re-analyzed the algorithm and showed it has time complexity~$\BigO(3^{\ell} \ell mn)$.} introduced the technique of \emph{iterative compression} which has turned out to be a key ingredient in finding FPT algorithms for  \textsc{Directed Feedback Vertex Set}~\cite{ChenLLOR08} and \textsc{Multicut}~\cite{BousqetDT11,MarxR10}, amongst others. There has been a significant amount of work on improved exact and parameterized algorithms for \oct and related problems~\cite{RamanSS05,GuoGHNW06,FioriniHRV08,Huffner09,KawarabayashiR10,MarxOR10}.

Kernelization is an important subfield of parameterized complexity which studies polynomial-time preprocessing~\cite{GuoN07a}. A \emph{kernelization algorithm} (or \emph{kernel}) for a parameterized problem~$Q$ is a polynomial-time algorithm which transforms an input~$(x,k) \in \Sigma^* \times \mathbb{N}$ into an \emph{equivalent} reduced instance~$(x', k')$ such that~$|x'|, k' \leq f(k)$ for some computable function~$f$, which is called the \emph{size} of the kernel. All problems in FPT admit kernels for some suitable function~$f$, but \emph{polynomial kernels} (where~$f(k) \in k^{\BigO(1)}$) are of particular interest. It is a famous open problem whether or not \loct admits a polynomial kernel~\cite{Huffner09,GuoGHNW06}. At the 2010 workshop on kernelization WORKER, this was stated as one of the two main open problems in kernelization to date. Even finding a polynomial kernel for \loct restricted to planar graphs was listed as an open problem by Bodlaender et al.\ in the full version of their work~\cite{BodlaenderFLPST09}, despite the fact that planarity makes it significantly easier to obtain polynomial kernels. 

\textbf{Our contribution. }
We study the existence of polynomial kernels for various \emph{structural} parameterizations of the \oct problem. While we have not been able to settle the question of whether \loct admits a polynomial kernel, we do give several upper- and lower bound results for kernel sizes that we believe are important steps towards resolving the main problem. All parameterized problems we consider fit into the following scheme, where \F is a class of graphs:
\introduceparameterizedproblem
{Odd Cycle Transversal parameterized by vertex-deletion distance to~$\boldsymbol{\F}$ [($\boldsymbol{\F}$)-OCT]}
{A graph~$G$, an integer~$\ell$ and a set~$X$ such that~$G - X \in \F$.}
{$k := |X|$.}
{Is there a set~$S \subseteq V(G)$ of size at most~$\ell$ such that~$G - S$ is bipartite?}
We give kernelization upper- and lower bounds for such parameterized problems.

\emph{Upper bounds.}
Our initial goal was to study \oct parameterized by the size of a feedback vertex set (FVS) of the input graph. Recall that a FVS can be defined as a set of vertices whose deletion turns the graph into a forest, and hence this is the (\forest)-\oct problem. After having obtained a polynomial kernel for this problem, we considered generalizations and were able to extend our result significantly. Let \bipartite denote the class of all bipartite graphs, and let~$\gtw{w}$ denote the graphs of treewidth at most~$w$. It is well-known that~$\forest = \bipartite \cap \gtw{1}$. We extended our result for feedback vertex number by showing that for every constant~$w$, the problem ($\bipartite \cap \gtw{w}$)-\oct has a polynomial kernel. Using an approximation algorithm to compute the set~$X$ we can even drop the requirement that the set~$X$ is given in the input; the size of the reduced instance will then be bounded polynomially in the minimum-size of such a set~$X$. Our result can therefore be stated 
as follows: for every fixed~$w\geq 1$ there is a polynomial-time algorithm that transforms an instance~$(G, \ell)$ of \oct into an equivalent instance whose size is bounded by a polynomial in~$|X|$, where~$X \subseteq V(G)$ is a smallest vertex set such that~$G - X \in \bipartite \cap \gtw{w}$.

We believe that the ingredients of our kernelization will be useful for solving the main open problem of whether \loct admits a polynomial kernel. Our kernel uses several powerful techniques from the area of parameterized algorithmics; here is a brief overview. We introduce an annotated version of the problem and show that using these annotations the problem essentially reduces to a connectivity problem with respect to the vertices of the deletion set~$X$. We give a lemma which shows that the main structure of the problem instance lies within an $|X|^{\BigO(1)}$-sized set of connected components of the graph~$G - X$. Using a technique originating in the study of protrusion-based kernelization~\cite{BodlaenderFLPST09} we show the fact that~$G - X \in \gtw{w}$ implies that the number of vertices from~$V(G) \setminus X$ on the boundary of such regions can be bounded by a constant. We analyze the structure of a solution inside such a region in terms of combinatorial properties of separators in labeled graphs. Using the concept of \emph{important separators} as introduced by Marx~\cite{Marx06c} we prove that the number of separators which are relevant to the problem can be bounded polynomially in~$|X|$. To obtain the polynomial kernel we then show how to get rid of vertices which do not belong to any relevant separator.

\emph{Lower bounds.} As described in the previous paragraph, we show the existence of polynomial kernels for~$(\bipartite \cap \gtw{w})$-\oct. We can also prove that the bipartiteness condition cannot be dropped (under a reasonable complexity-theoretic assumption). Observe that~$\gtw{1}$ coincides with the class of forests, and hence only contains bipartite graphs. But~$\gtw{2}$ is the first class of bounded-treewidth graphs which contains non-bipartite graphs, and we prove using cross-composition~\cite{BodlaenderJK11} that~$(\gtw{2})$-\oct does not admit a polynomial kernel unless \containment, which implies a collapse of the polynomial-time hierarchy to the third level (\collapse) and further. We actually prove that~$(\outerplanar)$-\oct does not admit a polynomial kernel under this assumption, which is a stronger statement since~$\outerplanar \subseteq \gtw{2}$. We also show that if we take~\F to be a class of non-bipartite but very simply structured graphs (such as cluster graphs, the union of cliques, or their edge-complements co-cluster graphs) then we cannot obtain polynomial kernels:~$(\cluster)$-\oct and~$(\cocluster)$-\oct do not admit polynomial kernels unless \containment. Since (co)cluster graphs have a very limited structure, the vertex-deletion distance to these graph classes will often be very large; our results show that even for such a large parameter one should not expect to find a polynomial kernel. Finally we look at the vertex-weighted version of \oct and prove that in the presence of vertex weights we cannot even obtain a polynomial kernel measured by the vertex deletion distance to an edgeless graph:~$(\edgeless)$-\weightedoctproblem (which is equivalent to \weightedoctproblem parameterized by the cardinality of a vertex cover) does not admit a polynomial kernel unless \containment. All parameterizations for which we prove kernel lower bounds can be seen to be fixed-parameter tractable because the classes \F have bounded cliquewidth~\cite{CourcelleMR00} and therefore the cliquewidth of the input graphs is bounded by a function of the parameter. 

\textbf{Related work.}
Recent work of Kratsch and Wahlstr\"om~\cite{KratschW11} gives a randomized polynomial kernel for \loct, using matroid theory. To the best of our knowledge no deterministic (and combinatorial) polynomial kernel is known for \loct or for any non-trivial parameterizations of the \oct problem. Wernicke~\cite{Wernicke03} used several reduction rules for \oct as part of his branch-and-bound algorithm, but these rules were not analyzed within the framework of kernelization and do not give provable bounds on the size of reduced instances with respect to any graph parameter. Kernelization with respect to structural parameterizations has been studied by a handful of authors, e.g.,~\cite{FellowsLMMRS09,BodlaenderJK11,JansenB11,UhlmannW10}. 

\textbf{Organization.} We start by giving some preliminaries. In \sectref{section:separators} we give combinatorial bounds for separators in labeled graphs, which will be used in the kernelization algorithm. \sectref{section:kernelization} presents the polynomial kernel for~$(\bipartite \cap \gtw{w})$-\oct. We briefly discuss the kernelization lower bounds in \sectref{section:lowerbounds} and conclude in \sectref{section:conclusion}.

\section{Preliminaries} \label{section:preliminaries}
All graphs considered in this work are simple, undirected, and finite. If~$G$ is a graph then~$V(G)$ and~$E(G)$ denote the vertex- and edge set, respectively. We let \emph{length} and \emph{parity of a path} refer to the number of its vertices. For a vertex~$v \in V(G)$ the open neighborhood is denoted by~$N_G(v)$ and the closed neighborhood is~$N_G[v] := N_G(v) \cup \{v\}$. The open neighborhood of a set~$S \subseteq V(G)$ is~$N_G(S) := \bigcup _{v \in S} N_G[v] \setminus S$. The graph~$G - S$ is the result of removing all vertices in~$S$ and their incident edges from~$G$. We use~$[n]$ as a shorthand for~$\{1, \ldots, n\}$. The term~$\binom{X}{n}$ denotes the collection of all size-$n$ subsets of the finite set~$X$, whereas~$\binom{X}{\leq n}$ represents the collection of size \emph{at most}~$n$ subsets of~$X$. The sizes of these collections are denoted by~$\binom{|X|}{n}$ and~$\binom{|X|}{\leq n}$, respectively.

\section{Combinatorial properties of separators in labeled graphs} \label{section:separators}
An important part of our kernelization relies on a combinatorial bound on the number of essentially distinct ways to separate terminals from labeled vertices in a graph: we prove that if the number of terminals and the size of the separators is taken as a constant, then the number of distinct ways to separate the labels grows polynomially with the number of labels. We believe this to be of independent interest. Some definitions are needed to formalize these claims.
\begin{definition} \label{reachableLabelsDef}
A \emph{labeled graph} is a tuple~$(G, L, f)$ where~$G$ is a graph,~$L$ is a finite set of labels, and~$f \colon V(G) \to 2^L$ is a labeling function which assigns to each vertex a (possibly empty) subset of the labels. For a subset~$S \subseteq V(G)$ and terminal~$t \in V(G)$ we denote by~$R(t, S)$ the \emph{vertices} of~$V(G) \setminus S$ reachable from~$t$ in~$G - S$. The \emph{labels} reachable from~$t$ in~$G - S$ are~$\L(t, S) := \bigcup _{v \in R(t, S)} f(v)$.
\end{definition}

\begin{definition} \label{cutCharacteristicDef}
Let~$(G, L, f)$ be a labeled graph and let~$T = t_1, \ldots, t_n$ be a sequence of distinct terminal vertices in~$G$. The \emph{cut characteristic}~$\K(S, T)$ of a set~$S \subseteq V(G)$ with respect to the terminals~$T$ is an~$n$-dimensional vector $\K(S, T) := ( \L(t_1, S), \L(t_2, S), \ldots, \L(t_n, S))$ whose elements are subsets of~$L$. The set of \emph{distinct cut characteristics}~$\K^m(T)$ for separators of size at most~$m \geq 1$ is~$\K^m(T) := \left \{ \K(S, T)  \middlemid S \in \binom{V(G)}{\leq m} \right \}$.
\end{definition}

Marx~\cite{Marx06c} introduced the notion of \emph{important separators}, and proved their number to be bounded, independently of the graph size. An involved argument which relates important separators to distinct cut characteristics yields the following theorem.

\begin{theorem} \label{simpleCutCharacteristicBound}
Let~$\kappa(n, m, r)$ denote the maximum of~$|\K^m(T)|$ over all labeled graphs~$(G, L, f)$ with~$|L| \leq r$ and over all sets of terminals~$T = \{v_1, v_2, \ldots, v_n\} \subseteq V(G)$, i.e., the maximum number of distinct cut characteristics induced by~$m$-vertex separators in an~$n$-terminal graph labeled with~$r$ different labels. Then $\kappa(n, m, r) \in \BigO(m^{2n} \cdot r^{nm(m+3)/2} \cdot 4^{nm})$, which is polynomial in~$r$ for fixed~$n, m$.
\end{theorem}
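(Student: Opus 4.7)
The plan is to encode each cut characteristic by a tuple consisting of (a) one important $(t_i,\cdot)$-separator per terminal, and (b) auxiliary combinatorial data describing which labels ``stick to'' each separator. Marx's important-separator theorem (in the Chen--Liu--Lu formulation) supplies the $4^{nm}$ factor, an assignment of $m$ roles to separator vertices yields the $m^{2n}$ factor, and a label fingerprint of size $\tfrac{m(m+3)}{2}$ per terminal accounts for the $r^{nm(m+3)/2}$ factor; the claim then follows by counting encodings.

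For the encoding step I fix an arbitrary separator $S$ of size $\leq m$ realizing a given cut characteristic, and for each terminal $t_i$ set $R_i := R(t_i, S)$, $B_i := L \setminus \L(t_i, S)$, and $W_i := \{\, v \in V(G) \mid f(v) \cap B_i \neq \emptyset\,\}$. By construction $R_i \cap W_i = \emptyset$, so $S$ is a $(t_i, W_i)$-separator of size $\leq m$. I push $S$ as far as possible towards $W_i$ to obtain an important $(t_i, W_i)$-separator $S_i^\star$ with $|S_i^\star| \leq m$ and $R(t_i, S_i^\star) \supseteq R_i$. Since the number of important $(t_i, W_i)$-separators of size $\leq m$ is bounded by $4^m$ independently of $W_i$, the tuple $(S_1^\star, \ldots, S_n^\star)$ ranges over at most $4^{nm}$ values even though $W_i$ itself need not be encoded explicitly.

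Because the push from $S$ to $S_i^\star$ may have dragged labeled vertices into the $t_i$-side, $\L(t_i, S_i^\star)$ can strictly contain $\L(t_i, S)$, so additional data is needed to pin down $\L(t_i, S)$. I would record, per terminal, a ``label fingerprint'' whose entries are indexed by unordered pairs drawn from $S_i^\star \cup \{t_i\}$ together with $m$ extra terminal--separator incidences, giving $\tfrac{m(m+3)}{2}$ slots that each take a value in $L \cup \{\bot\}$, as well as an ordering of $S_i^\star$ that specifies the role of each separator vertex, contributing the $m^{2n}$ factor across terminals. Intuitively, the fingerprint tells me exactly which label-bearing vertices in $R(t_i, S_i^\star) \setminus R_i$ must be ``subtracted'' to reconstruct $\L(t_i, S)$ from $\L(t_i, S_i^\star)$.

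The hard part will be establishing that this encoding is \emph{injective}: if two separators $S, S'$ give the same tuple of important separators and the same fingerprint, then $\L(t_i, S) = \L(t_i, S')$ for every $i$. Concretely, I need a reconstruction lemma showing that $\L(t_i, S)$ can be recovered deterministically from $(S_i^\star, \text{fingerprint})$ alone, i.e., that the label content of $R(t_i, S_i^\star) \setminus R_i$ is completely captured by $\tfrac{m(m+3)}{2}$ label slots. This is where minimality of important separators must be used essentially: the $\leq m$ vertices separating the ``gained'' region from the rest restrict how many distinct label-to-vertex incidences in that region can affect the characteristic, and a quadratic-in-$m$ counting of relevant incidences should yield exactly the exponent $m(m+3)/2$. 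This combinatorial reconstruction lemma is the technical heart of the theorem and I expect it to dominate the proof.
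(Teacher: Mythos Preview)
Your overall plan---encode each cut characteristic by an important separator per terminal plus a bounded amount of label data---is the right shape, but the encoding as stated has a genuine gap, and the role you assign to the label data is inverted relative to what actually works.

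First, the claim that the tuple $(S_1^\star,\ldots,S_n^\star)$ ranges over at most $4^{nm}$ values is not justified. The $4^m$ bound on important $(t_i,W_i)$-separators holds for a \emph{fixed} target $W_i$, but your $W_i$ depends on the original separator $S$: different $S$ with different cut characteristics yield different $W_i$, and hence different families of important separators. So without encoding $W_i$ (or something equivalent) you cannot invoke the $4^m$ bound.

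Second, the fingerprint you describe is aimed at the wrong target. Since $S_i^\star$ is still a $(t_i,W_i)$-separator, no vertex carrying a label of $B_i$ is reachable from $t_i$ in $G-S_i^\star$; combined with $R(t_i,S)\subseteq R(t_i,S_i^\star)$ this gives $\L(t_i,S_i^\star)=\L(t_i,S)$ exactly, not a strict superset. So there is nothing to ``subtract,'' and the reconstruction lemma you anticipate is vacuous.

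The paper resolves both issues by reversing the order of the encoding: it first proves a \emph{small witness} lemma showing that any important $(t,V_G(J))$-separator of size $k$ is also an important $(t,V_G(J^*))$-separator for some $J^*\subseteq J$ with $|J^*|\le\sum_{i=1}^k(i+1)=k(k+3)/2$. The encoding per terminal is then the pair $(J^*,S')$ where $J^*\in\binom{L}{\le m(m+3)/2}$ and $S'$ is one of the $\le 4^m$ important $(t,V_G(J^*))$-separators of size $\le m$. Here the label data pins down the \emph{target} of the important separator, which is what makes the $4^m$ bound applicable; the $m^{2n}$ factor then falls out of bounding $\binom{r}{\le m(m+3)/2}$. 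The technical heart is thus this small-witness lemma (proved by induction on $|S|$ using a careful path-packing argument for important separators), not a reconstruction lemma of the kind you sketch.
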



\section{Polynomial kernelization for (BIP~$\cap$~\rm{$\gtw{w}$})\bf{-OCT}}\label{section:kernelization}

In this section we describe our polynomial kernelization for \biptwoct. 
Note that the definition of \biptwoct assumes a deletion set to be given in the input, and our kernelization will relate to its size. We will discuss the approximability of the deletion set at the end of the section, which will extend our kernelization to the case that no deletion set is given.

To simplify the formulation of the reduction process, we will actually work with an annotated version of the problem. To obtain the final reduced instance we will later undo these annotations at a small cost.

\introduceparameterizedproblem
{Annotated (BIP~$\boldsymbol{\cap}$~\rm{$\boldsymbol{\gtw{w}}$})\bf{-OCT}}
{A graph~$G$, a set~$X\subseteq V(G)$ such that~$G - X \in (\bipartite \cap \gtw{w})$, a set~$M\subseteq\binom{X}{2}$, and an integer~$\ell$.}
{$k:=|X|$.}
{Is there a set~$S \subseteq V(G)$ of size at most~$\ell$ such that~$G - S$ is bipartite, and there is a proper~$2$-coloring~$c$ of~$G - S$ such that~$c(p)=c(q)$ for all~$\{p,q\}\in M$?}

We call vertex pairs~$\{p,q\}\in M$ \emph{monochromatic}, and these annotations allow us to easily talk about vertices which are constrained to have the same color in~$G - S$. Observe that the dual notion, vertices~$p, q \in X$ which must receive different colors in the bipartite graph~$G - S$, is expressed simply through the existence of an edge~$\{p,q\}$. We will therefore refer to vertices~$p,q \in X$ which are adjacent as vertices to be annotated as \emph{bichromatic}. 
There is no reason a priori that a pair~$\{p,q\}$ cannot be constrained to be simultaneously bichromatic and monochromatic; this condition implies that any valid solution has to delete at least one vertex of the pair before a proper coloring~$2$-coloring can be found. A coloring is said to \emph{respect all annotations} if it respects all edges between vertices of~$X$ as well as the monochromatic pairs given by the set~$M$.

The following straightforward lemma will be used in a number of proofs throughout this section. It shows that any partial~$2$-coloring of a graph whose uncolored parts are bipartite can either be extended to a~$2$-coloring of the whole graph, or one finds a path between two already colored vertices whose parity does not match their colors (e.g., the path has an odd number of internal vertices but the color of the endpoints is different).

\begin{lemma}\label{lemma:twocloringextension}
Let~$G$ be a graph, let~$S\subseteq V(G)$ be such that~$G-S$ is bipartite, and let~$c \colon S \to \{0,1\}$ be a proper~$2$-coloring of~$G[S]$. Then in polynomial time one finds either an extension of~$c$ to a proper~$2$-coloring of~$G$, or a connected component~$C$ of~$G-S$ and vertices~$p,q\in N_G(C)\subseteq S$ as well as a~$p-q$ path~$P$ such that either
\begin{itemize}
\item~$P$ has an odd number of internal vertices and~$c(p)\neq c(q)$, or
\item~$P$ has an even number of internal vertices and~$c(p)=c(q)$.
\end{itemize}
Furthermore, all internal vertices of~$P$ are from~$V(G) \setminus S$ and~$P$ is simple except possibly for~$p=q$ (in the latter case~$P$ is in fact an odd cycle through~$p$).
\end{lemma}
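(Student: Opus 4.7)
The plan is to handle each connected component $C$ of $G-S$ independently. For a fixed component $C$, bipartiteness of $G-S$ ensures that $C$ admits exactly two proper $2$-colorings $c_C$ and $\bar c_C$, both computable in polynomial time by BFS from any root in $C$. For each vertex $v\in C$ with a neighbor $s\in N_G(v)\cap S$, any valid extension must satisfy $c(v)=1-c(s)$; call this the \emph{requirement} imposed by $s$ on $v$. The extension of $c$ to $C$ succeeds iff either $c_C$ or $\bar c_C$ agrees with every requirement, and this can be tested in polynomial time by comparing the two candidate colorings against the list of boundary requirements. If some component admits no extension then no extension of $c$ to $G$ exists.

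If every component extends, we are done. Otherwise pick any component $C$ whose requirements cannot all be simultaneously satisfied. The plan then is to produce from $C$ two requirements that conflict, and weld them together through $C$ into the claimed path $P$. Concretely, since neither $c_C$ nor $\bar c_C$ satisfies all requirements, I would pick one requirement $c(v_1)=1-c(s_1)$ violated by $c_C$ and one requirement $c(v_2)=1-c(s_2)$ violated by $\bar c_C$ (with $s_1,s_2\in S$, and possibly $v_1=v_2$ or $s_1=s_2$); the fact that $c_C$ and $\bar c_C$ are each other's complements forces the parity of any $v_1{-}v_2$ path in $C$ to be inconsistent with the pair $(c(s_1),c(s_2))$. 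More explicitly, let $Q$ be any shortest $v_1{-}v_2$ path inside $C$, let $L$ denote its number of edges, and define $P:=s_1,v_1,Q,v_2,s_2$, which has $L+1$ internal vertices all from $V(G)\setminus S$. A proper $2$-coloring of $C$ satisfies $c_C(v_1)=c_C(v_2)$ iff $L$ is even; combining this with the fact that $v_1$ must receive $1-c(s_1)$ and $v_2$ must receive $1-c(s_2)$ while no single choice among $\{c_C,\bar c_C\}$ achieves both yields the two advertised cases: $L+1$ odd with $c(s_1)\neq c(s_2)$, or $L+1$ even with $c(s_1)=c(s_2)$.

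It remains to handle the degenerate shape. If $s_1\neq s_2$ then $P$ is simple because $Q$ is internally disjoint from $S$ and $s_1,s_2\notin V(Q)$. If $s_1=s_2=:p$, then necessarily $v_1\neq v_2$ (a single requirement on one boundary vertex cannot be inconsistent with itself), so $Q$ has $L\ge 1$ and the parity analysis above collapses to $c(p)=c(p)$ being trivially true, forcing the failure to be of the "$L+1$ even" type; then $P$ is a closed walk at $p$ whose edge count $L+2$ is odd, i.e.\ an odd cycle through $p$, as required. The easy subcase $v_1=v_2$ with $s_1\neq s_2$ gives $L=0$ and $P=s_1,v_1,s_2$ of length two, which fits the odd internal vertices case.

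The main obstacle is bookkeeping the parity translation between "path length", "number of internal vertices" and "colors of endpoints" cleanly across the three subcases ($s_1\neq s_2$, $s_1=s_2$, and $v_1=v_2$), and in particular checking that a genuine failure of extension always yields \emph{two} violated requirements of the specific "opposite orientation" form used above — this is where I would invest the most care. Everything else (polynomial running time, internal vertices lying in $V(G)\setminus S$, and the reduction to one component at a time) is immediate from BFS and the definition of requirements.
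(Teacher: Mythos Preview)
Your argument is correct and follows essentially the same idea as the paper: each connected component of $G-S$ is bipartite, hence has exactly two proper $2$-colorings, and an extension exists iff one of them is consistent with all boundary constraints from $S$; otherwise two conflicting boundary constraints can be joined through the component into a path of the wrong parity. The paper implements this by greedily extending $c$ via BFS and, upon hitting a monochromatic edge, tracing the two BFS-tree paths back to $S$, whereas you compute both candidate colorings explicitly and pick one violated requirement from each---a cosmetic difference only.
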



Now, for instructive purposes, consider an instance~$(G,X,M,\ell)$ of the annotated problem and assume that there is a connected component~$C$ of~$G-X$ such that the parity of all paths between vertices of~$X$ which run through~$C$ matches annotations: e.g., if there is an odd~$p-q$ path,~$p,q\in X$, with internal vertices from~$C$ then~$p$ and~$q$ are annotated as monochromatic,~$\{p,q\}\in M$ (resp.\ for an even path we already have~$\{p,q\}\in E(G)$). Since~$C$ is bipartite, Lemma~\ref{lemma:twocloringextension} now implies that any~$2$-coloring of~$G[X]$ that respects all annotations can be extended to a proper~$2$-coloring of~$G[X\cup V(C)]$, i.e., extended onto~$C$.

Thus, since the components of~$G-X$ are already bipartite, we are only interested in paths between vertices of~$X$ that they provide, in particular in paths that do not match annotations. The following definition formalizes these as \emph{$X$-paths} and \emph{important~$X$-paths}.


\begin{definition} \label{importantXPath}
An \emph{$X$-path} of length~$r$ between (not necessarily distinct) vertices~$p, q \in X$ in an instance of the annotated problem is a simple path~$P = \{v_1, \ldots, v_r\}$ in~$G - X$ such that there are distinct edges~$\{p, v_1\}, \{v_r, q\} \in E(G)$. A~$p-q$~$X$-path is \emph{important} if (a) its length is odd,~$p \neq q$, and~$\{p,q\} \not \in M$, or (b) its length is even and~$\{p,q\} \not \in E(G)$.
\end{definition}
Observe that the definition of a~$p-q$~$X$-path excludes the possibility where~$p = q$ and the odd~$p-p$~$X$-path~$P$ consists of only one vertex~$v_1 = v_r$, because in that case the edges~$\{p,v_1\}$ and~$\{v_r, q\}$ are not distinct.

With the following lemma we begin to explore the structure of the important~$X$-paths. Given a graph~$G$ and a set~$X$ such that~$G-X$ is bipartite we count vertex-disjoint odd and even length~$p-q$~$X$-paths for all~$p,q\in X$. For each pair and parity the lemma will provide in polynomial time either a small hitting set intersecting all important~$X$-paths, or point out that the number of paths exceeds our budget of~$\ell$ vertex deletions (this is indicated by the sets~$A$,~$B$, and~$C$ which will later be turned into annotations, edges, and vertex deletions); Algorithm~\ref{algorithm:computehittingset} shows this in detail. We remark that both the lemma and the algorithm can also be applied to any other parameterization of \oct, given that~$X$ is a deletion set to any class of bipartite graphs (it is easy to see that~$\ell<|X|$ in all interesting cases); in particular it can be applied to the standard parameterization whose kernelizability is still open.

\begin{lemma}\label{hittingsetandannotations}
Let~$G$ be a graph,~$\ell$ be an integer, and~$X\subseteq V(G)$ such that~$G-X$ is bipartite. Then ComputeHittingSet$(G,X,\ell)$ computes sets~$A,B\subseteq\binom{X}{2}$, a set~$C\subseteq X$, and a set~$H \subseteq V(G) \setminus X$ of size at most~$4\ell\cdot|X|^2$ such that for all~$\{u,v\}\in\binom{X}{2}$:
\begin{enumerate}
\item If~$\{u,v\}\in A$ (resp.~$\{u,v\}\in B$) then there are at least~$\ell+1$ vertex-disjoint~$X$-paths of even (odd) length between~$u$ and~$v$.
\item The set~$H$ intersects all even (odd) length~$u-v$~$X$-paths with~$\{u,v\}\notin A$ (resp.~$\{u,v\}\notin B$).
\end{enumerate}
Furthermore, if~$v\in C$ then there are at least~$\ell+1$ even~$v-v$~$X$-paths (i.e., odd cycles that intersect only in~$v$), and~$H$ intersects all such paths for~$v\in X\setminus C$.  
\end{lemma}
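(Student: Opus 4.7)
My plan is to prove the lemma by exhibiting and analyzing a straightforward max-flow-based algorithm for \textsc{ComputeHittingSet}. The key structural observation is that because $G - X$ is bipartite, a proper $2$-coloring $\chi \colon V(G) \setminus X \to \{0,1\}$ with bipartition $(Y_0, Y_1)$ can be computed in polynomial time, and the parity of any $X$-path $P = v_1, \ldots, v_r$ can be read off from $\chi$: since $P$ contributes $r-1$ edges inside the bipartite graph $G - X$, we have $r$ odd iff $\chi(v_1) = \chi(v_r)$ and $r$ even iff $\chi(v_1) \neq \chi(v_r)$. This reduces parity-constrained $X$-path enumeration to ordinary vertex-disjoint path questions.

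For each pair $\{u,v\} \in \binom{X}{2}$ and each parity (even/odd), I would split further into two sub-cases by fixing the colors $(\chi(v_1), \chi(v_r))$: the even sub-cases are $(0,1)$ and $(1,0)$, the odd sub-cases are $(0,0)$ and $(1,1)$. For each sub-case I build an auxiliary network on $(V(G) \setminus X) \cup \{s,t\}$, connecting $s$ to the prescribed subset of $N(u) \cap Y_{\chi(v_1)}$ and the prescribed subset of $N(v) \cap Y_{\chi(v_r)}$ to $t$, giving each vertex of $V(G) \setminus X$ unit capacity. Using standard flow/Menger machinery, in polynomial time I obtain either $\ell + 1$ internally vertex-disjoint $s$-$t$ paths (each of which is an $X$-path of the desired parity and sub-case) or a vertex cut of size at most $\ell$ hitting all such paths. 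If any even sub-case produces $\ell+1$ paths I place $\{u,v\}$ into $A$; otherwise both even sub-case cuts are dumped into $H$. The handling of $B$ is analogous. For each $v \in X$ I run a single analogous computation between $N(v) \cap Y_0$ and $N(v) \cap Y_1$, whose internally vertex-disjoint paths correspond precisely to odd cycles through $v$ intersecting only at $v$, placing $v$ into $C$ or dumping the cut into $H$ accordingly.

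Correctness of the two numbered properties is immediate from max-flow/min-cut duality: if $\{u,v\} \in A$ then one even sub-case has produced $\ell + 1$ internally vertex-disjoint even $X$-paths, and if $\{u,v\} \notin A$ then both even sub-case flows had value at most $\ell$ and the corresponding min cuts went into $H$, so $H$ intersects every even $u$-$v$ $X$-path; the argument for $B$ and for the furthermore-clause about $C$ is identical. For the size bound we run at most $4 \binom{|X|}{2}$ computations for pairs (two parities times two sub-cases) plus $|X|$ computations for $C$, each contributing at most $\ell$ vertices to $H$, giving $|H| \leq 4\ell \binom{|X|}{2} + \ell |X| \leq 4\ell \cdot |X|^2$.

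The main obstacle I expect is the careful bookkeeping that couples the bipartite structure of $G - X$ with the flow network and the insistence on internal vertex-disjointness. A tempting unified network that merges both sub-cases into a single max-flow computation leaks paths of the wrong parity through the shared source and sink, since a flow path $s \to u_0 \to \cdots \to v_0 \to t$ would correspond to an odd $X$-path rather than an even one; splitting cleanly into sub-cases sidesteps this at the cost of at most a factor $2$ in the hitting-set budget, which the loose bound $4\ell |X|^2$ easily absorbs.
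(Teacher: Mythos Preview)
Your proposal is correct and follows essentially the same approach as the paper: fix a bipartition of $G-X$, observe that the parity of an $X$-path is determined by the colors of its first and last vertex, and for each pair $\{u,v\}$ and each of the four color combinations run a single min-cut/max-flow computation in $G-X$ between the appropriate neighborhoods, placing the pair into $A$ or $B$ if some sub-case yields more than $\ell$ disjoint paths and otherwise adding the (at most~$\ell$-sized) cut to $H$; the single-vertex case for $C$ is handled by one such computation. Your size accounting $4\ell\binom{|X|}{2}+\ell|X|\le 4\ell|X|^2$ matches the paper's.
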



\begin{algorithm}[t]
\caption{ComputeHittingSet$(G,X,\ell)$\label{algorithm:computehittingset}}
\begin{algorithmic}
\REQUIRE{A graph~$G$ and vertex subset~$X \subseteq V(G)$ such that~$G - X$ is bipartite.}
\ENSURE{Three sets of annotations~$A$,~$B$, and~$C$ as well as a hitting set~$H$.}

\vspace{0.2cm}
\STATE Initialize~$H,A,B,C:=\emptyset$
\STATE Let~$P \cup Q$ be a bipartition of~$G - X$ \COMMENT{Computable by BFS}
\FOREACH{$\{u,v\} \in \binom{X}{2}$}
	\STATE $PP :=$ VertexCut$(G,P,Q;u,v,P,P)$
	\STATE $QQ :=$ VertexCut$(G,P,Q;u,v,Q,Q)$
	\STATE $PQ :=$ VertexCut$(G,P,Q;u,v,P,Q)$
	\STATE $QP :=$ VertexCut$(G,P,Q;u,v,Q,P)$
	\IF[$> \ell$ disjoint even-length~$u-v$~$X$-paths]{$|PQ| > \ell$ or~$|QP| > \ell$}
		\STATE $A := A \cup \{\{u, v\}\}$
	\ELSE[Set~$PQ \cup QP$ intersects all even-length~$u-v$~$X$-paths]
		\STATE $H := H \cup (PQ \cup QP)$
	\ENDIF
	\IF{$|PP| > \ell$ or~$|QQ| > \ell$}
		\STATE $B := B \cup \{\{u, v\}\}$
	\ELSE[Set~$PP \cup QQ$ intersects all odd-length~$u-v$~$X$-paths]
		\STATE $H := H \cup (PP \cup QQ)$
	\ENDIF
\ENDFOR
\FOREACH{$v\in X$}
	\STATE $PQ :=$ VertexCut$(G,P,Q;v,v,P,Q)$
	\IF[$> \ell$ disjoint even-length~$v-v$~$X$-paths]{$|PQ| > \ell$}
		\STATE $C := C \cup \{v\}$
	\ELSE[Set~$PQ$ intersects all even-length~$v-v$~$X$-paths]
		\STATE $H := H \cup PQ$ 
	\ENDIF
\ENDFOR
\RETURN $(A, B, C, H)$
\end{algorithmic}
\end{algorithm}

\begin{algorithm}[t]
\caption{VertexCut$(G,P,Q;u,v,S,T)$}
\begin{algorithmic}
\REQUIRE{A graph~$G$ such that~$G[P\cup Q]$ is bipartite with bipartition~$P\cup Q$, vertices~$u,v\in V(G)\setminus(P\cup Q)$, and sets~$S,T\in\{P,Q\}$.}
\ENSURE{A cut~$Y\subseteq P\cup Q$ separating~$N_G(u) \cap S$ from~$N_G(v) \cap T$ in~$G-Y$.}
\vspace{0.2cm}
	\STATE Let~$G':=G[P\cup Q]$
	\STATE Add a source~$s$ with~$N_{G'}(s) := N_G(u) \cap S$ and a sink~$t$ with~$N_{G'}(t) := N_G(v) \cap T$
	\STATE Compute a minimum-size~$s-t$ vertex-cut~$Y$ in~$G'$ using a flow algorithm
\RETURN $Y$
\end{algorithmic}
\end{algorithm}

Now, let us see how to turn the sets~$A$,~$B$, and~$C$ into annotations, edges, and vertex deletions such that~$H$ is a hitting set for all important~$X$-paths in the resulting annotated instance, i.e.,~$H$ will intersect each important~$X$-path.

\begin{lemma}\label{lemma:makeannotations}
Let~$(G,X,\ell)$ be an instance of \biptwoct and let~$A,B\subseteq \binom{X}{2}$, let~$C\subseteq X$, and let~$H\subseteq V(G) \setminus X$ as given by Lemma~\ref{hittingsetandannotations}. Then in polynomial time one can find an equivalent instance~$(G', X', M, \ell')$ with $X' \subseteq X$ and $\ell' \leq \ell$ of the annotated problem such that~$H$ intersects all important~$X'$-paths in~$G'$.
\end{lemma}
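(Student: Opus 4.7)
I would convert each of the three lists produced by Lemma~\ref{hittingsetandannotations} into a local modification of the instance, each justified by a pigeonhole argument against the deletion budget~$\ell$. Set $X':=X\setminus C$, delete $C$ from $G$, decrement the budget to $\ell':=\ell-|C|$, then insert every edge $\{u,v\}$ with $\{u,v\}\in A$ and $u,v\in X'$ to obtain~$G'$, and set $M:=\{\{u,v\}\in B : u,v\in X'\}$. The intuition is: each $v\in C$ lies on $\ell+1$ vertex-disjoint odd cycles meeting only in~$v$, so any size-$\ell$ solution must contain~$v$; for each $\{u,v\}\in A$ at least one of the $\ell+1$ disjoint even-length $u$--$v$ $X$-paths survives in $G-S$, forcing opposite colors on $u$ and $v$; and for each $\{u,v\}\in B$ a surviving odd-length $X$-path forces equal colors.

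\textbf{Equivalence.} For the forward direction, any solution $S$ with $|S|\le\ell$ must contain all of~$C$, since otherwise an undeleted $v\in C$ still lies on one of its $\ell+1$ disjoint odd cycles that cannot be broken by the remaining $\le\ell$ deletions. Hence $S':=S\setminus C$ has size at most~$\ell'$. Any proper $2$-coloring~$c$ of~$G-S$ is a valid annotated coloring of~$G'-S'$: the graph $G'-S'$ differs from $G-S$ only by the inserted edges $\{u,v\}$ with $u,v\in X'\setminus S$ and $\{u,v\}\in A$, and for each such pair a surviving even-length $X$-path in $G-S$ enforces $c(u)\neq c(v)$; the symmetric argument shows that the monochromatic pairs in~$M$ are respected. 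For the backward direction, an annotated solution~$S'$ with proper $2$-coloring~$c'$ yields the solution $S:=S'\cup C$ of size at most~$\ell$ for $(G,X,\ell)$, since $G-S$ is obtained from~$G'-S'$ by deleting the inserted $A$-edges and edge removal preserves bipartiteness.

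\textbf{Hitting set property.} Because $C\subseteq X$ and every inserted edge lies inside~$X'$, we have $G'-X'=G-X$; consequently an $X'$-path in~$G'$ between $u,v\in X'$ is literally an $X$-path in~$G$ with both endpoints in $X\setminus C$. Let $P'$ be an important $X'$-path in~$G'$ between~$u$ and~$v$. If $P'$ is important by case~(a) of Definition~\ref{importantXPath}, then $u\neq v$ and $\{u,v\}\notin M\supseteq B\cap\binom{X'}{2}$, so $\{u,v\}\notin B$, and the second item of Lemma~\ref{hittingsetandannotations} gives $V(P')\cap H\neq\emptyset$. If $P'$ is important by case~(b) with $u\neq v$, then $\{u,v\}\notin E(G')$ forces $\{u,v\}\notin A$ (all $A$-pairs inside $\binom{X'}{2}$ were inserted), and the same item applies. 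The remaining possibility is an even-length $u$--$u$ $X$-path, an odd cycle through $u\in X\setminus C$, which the third conclusion of Lemma~\ref{hittingsetandannotations} is guaranteed to hit.

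\textbf{Main obstacle.} The argument is essentially bookkeeping and parity counting; the only conceptual point requiring care is that the ``important'' status of paths refers to the \emph{modified} instance~$(G',X',M,\ell')$ while the hitting-set guarantees of Lemma~\ref{hittingsetandannotations} refer to the \emph{original} graph~$G$, and the identity $G'-X'=G-X$ makes the translation between the two transparent.
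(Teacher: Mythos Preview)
Your proposal is correct and follows essentially the same approach as the paper's proof: delete the vertices of~$C$ while decrementing the budget, add edges for the pairs in~$A$, add monochromatic annotations for the pairs in~$B$, and then observe that every important~$X'$-path in the resulting instance corresponds to an~$X$-path in~$G$ of a type that Lemma~\ref{hittingsetandannotations} guarantees~$H$ to hit. Your treatment is in fact somewhat more explicit than the paper's---you spell out both directions of the equivalence and the key identity~$G'-X'=G-X$, whereas the paper mostly states that each transformation ``can be easily seen to be correct.''
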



We will now turn our attention to the relation between the connected components of~$G-X$ and the set~$H$ intersecting all important~$X$-paths. It is obvious that no component of~$(G-X)-H$ contains an important~$X$-path. However, to use the fact that each such path needs to leave the component via a vertex of~$H$ and cross at least one other component before returning to~$X$, we need to restrict the number of neighbors that any such component has in~$H$. This is also the point from which on we need to use that~$G-X$ has bounded treewidth. The following lemma, following along the lines of the protrusion partitioning lemma~\cite[Lemma 2]{BodlaenderFLPST09} of Bodlaender et al., permits us to extend the set~$H$ slightly while decreasing the neighborhood size of the components obtained.

\begin{lemma} \label{protrusionDecomposition} 
Let~$G$ be a graph, let~$\T$ be a tree decomposition of~$G$ of width~$w$, and let~$S\subseteq V(G)$. There is a polynomial-time algorithm that, given~$(G,\T,S)$, computes a superset~$S'\supseteq S$ of size at most~$2(w+1)|S|$ such that for each connected component~$C$ of~$G-S'$ it holds that~$|N_G(C) \cap S'| \leq 2w$.
%
\end{lemma}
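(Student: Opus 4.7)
The plan is to adapt the protrusion-partitioning construction of Bodlaender et al.~\cite{BodlaenderFLPST09}: pick a small set $A$ of nodes in $\T$, take $S'$ to be the union of the bags at nodes of $A$, and then argue that each component of $G-S'$ corresponds to a component of $\T-A$ that is adjacent to only a bounded number of $A$-nodes. As a preprocessing step I would iteratively contract any edge of $\T$ whose two bags are comparable (one contained in the other); this preserves both the tree-decomposition property and the width, but guarantees that on every remaining edge the two incident bags are incomparable and hence intersect in at most $w$ vertices.

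After rooting $\T$ arbitrarily, I would use the fact that for every $v\in S$ the set of bags containing $v$ forms a connected subtree of $\T$ to define $t_v$ as its topmost node. Let $A_1:=\{t_v:v\in S\}$, so $|A_1|\le|S|$, and let $A_2$ consist of all nodes with at least two distinct children whose subtrees each contain a node of $A_1$. A standard rooted-tree argument (each branching merges two $A_1$-strands into one, starting from $|A_1|$ strands) gives $|A_2|\le|A_1|-1$. Setting $A:=A_1\cup A_2$ and $S':=\bigcup_{t\in A}X_t$ yields $|A|\le 2|S|$ and $|S'|\le 2(w+1)|S|$, with $S\subseteq S'$ because $v\in X_{t_v}$ for every $v\in S$.

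It remains to bound $|N_G(C)\cap S'|\le 2w$ for each component $C$ of $G-S'$. Since no vertex of $C$ appears in any bag indexed by $A$, the connectedness of each subtree $T_u$ (for $u\in C$) together with the connectedness of $C$ in $G$ forces all bags containing vertices of $C$ to lie in a single component $D_C$ of $\T-A$. The combinatorial heart of the argument is then the claim that $D_C$ has at most two $A$-neighbors in $\T$: the unique root of $D_C$ contributes at most one ``parent'' neighbor, and if two nodes inside $D_C$ each had a child in $A$, then their lowest common ancestor inside $D_C$ would witness two children-subtrees that each contain an $A_1$-node, placing it in $A_2\subseteq A$ and contradicting its membership in $D_C$. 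Finally, any $x\in N_G(C)\cap S'$ shares a bag with some $u\in C$ (a bag in $D_C$) and also lies in some bag of $A$; the standard tree-decomposition path argument then forces $x$ into the edge-intersection $X_{t^*}\cap X_{t'}$ for some $A$-neighbor $t^*$ of $D_C$ and its $D_C$-side partner $t'$. The preprocessing bounds each such intersection by $w$, and summing over at most two $A$-neighbors yields the claimed $2w$.

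The main obstacle is the sharp constant $2w$ rather than the naive $2(w+1)$: this forces the analysis to account for vertices of $S'$ adjacent to $C$ only through edge-intersections of $\T$, and hence requires the preprocessing step that makes adjacent bags incomparable. A secondary subtlety is the precise definition of $A_1$ as topmost occurrences, which is exactly what enables both the $|A_2|\le|A_1|-1$ counting bound and the at-most-two-$A$-neighbors property for each component $D_C$.
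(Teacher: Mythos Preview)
Your proposal is correct and follows essentially the same route as the paper: mark one bag per vertex of~$S$, close the marked set under lowest common ancestors (the paper does this by an iterative ``activate/deactivate'' procedure, you do it by the explicit set~$A_2$), take~$S'$ as the union of marked bags, and then argue that every component of~$\T-A$ has at most one parent-neighbor and at most one child-neighbor in~$A$.

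The only real point of divergence is how you squeeze out the sharp constant~$2w$ rather than~$2(w+1)$. You preprocess~$\T$ so that adjacent bags are incomparable, which forces every edge-intersection to have size at most~$w$; then each neighbor of~$C$ in~$S'$ must lie in one of the at most two edge-intersections where~$D_C$ meets~$A$. The paper instead argues directly, without preprocessing: for each marked bag~$B$ adjacent to~$\T_C$, all $S'$-neighbors of~$C$ lying in~$B$ must also lie in the unique bag~$B'$ of~$\T_C$ nearest to~$B$ that contains a vertex of~$C$, and since~$B'$ contains at least one vertex of~$C$ (which is not in~$S'$) it can contribute at most~$w$ such neighbors. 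Both tricks are short and yield the same bound; yours is perhaps slightly cleaner to state, while the paper's avoids modifying the decomposition.
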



The following lemma bounds the number of components of~$(G - X) - H$, regardless of the structure of the set~$H$; similar but simpler than Lemma~\ref{hittingsetandannotations}.

\begin{lemma}\label{lemma:numberofcomponents}
Let~$(G,X,M,\ell)$ be an instance of the annotated problem and let~$H$ be a set of vertices of~$G$. By deleting connected components of~$(G-X)-H$ one can in polynomial time create an equivalent instance~$(G',X,M,\ell)$ such that~$(G'-X)-H$ has at most~$2\cdot(\ell+1)\cdot(|X|+|H|)^2$ connected components.
\end{lemma}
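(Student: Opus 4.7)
The plan is to classify each connected component $C$ of $(G-X)-H$ by a combinatorial signature capturing the colouring constraints it imposes, then keep only $\ell+1$ components per signature element and delete the rest. Since $G-X$ is bipartite, every such $C$ has a bipartition $(C^1,C^2)$, and I will say that $C$ \emph{provides} the triple $(\{u,v\},p)\in\binom{X\cup H}{\leq 2}\times\{\mathrm{odd},\mathrm{even}\}$ if there is a simple path $v_1,\ldots,v_r$ in $C$ together with two distinct edges $\{u,v_1\},\{v_r,v\}\in E(G)$ such that $r$ has parity $p$ (the $(X\cup H)$-analogue of Definition~\ref{importantXPath}). Which triples $C$ provides is determined entirely by whether each $w\in X\cup H$ has neighbours in $C^1$, $C^2$, both, or neither, so the signature $\sigma(C)$ can be computed in polynomial time by a BFS of $C$. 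I process the components in any fixed order and mark $C$ as \emph{kept} whenever at least one triple in $\sigma(C)$ still has fewer than $\ell+1$ previously-kept providers; otherwise $C$ is deleted. Since there are at most $\binom{|X|+|H|+1}{2}$ pairs $\{u,v\}$ (allowing $u=v$) and two parities, the number of kept components is at most $2(\ell+1)\binom{|X|+|H|+1}{2}\leq 2(\ell+1)(|X|+|H|)^2$ (the boundary case $|X|+|H|=0$ is trivial, as then every component has empty signature and is deleted).

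The easy direction of equivalence is immediate: a solution $(S,c)$ of $(G,X,M,\ell)$ restricts to $S':=S\cap V(G')$ with colouring $c|_{V(G')\setminus S'}$, which is a proper $2$-colouring of the induced subgraph $G'-S'$ respecting all annotations. For the converse, given a solution $(S',c')$ of $(G',X,M,\ell)$, I will set $S:=S'$ and extend $c'$ to each deleted component $C_{\mathrm{del}}$ one at a time using its bipartition. The heart of the argument is the \emph{twin argument}: for every triple in $\sigma(C_{\mathrm{del}})$ there are $\ell+1$ kept providers, and since $|S'|\leq\ell$ at least one of these remains entirely intact in $G'-S'$. From this I derive two consequences. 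First, any vertex $w\in X\cup H$ whose neighbours in $C_{\mathrm{del}}$ span both sides of $(C^1,C^2)$ must lie in $S'$, because otherwise $w$ together with an intact provider of $(\{w,w\},\mathrm{even})$ would form an odd cycle in the bipartite graph $G'-S'$. Second, for any two remaining mono neighbours $w_1,w_2\notin S'$ of $C_{\mathrm{del}}$, an intact provider of whichever parity $C_{\mathrm{del}}$ offers between them forces, via the proper colouring $c'$, exactly the relationship between $c'(w_1)$ and $c'(w_2)$ that matches the bipartite constraint imposed by $C_{\mathrm{del}}$. Hence the constraints on $C_{\mathrm{del}}$ by its live neighbours are mutually consistent, and I colour side $C^i$ with the colour $\lnot c'(w)$ for any mono $w$ whose $C_{\mathrm{del}}$-neighbours lie in $C^i$ (choosing freely if no such constraint exists).

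The main obstacle I anticipate is handling the case where $C_{\mathrm{del}}$ provides both parities for a single pair $\{w_1,w_2\}$, which occurs precisely when at least one of $w_1,w_2$ is ``bi'' with respect to $C_{\mathrm{del}}$; the first consequence forces the bi vertex into $S'$, eliminating the would-be contradictory constraint. Self-pairs $u=v$ with even parity correspond exactly to odd cycles through $u$ and are handled by the twin argument verbatim. The counting, the polynomial running time and the easy direction are routine, so the technical heart of the proof is verifying these two consequences and the global consistency of the resulting $2$-colouring.
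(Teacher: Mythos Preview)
Your proposal is correct and follows essentially the same approach as the paper: mark at most~$\ell+1$ components per (pair, parity) triple over~$X\cup H$, delete the unmarked ones, and argue equivalence via the ``twin'' observation that any constraint imposed by a deleted component is witnessed by an intact kept provider in~$G'-S'$. The only cosmetic difference is that you extend the colouring directly onto each deleted component using its bipartition and an explicit mono/bi case analysis, whereas the paper keeps~$c'$ only on~$X\cup H$ and invokes the extension Lemma~\ref{lemma:twocloringextension} as a black box to colour everything at once, deriving a contradiction from any obstructing path; both arguments unwind to the same parity check.
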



With the next lemma, we prepare the ground for applying the combinatorial bounds on the number of cut characteristics in labeled graphs. It formalizes and proves the fact that we may freely modify any given odd cycle transversal by replacing its intersection with a connected component with a separator of the same cut characteristic. It is crucial that all important paths must intersect the hitting set~$H$ and that each component is adjacent to only few vertices of~$H$; the hitting set will correspond to terminals of certain labeled graphs, whose labels express adjacency to~$X$.

\begin{lemma}[Separator replacement lemma]\label{lemma:separatorreplacement}
Let~$(G,X,M,\ell)$ be an instance of the annotated problem. Let~$H \subseteq V(G) \setminus X$ be a set of vertices that intersects all important~$X$-paths of the instance. Let~$R$ be a solution to the problem, i.e., an odd cycle transversal such that~$G-R$ has a proper~$2$-coloring respecting the annotations. Consider a connected component~$C$ of the graph~$(G - X) - H$ and consider the terminal vertices~$N_G(C) \setminus X$. Define~$D$ as the subgraph of~$G$ induced by the set~$N_G[C] \setminus X$. Let~$T = t_1, \ldots, t_n$ be a sequence containing the terminals~$N_G(C) \setminus X$ in an arbitrary order. We define a labeling for the graph~$D$ as follows. The set of labels is the set of vertices in the modulator~$X$ augmented with one label per terminal in~$T$, and the labeling function~$f$ is defined as follows for~$v \in V(D)$:
\begin{equation*}
f(v) := 
\begin{cases}
N_G(v) \cap X & \mbox{If~$v \not \in T$.} \\
(N_G(v) \cap X) \cup \{v\} & \mbox{If~$v \in T$.}
\end{cases}
\end{equation*}
Let~$S := V(C) \cap R$ be the vertices from~$C$ chosen in the solution~$R$. If~$S' \subseteq V(C)$ is a subset such that~$S$ and~$S'$ have the same cut characteristic in the labeled graph~$(D, X \cup T, f)$ with respect to the terminals~$T$, then~$R' := (R \setminus S) \cup S'$ is also a valid solution, or more formally: if~$\K(S, T) = \K(S', T)$ with respect to the labeled graph~$(D, X \cup T, f)$ then~$G - R'$ has a proper~$2$-coloring respecting the annotations.
\end{lemma}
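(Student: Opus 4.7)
Since~$R$ is a valid solution, fix a proper $2$-coloring~$c$ of $G-R$ that respects all annotations. Because $R\setminus V(C) = R'\setminus V(C)$, the restriction of~$c$ to $V(G)\setminus V(C)\setminus R'$ is a proper partial $2$-coloring of $G-R'$, and the still-uncolored set $V(C)\setminus S'$ is bipartite as a subgraph of the bipartite component~$C$ of $(G-X)-H$. Apply \lemmaref{lemma:twocloringextension} to $G-R'$ with this partial coloring: either the extension succeeds---in which case the full coloring agrees with~$c$ on~$X$, and since the annotations (edges of $G[X]$ and monochromatic pairs of~$M$) only constrain $X$-vertices, they are respected and $R'$ is a valid solution---or the lemma produces a simple path $P = p, v_1, \ldots, v_k, q$ in $G-R'$ with internal vertices in $V(C)\setminus S'$ and parity~$k$ contradicting~$c(p), c(q)$. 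The rest of the plan is to derive a contradiction from such a~$P$.

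Since $V(C)$ is a component of $(G-X)-H$, every external neighbour of $V(C)$ lies in $X\cup H$, and the $H$-neighbours of $V(C)$ are by definition the terminals~$T$; hence $p,q \in X\cup T$. \textbf{Case~1 ($p,q\in X$):} then $P$ is an $X$-path of length~$k$, and the bad-parity condition combined with~$c$ respecting~$M$ and $E(G[X])$ forces~$P$ to be \emph{important} in the sense of \defref{importantXPath}: in the odd case $p\neq q$ (else $c(p)=c(q)$ automatically) and $\{p,q\}\notin M$, while in the even case $\{p,q\}\notin E(G)$ (using that a simple graph has no loops). But the internal vertices of~$P$ lie in $V(C)\setminus S' \subseteq V(C)$, which is disjoint from~$H$, contradicting the hypothesis that~$H$ intersects every important $X$-path.

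\textbf{Case~2 (WLOG $p\in T$):} then $p\in T\setminus R$, and~$P$ certifies $q\in\L(p,S')$---as the $X$-label of~$v_k$ if $q\in X$, or as the self-label of~$q$ if $q\in T$. The hypothesis $\K(S,T)=\K(S',T)$ gives $q\in\L(p,S)$, yielding a witness $v^\ast\in V(D)\setminus S$ reachable from~$p$ in $D-S$ with either $q\in N_G(v^\ast)\cap X$ or $v^\ast=q$. The key observation is that $D=G[N_G[C]\setminus X]$ is bipartite (as a subgraph of the bipartite graph $G-X$), so the parity of any $p$-to-witness path in $D$ or its subgraphs is determined by the bipartition classes of its endpoints. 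Hence if $v^\ast$ lies in the same bipartition class as~$v_k$, the transported $D-S$ walk inherits the bad parity of~$P$ and, after shortcutting, yields a bad $p$-$q$ path in $G-R$ with internal vertices in $(V(C)\setminus S)\cup(T\setminus R)$, contradicting that~$c$ is a valid $2$-coloring of $G-R$; if $v^\ast$ lies in a different class, concatenating the $D-S'$ path to~$v_k$ with the $D-S$ path to~$v^\ast$ via~$q$ produces an odd closed walk through~$q$, from which one aims to extract an important $X$-path whose internal vertices lie in $V(C)\setminus(S\cup S')$ and so miss~$H$, again contradicting the hitting-set property.

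\textbf{Expected main obstacle.} The delicate point is making the ``different-class'' subcase of Case~2 rigorous, and more generally handling transported walks that visit terminals of $T\cap R$ (absent from $G-R$) or pass through terminals of~$T$ (which lie in~$H$) in the odd cycle extracted from the concatenated walk. The intended fix is to iterate the cut-characteristic equality $\L(t,S)=\L(t,S')$ at every forbidden or $T$-visiting step, using the bipartiteness of~$D$ to either reroute the walk so that its internal part stays inside $(V(C)\setminus S)\cup(T\setminus R)$ (closing the argument with a bad path in $G-R$) or to expose a localized important $X$-path with internal vertices in $V(C)\setminus(S\cup S')$ (contradicting that~$H$ hits every important $X$-path).
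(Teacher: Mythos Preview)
Your overall strategy—restrict the valid coloring $c$ to $V(G)\setminus V(C)$, invoke \lemmaref{lemma:twocloringextension}, and analyse a bad path $P$ with endpoints in $X\cup T$—is exactly the paper's approach, and your Case~1 ($p,q\in X$) matches the paper's case~(i).

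The gap is in Case~2. Your split by ``$v^\ast$ in the same/different bipartition class as $v_k$'' does not line up with the parity dichotomy you want. When $q\in T$ you have $v^\ast=q$, and since $v_k$ is adjacent to $q$ in the bipartite graph $D$, the vertices $v^\ast$ and $v_k$ are \emph{always} in different classes; yet both $P$ and any $p$--$q$ path $P'$ in $D-S$ are $p$--$q$ paths in the bipartite graph $D$ and therefore have the \emph{same} parity. So your ``different class $\Rightarrow$ odd closed walk through $q$'' conclusion fails precisely when $q\in T$. The paper avoids this by splitting Case~2 on the location of $q$, not on the class of $v^\ast$:
\begin{itemize}
\item $p,q\in T$ (paper's case (ii)): any $p$--$q$ path in $D-S$ automatically has the same bad parity as $P$, and the contradiction comes directly from $c$ once this path is viewed inside $G-R$.
\item $p\in T$, $q\in X$ (paper's case (iii)): here the $D$--$S$ path $P'$ (extended to $q$) is properly coloured by $c$, so $P$ and $P'$ have \emph{different} parities, hence $P\cup P'$ contains an odd cycle which, since $G-X$ is bipartite, must pass through the sole $X$-vertex $q$. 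The paper's key move—which your plan is missing—is to take the two neighbours $x,y$ of $q$ on that cycle (both in $V(C)$), use the connectedness of the \emph{entire} component $C$ to get a simple $x$--$y$ path $\hat P\subseteq V(C)$, and observe by bipartiteness of $G-X$ that $\hat P$ has even length; then $\hat P$ is an important $q$--$q$ $X$-path lying wholly in $V(C)$, hence disjoint from $H$. Your attempt to place $\hat P$ inside $V(C)\setminus(S\cup S')$ is both unnecessary and potentially impossible, since $C\setminus(S\cup S')$ need not be connected.
\end{itemize}

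On your ``main obstacle'' (the $D-S$ path possibly visiting $T\cap R$): you are right that this is the delicate point, and the paper's write-up does not spell out an iterative rerouting argument either—it simply asserts that the path $P'$ lies in $G-R$. So your instinct is sound, but your proposed fix of ``iterating $\L(t,S)=\L(t,S')$'' is not worked out; if you pursue it, the target should be to land in one of the two clean contradictions above, not a path confined to $V(C)\setminus(S\cup S')$.
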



Now, we will use the Separator Replacement Lemma and the combinatorial bound on the number of cut characteristics (\thmref{simpleCutCharacteristicBound}) to limit the choice of vertices that may be deleted from the connected components. The idea is that it suffices to have one separator for each cut characteristic; vertices outside these separators need not be considered for deletion. To this end we introduce a restricted version of the annotated odd cycle transversal problem. As an additional restriction a set~$Z$ of vertices is provided, and the task is to find a (small) odd cycle transversal that is a subset of~$Z$.

\introduceparameterizedproblem
{Restricted Annotated (BIP~$\boldsymbol{\cap}$~\rm{$\boldsymbol{\gtw{w}}$})\bf{-OCT}}
{A graph~$G$, a set~$X\subseteq V(G)$ such that~$G-X \in (\bipartite \cap \gtw{w})$, a set~$Z \subseteq V(G)$ of deletable vertices, a set~$M\subseteq\binom{X}{2}$, and an integer~$\ell$.}
{$k:=|X|$.}
{Is there a set~$S \subseteq Z$ of size at most~$\ell$ such that~$G - S$ is bipartite, and there is a proper~$2$-coloring~$c$ of~$G - S$ such that~$c(p)=c(q)$ for all~$\{p,q\}\in M$?}

\begin{lemma}\label{lemma:restrictedproblem}
Let~$(G,X,M,\ell)$ be an instance of \annotatedbiptwoct and let~$H\subseteq V(G)\setminus X$ be a set of vertices such that:
\begin{enumerate}
\item~$H$ intersects all important~$X$-paths of~$G$,
\item~$(G-X)-H$ has at most~$\alpha$ connected components,
\item and each connected component of~$(G-X)-H$ has at most~$\delta$ neighbors in~$H$.
\end{enumerate}
For each fixed value of~$\delta$ it is possible to compute in polynomial time an equivalent instance~$(G,X,M,\ell,Z)$ of \restrictedbiptwoct where~$|Z| \leq |X| + |H| + \alpha \cdot \delta \cdot \kappa(\delta, \delta - 1, |X| + \delta)$, with~$\kappa$ as defined in \thmref{simpleCutCharacteristicBound}.
\end{lemma}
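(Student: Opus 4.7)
The plan is to construct the deletable set $Z$ greedily, starting with $Z := X \cup H$ and then, for every connected component $C$ of $(G-X)-H$, inserting into $Z$ one representative separator for each cut characteristic realized by subsets of $V(C)$ of size at most $\delta - 1$. Concretely, for each such $C$ I would set $T := N_G(C) \setminus X$ and observe $T \subseteq H$ with $|T| \leq \delta$, since $C$ is a component of $(G-X)-H$ and any of its neighbors outside $X$ must sit in $H$. Form the labeled graph $(D, X \cup T, f)$ of \lemmaref{lemma:separatorreplacement} with $D := G[N_G[C] \setminus X]$; because $\delta$ is a fixed constant, one can enumerate every $S \subseteq V(C)$ with $|S| \leq \delta - 1$ in polynomial time, compute $\K(S, T)$, and retain one witness per distinct characteristic. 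By \thmref{simpleCutCharacteristicBound} the number of distinct characteristics is at most $\kappa(\delta, \delta - 1, |X| + \delta)$, and each witness contributes fewer than $\delta$ vertices to $Z$, giving the claimed bound $|Z| \leq |X| + |H| + \alpha \cdot \delta \cdot \kappa(\delta, \delta - 1, |X| + \delta)$ after summing over the $\alpha$ components.

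For the equivalence, the backward direction is immediate because $Z \subseteq V(G)$. For the forward direction, given a solution $R$ with $|R| \leq \ell$, I would build $R' \subseteq Z$ component-by-component: keep $R \cap (X \cup H) \subseteq Z$ unchanged and, for each component $C$, treat $S_C := V(C) \cap R$ separately. When $|S_C| \leq \delta - 1$, a representative $S'_C \subseteq V(C) \cap Z$ with the same cut characteristic is available by construction, and \lemmaref{lemma:separatorreplacement} lets me substitute $S_C$ by $S'_C$ without breaking validity or increasing $|R|$.

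The main obstacle is the remaining case $|S_C| \geq \delta$, which is outside the scope of the Separator Replacement Lemma as invoked with $m = \delta - 1$. My plan here is to swap $S_C$ instead for the set $N_G(C) \cap H \subseteq H \subseteq Z$, which has size at most $\delta \leq |S_C|$ so the total size does not grow. To certify validity I would argue that in $G - R'$ the component $C$ is fully intact yet detached from every vertex of $H$, leaving $X$ as its sole external boundary; and that by hypothesis (1), combined with the fact that no $X$-path internal to $C$ can meet $H$, every $X$-path through $C$ is non-important and therefore its parity is already dictated by the annotation on its endpoints. Applying \lemmaref{lemma:twocloringextension} with pre-colored set $X \setminus R'$ and the restriction of the coloring of $G - R$ then extends across $C$ without obstruction, and pasting this extension with the original coloring on the untouched part of the graph yields a proper $2$-coloring of $G - R'$ respecting all annotations. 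Repeating this for every affected component delivers $R' \subseteq Z$ with $|R'| \leq |R| \leq \ell$, completing the equivalence.
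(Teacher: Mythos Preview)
Your proposal is essentially correct and follows the paper's own argument: initialize $Z := X \cup H$, add bounded-many representative separators per component using \thmref{simpleCutCharacteristicBound}, and for the equivalence treat each component by either invoking \lemmaref{lemma:separatorreplacement} (small intersection) or by swapping for $T = N_G(C)\cap H$ (large intersection).

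There is one small but genuine omission. When you ``retain one witness per distinct characteristic'' you must retain a \emph{minimum-size} witness, not an arbitrary one. Otherwise your later sentence ``\lemmaref{lemma:separatorreplacement} lets me substitute $S_C$ by $S'_C$ without \ldots\ increasing $|R|$'' is unjustified: the Separator Replacement Lemma guarantees validity of $R' = (R\setminus S_C)\cup S'_C$, but says nothing about $|S'_C|\le |S_C|$. The paper explicitly picks ``one separator of minimum size as the representative for each class'' precisely to secure this inequality. With that single word added, your argument is complete.

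A minor stylistic difference, harmless either way: you enumerate separators only in $V(C)$, whereas the paper enumerates in $V(C)\cup T$ and then argues that $S\cap T = S'\cap T$ is forced by the characteristic (each terminal carries its own label). Your version is slightly cleaner, since \lemmaref{lemma:separatorreplacement} is stated for $S,S'\subseteq V(C)$ anyway, and your case split on $|S_C|$ rather than on $|R\cap(V(C)\cup T)|$ simply routes a few more situations through the replacement lemma instead of the $T$-swap; both routes are valid.
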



This final lemma provides the reduction from the restricted annotated problem back to \biptwoct. The number of vertices in~$Z$ in the restricted instance determines the size of the vertex set in the new (equivalent) instance.

\begin{lemma}\label{lemma:backtransformation}
An instance~$(G,X,M,\ell,Z)$ of \restrictedbiptwoct can be transformed in polynomial time into an equivalent instance $(G',X',\ell)$ of \biptwoct with~$|V(G')|$ bounded by~$|Z|+(\ell+1)\cdot|Z|^2$.
\end{lemma}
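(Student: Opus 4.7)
I plan to build $G'$ on vertex set $Z\cup U$, where $U$ is a set of fresh enforcer vertices whose only neighbors will lie in $Z$, and to take $X':=Z$. Because $G'[U]$ will end up as a disjoint union of isolated vertices and isolated edges, this choice immediately gives $G'-X'\in\bipartite\cap\gtw{w}$. Correctness amounts to translating the annotations in $M$ together with the constraints that $V(G)\setminus Z$ imposes on the coloring of $Z$ into local gadgets, while discarding all of $V(G)\setminus Z$ in the process.

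\textbf{Gadgets.} I start with $G'[Z]:=G[Z]$ and process three sources of constraints. (i)~For each $\{p,q\}\in M$ I add $\ell+1$ fresh common neighbors of $p$ and $q$; since $|S'|\leq\ell$, at least one survives any deletion and enforces $c(p)=c(q)$. (ii)~Because $X\subseteq Z$, every component $C$ of $G-Z$ is bipartite, so I fix a bipartition $(P_C,Q_C)$ per component. Whenever a vertex $v\in Z\cap N_G(C)$ has neighbors in both $P_C$ and $Q_C$, the vertex $v$ closes an odd cycle with $C$ inside $G$ and therefore has to be deleted in every valid solution of the restricted annotated instance; I enforce the same in $G'$ by attaching $\ell+1$ vertex-disjoint triangles through $v$ (each with two fresh $U$-vertices), which forces $v\in S'$ by a budget argument. (iii)~For each pair $p,q\in Z\cap N_G(C)$ in which each endpoint attaches to only a single class of $(P_C,Q_C)$, I encode the induced parity constraint: $\ell+1$ fresh common neighbors if both attach to the same class (monochromatic), and the edge $\{p,q\}$ added to $G'[Z]$ if they attach to opposite classes (bichromatic). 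A pair that accumulates both kinds of gadgets---either from distinct components or from combining $M$ with a component---ends up with $\ell+1$ triangles through the pair, forcing one of its endpoints into $S'$, which is consistent with the fact that the combined constraints cannot otherwise be satisfied.

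\textbf{Correctness, size, and main difficulty.} For the forward direction, any solution $S\subseteq Z$ with a compliant 2-coloring $c$ of $G-S$ gives $S':=S$, and $c$ extends to $G'-S'$ by coloring each surviving common neighbor opposite to its (now monochromatic) anchors and each triangle auxiliary opposite to its partner in $U$---well-defined since every triangle's $Z$-endpoint is already in $S$ by~(ii). For the backward direction, I take $S:=S'\cap Z$ for any $S'\subseteq V(G')$ of size $\leq\ell$ with $G'-S'$ bipartite and use the $(\ell+1)$-redundancy of all gadgets to conclude that every forced-delete vertex sits in $S$ and every monochromatic/bichromatic annotation is respected by the restriction of the 2-coloring $c'$ of $G'-S'$ to $Z\setminus S$; for each component $C$ of $G-Z$ I then choose its bipartition to match the (consistent, by our gadgets) colors of the surviving $Z$-connectors, producing a 2-coloring of $G-S$ respecting $M$. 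The final vertex count is $|Z|+(\ell+1)\binom{|Z|}{2}+2(\ell+1)|Z|\leq|Z|+(\ell+1)|Z|^2$ for $|Z|\geq 3$ (trivial otherwise), and the construction is clearly polynomial. The main difficulty is the case analysis underlying steps~(ii)--(iii): one has to verify that the interplay of $M$, parity constraints coming from multiple components, and bipartite-ambiguous attachments is fully captured by these local gadgets, so that the forward and backward directions of the equivalence go through simultaneously.
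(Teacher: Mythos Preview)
Your approach is essentially the paper's: both set $X':=Z$, start from $G'[Z]:=G[Z]$, and encode the parity constraints that the components of $G-Z$ impose on pairs from $Z$ via added edges (bichromatic) or $\ell+1$ fresh shared neighbors (monochromatic), so that $G'-X'$ is trivially in $\bipartite\cap\gtw{w}$. Two small implementation differences: the paper first absorbs $M$ into the restricted instance by adding one non-deletable common neighbor per pair (which then simply surfaces as an odd $Z$-path in the subsequent analysis), whereas you place $\ell+1$ shared neighbors for $M$-pairs directly in $G'$; and you add explicit $(\ell+1)$-triangle gadgets for vertices $v\in Z$ that see both sides of a component, while the paper treats this as the $p=q$ case of the $Z$-path analysis.

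One caveat: the lemma as stated does not promise $X\subseteq Z$, so your sentence ``Because $X\subseteq Z$, every component $C$ of $G-Z$ is bipartite'' is not justified in general, and $M\subseteq\binom{X}{2}$ need not sit inside $\binom{Z}{2}$. The paper handles this by first testing whether $G-Z$ is bipartite (returning a dummy \no instance otherwise) and by encoding $M$ inside $G$ before the $Z$-path step, so that the construction never needs $X\subseteq Z$. You should add the same bipartiteness check, and either mimic the paper's treatment of $M$ or note explicitly that $X\subseteq Z$ holds at the unique call site (\lemmaref{lemma:restrictedproblem}).
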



Now we can wrap up our kernelization with the following theorem. The kernelization follows the lemmas and motivation given so far.

\begin{theorem} \label{theorem:polyKernelProof}
For each fixed integer~$w \geq 1$ the problem \biptwoct admits a polynomial kernel with~$\BigO(k^{\BigO(w^3)})$ vertices.
\end{theorem}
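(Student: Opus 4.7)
The plan is to chain together the lemmas of this section into a single pipeline, tracking the size blow-up at each stage. First handle the trivial case $\ell \geq |X|$ by outputting a constant-size \yes-instance (deleting $X$ already solves it). Assume from now on that $\ell < k = |X|$, so $\ell = O(k)$.

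\textbf{Pipeline.} Starting from the instance $(G,X,\ell)$, I run \emph{ComputeHittingSet}$(G,X,\ell)$ and feed its output $(A,B,C,H)$ into Lemma~\ref{lemma:makeannotations}, producing an equivalent annotated instance $(G_1,X_1,M_1,\ell_1)$ with $X_1 \subseteq X$, $\ell_1 \leq \ell$, and a set $H \subseteq V(G_1) \setminus X_1$ of size $|H| \leq 4\ell\,k^2 = \BigO(k^3)$ that intersects every important $X_1$-path. Since $G_1 - X_1$ coincides with the induced subgraph $G - X \in \bipartite \cap \gtw{w}$, I compute a width-$w$ tree decomposition in polynomial time (possible because $w$ is fixed) and invoke Lemma~\ref{protrusionDecomposition} with $S=H$ to obtain $H' \supseteq H$ of size $|H'| \leq 2(w+1)|H| = \BigO(wk^3)$ such that every connected component of $(G_1 - X_1) - H'$ has at most $\delta := 2w$ neighbours in $H'$. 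Because $H \subseteq H'$, the larger set $H'$ still intersects every important $X_1$-path. Then Lemma~\ref{lemma:numberofcomponents} trims the number of components of $(G_1-X_1)-H'$ down to $\alpha \leq 2(\ell+1)(|X_1|+|H'|)^2 = \BigO(w^2 k^7)$.

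\textbf{Cut-characteristic bound.} With these parameters in hand, Lemma~\ref{lemma:restrictedproblem} applies and yields an equivalent instance of \restrictedbiptwoct whose deletable set satisfies
\[
|Z| \leq |X_1| + |H'| + \alpha \cdot \delta \cdot \kappa(\delta,\,\delta-1,\,|X_1|+\delta).
\]
Substituting $n = \delta = 2w$, $m = \delta-1 = 2w-1$, and $r = |X_1|+\delta \leq k + 2w$ into Theorem~\ref{simpleCutCharacteristicBound}, and absorbing constants that depend only on $w$, the dominant factor is
\[
r^{\,nm(m+3)/2} \;=\; (k+2w)^{\,2w(2w-1)(w+1)} \;=\; k^{\BigO(w^3)},
\]
so $|Z| = \BigO(k^{\BigO(w^3)})$. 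Finally, Lemma~\ref{lemma:backtransformation} transforms the restricted instance back to a \biptwoct instance with at most $|Z| + (\ell+1)|Z|^2 = \BigO(k \cdot |Z|^2) = \BigO(k^{\BigO(w^3)})$ vertices. Each step runs in polynomial time for fixed $w$, so the composition is a polynomial kernelization of the claimed size.

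\textbf{Main obstacle.} The conceptual work has already been absorbed into the earlier lemmas, so what remains is mainly bookkeeping: checking that every invariant required by a lemma survives the transformations applied before it. The delicate points are (i) that $H'$ still hits every important $X_1$-path after the protrusion extension and after deleting components in Lemma~\ref{lemma:numberofcomponents} — both monotone operations that can only shrink the set of important $X_1$-paths while preserving $H \subseteq H'$ — and (ii) the exponent arithmetic when plugging $\delta = 2w$ into $nm(m+3)/2$, which yields $2w(2w-1)(w+1) = \BigO(w^3)$ and is where the cubic dependence on $w$ in the exponent of $k$ originates.
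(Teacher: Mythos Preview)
Your proposal is correct and follows essentially the same pipeline as the paper's proof: handle the trivial case $\ell \geq k$, then chain Lemmas~\ref{hittingsetandannotations}, \ref{lemma:makeannotations}, \ref{protrusionDecomposition}, \ref{lemma:numberofcomponents}, \ref{lemma:restrictedproblem}, and \ref{lemma:backtransformation} in that order, with $\delta = 2w$ and the cut-characteristic bound from Theorem~\ref{simpleCutCharacteristicBound} supplying the $k^{\BigO(w^3)}$ exponent. Your bookkeeping remarks about $H'$ remaining a hitting set through the protrusion extension and component deletion match the paper's justifications exactly.
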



\subsubsection{Approximating a minimum-size deletion set.} 
For our kernelization we have assumed that a deletion set~$X$ to the class~$(\bipartite \cap \gtw{w})$ is given. If~$G$ is a graph for which the minimum size of such a deletion set is~$\opt$, then we can compute in polynomial time a deletion set of size~$\BigO(\opt \cdot \log^{3/2} \opt)$ as follows. Observe that~$\gtw{w}$ is characterized by a finite set of forbidden minors, and excludes at least one planar graph as a minor. We can use the recent approximation algorithm by Fomin et al.~\cite{FominLMPS11} to approximate a deletion set~$S_{\tw(w)}$ to a graph of treewidth at most~$w$. Then we may find a minimum-size odd cycle transversal~$S_{\oct}$ in the bounded-treewidth graph~$G - S_{\tw(w)}$ which can be computed in polynomial time using Courcelle's theorem, since~$w$ is a constant. The union~$X := S_{\tw(w)} \cup S_{\oct}$ is then a suitable deletion set, which we can use to run our kernelization. This procedure is formalized in the following lemma.

\begin{lemma}
Let~$w \geq 1$ be a fixed integer. There is a polynomial-time algorithm which gets as input a graph~$G$, and computes a set~$X \subseteq V(G)$ such that~$G - X \in \bipartite \cap \gtw{w}$  with~$|X| \in \BigO(\opt \cdot \log^{3/2} \opt)$, where~$\opt$ is the minimum size of such a deletion set.
\end{lemma}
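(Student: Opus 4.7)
The plan is to follow the outline sketched just before the statement: apply the approximation algorithm of Fomin et al.\ to remove a set that brings the graph into $\gtw{w}$, then use Courcelle's theorem to optimally solve \oct on the remaining bounded-treewidth graph, and take the union of the two sets. First, I run the algorithm of~\cite{FominLMPS11} on $G$ to obtain a vertex set $S_{\tw(w)}$ such that $G - S_{\tw(w)} \in \gtw{w}$ with $|S_{\tw(w)}| \in \BigO(\tau \log^{3/2} \tau)$, where $\tau$ denotes the minimum vertex deletion distance from $G$ to $\gtw{w}$. Because every deletion set to $\bipartite \cap \gtw{w}$ is in particular a deletion set to $\gtw{w}$, we have $\tau \leq \opt$, and therefore $|S_{\tw(w)}| \in \BigO(\opt \log^{3/2} \opt)$.

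Next, I compute an optimum odd cycle transversal $S_{\oct}$ of the graph $G' := G - S_{\tw(w)}$. Since $G'$ has treewidth at most the constant $w$, one can first obtain a width-$w$ tree decomposition (e.g.~via Bodlaender's algorithm) and then solve \oct exactly in polynomial time via Courcelle's theorem, as \oct is expressible in monadic second-order logic. To upper bound $|S_{\oct}|$, let $X^{*}$ be an optimal $(\bipartite \cap \gtw{w})$-deletion set of $G$ with $|X^{*}| = \opt$. I claim that $X^{*} \setminus S_{\tw(w)}$ is an odd cycle transversal of $G'$: indeed,
\[
G' - (X^{*} \setminus S_{\tw(w)}) \;=\; G - (S_{\tw(w)} \cup X^{*})
\]
is a subgraph of the bipartite graph $G - X^{*}$, and any subgraph of a bipartite graph is bipartite. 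Optimality of $S_{\oct}$ then yields $|S_{\oct}| \leq |X^{*} \setminus S_{\tw(w)}| \leq \opt$.

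Finally, I output $X := S_{\tw(w)} \cup S_{\oct}$. By construction $G - X = G' - S_{\oct}$ is bipartite, and since $G'$ has treewidth at most $w$ and treewidth is non-increasing under vertex deletion, $G - X \in \bipartite \cap \gtw{w}$. Summing the two size bounds gives
\[
|X| \leq |S_{\tw(w)}| + |S_{\oct}| \in \BigO(\opt \log^{3/2} \opt) + \opt = \BigO(\opt \log^{3/2} \opt),
\]
which is the claimed guarantee.

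The construction itself is essentially immediate once the correct black boxes are assembled; the only subtlety is that the approximation ratio of Fomin et al.\ is expressed in terms of the optimum $\gtw{w}$-deletion set rather than $\opt$, and one must observe the inclusion $\bipartite \cap \gtw{w} \subseteq \gtw{w}$ to transfer the bound to $\opt$. No step is combinatorially deep: the hard content is imported from the cited approximation and from Courcelle's meta-theorem, and the glue is the trivial observation that removing additional vertices from a bipartite graph preserves bipartiteness.
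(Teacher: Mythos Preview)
Your proof is correct and follows essentially the same approach as the paper's own proof: apply the Fomin et al.\ approximation to obtain a treewidth-$w$ deletion set (using that $\opt_{\tw(w)} \leq \opt$), then compute an optimal odd cycle transversal on the remaining bounded-treewidth graph via Courcelle's theorem, and output the union. Your justification that $|S_{\oct}| \leq \opt$ via the explicit witness $X^{*} \setminus S_{\tw(w)}$ is in fact slightly more detailed than the paper's one-line remark that an odd cycle transversal of a subgraph cannot exceed a treewidth-$w$ odd cycle transversal of the whole graph.
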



\section{Lower bounds for kernelization}\label{section:lowerbounds}

In this section we state the lower bound results for various structural kernelizations of \oct. All results use the recent notion of cross-composition introduced by Bodlaender et al.~\cite{BodlaenderJK11}. It extends the notion of a composition, showing that a reduction of the OR of any NP-hard problem into an instance of the target parameterized problem with \emph{small} parameter value excludes polynomial kernels, assuming that~\ncontainment.


\begin{theorem}\label{theorem:lowerbounds}
Assuming \ncontainment the following parameterized problems do not admit polynomial kernels:
\begin{itemize}
	\item~$(\outerplanar)$-\oct (\thmref{theorem:lowerbound:outerplanar} in the appendix),
	\item~$(\cluster)$-\oct (\thmref{theorem:lowerbound:cluster} in the appendix),
	\item~$(\cocluster)$-\oct (\thmref{theorem:lowerbound:cocluster} in the appendix),
	\item \textsc{Weighted Odd Cycle Transversal parameterized by the size of a vertex cover} (\thmref{theorem:lowerbound:weightedoctbyvertexcover} in the appendix).
\end{itemize}
\end{theorem}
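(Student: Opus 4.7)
The plan is to prove each of the four statements via cross-composition from a suitable NP-hard source problem, following the framework of Bodlaender, Jansen and Kratsch~\cite{BodlaenderJK11}: one defines a polynomial equivalence relation on source instances and, given $t$ equivalent source instances $x_1,\ldots,x_t$, constructs in polynomial time a single target instance whose structural parameter is bounded by $\mathrm{poly}(\max_i |x_i|) \cdot (\log t)^{O(1)}$ and which is a \yes-instance iff some $x_i$ is. Together with \ncontainment, this rules out a polynomial kernel for each target parameterization.

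For each of the four target problems the natural source is an NP-hard variant of \oct (or, where convenient, a colouring problem such as \threecoloring). The common structural template for the composed instance is: a small \emph{selector} comprising $O(\log t)$ vertex pairs whose proper $2$-colouring encodes a binary index $i\in[t]$, together with $t$ parallel copies of per-instance gadgets. The selector is embedded into the unavoidable part of the parameter (the deletion set $X$ in the structural cases, or the vertex cover in the weighted case). The gadgets are engineered so that, in any valid solution, the chosen index $i$ \emph{activates} the gadget of instance $x_i$, forcing its bipartization to witness a transversal of $x_i$, while leaving the other $t-1$ gadgets properly $2$-colourable at zero extra cost.

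For $(\outerplanar)$-\oct, $(\cluster)$-\oct and $(\cocluster)$-\oct the main work lies in realising this template while keeping the graph $G-X$ inside the required graph class. For outerplanar graphs I would use a spine-of-triangles scaffold to encode odd-cycle constraints without introducing a $K_4$- or $K_{2,3}$-minor in $G-X$. For cluster graphs, $G-X$ must decompose into disjoint cliques, so each instance's gadget contributes one clique whose edges encode its odd-cycle structure, and all "cross-instance" connections are routed through selector vertices placed in $X$. The $(\cocluster)$ case dualises this via complementation, exploiting non-edges as the carriers of structural information. For \weightedoctproblem parameterized by vertex cover, the selector is realised not through many vertices but through \emph{exponentially scaled weights} on a small independent set of per-instance copies, so that any cheap transversal is forced, by a budget argument, to concentrate on the support of one instance.

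The main obstacle in each case is the soundness argument: showing that a clever transversal cannot pay a little for several instances simultaneously and so bypass the intended "one active instance" semantics. I expect $(\outerplanar)$-\oct to be the hardest of the four, because outerplanarity leaves very little room for isolating per-instance gadgets; the argument there must exploit the sparse cycle structure of outerplanar graphs to force any transversal to commit to one selector colouring, and hence to exactly one instance, which is the step that will require the most careful gadget design.
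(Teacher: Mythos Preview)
Your overall framework is right and matches the paper: all four bounds go via cross-composition, with a binary instance selector of size $O(\log t)$ placed inside the modulator $X$ (respectively the vertex cover), so that the parameter is polynomial in $\max_i|x_i|+\log t$. One difference in the choice of source: for $(\outerplanar)$-\oct the paper composes from \textsc{Vertex Cover} rather than \oct, which makes the per-edge checker gadgets much easier to realise as outerplanar paths-with-pendant-triangles; your intuition that this case needs the most careful gadgetry is borne out by the paper.

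Two of your concrete plans have genuine gaps, though. For $(\cluster)$-\oct, the idea that ``each instance's gadget contributes one clique whose edges encode its odd-cycle structure'' cannot work: a clique has no free edges with which to encode anything, and $t$ disjoint cliques of size $\Theta(n)$ would already force $\Theta(tn)$ deletions independent of the input. The paper's trick is to \emph{transpose} the roles. After subdividing every $G_i$ twice and moving all subdivision $P_2$'s into $X$, it is the vertex \emph{labels} $1,\dots,n$ (not the instances) that become the $n$ cliques of $G-X$: the $j$-th clique collects the $j$-th vertex of every instance plus one extra vertex. $K_4$-in-a-box selector gadgets sitting in $X$ then force any transversal within budget to wipe out all but one instance's vertices from every clique, leaving exactly one induced $G_i$. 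The $(\cocluster)$ proof is a parallel construction with a join of $t$ independent sets, not a literal complementation. For the weighted problem, ``exponentially scaled weights on a small independent set of per-instance copies'' is not what the paper does and is hard to make sound: per-instance copies are $\Omega(t)$ many, so they cannot sit in a small vertex cover, and if they sit in the independent set then weights alone do not force the solution to concentrate on one instance. The paper instead reuses the subdivide-and-identify template and places the large weight on the $O(\log t)$ $K_4$-in-a-box selector gadgets inside the vertex cover; the budget then allows exactly two deletions per gadget, which fixes an index $i$ and, via triangles through the surviving selector terminals, forces deletion of every other instance's vertices at unit cost.
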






\defaultbibliographystyle{abbrv}
\defaultbibliography{../Paper}

\section{Conclusion}\label{section:conclusion}
We have studied the existence of polynomial kernels for structural parameterizations of \oct. We have shown that in polynomial time the size of an instance~$(G, \ell)$ of \oct can be reduced to a polynomial in the minimum number of vertex deletions needed to transform~$G$ into a bipartite graph of constant treewidth. We also gave several kernelization lower bounds when the parameter measures the vertex-deletion distance to a non-bipartite graph with a simple structure. These lower bounds show that even for very large parameters such as the deletion distance to a cluster graph, it is unlikely that \oct admits a polynomial kernel.

The important open problem remains to determine whether the natural parameterization \loct admits a deterministic polynomial kernel. Encouraged by the recent randomized kernelization result~\cite{KratschW11}, we believe this to be the case. We think that several components we introduced in this work, such as the notion of important $X$-paths and the algorithm to find a small hitting set for these paths, will be useful ingredients for a deterministic kernelization. These ingredients do not rely on our structural parameterization and are therefore directly applicable to the general \loct problem.

\putbib[../Paper]
\end{bibunit}




\newpage
\appendix

\begin{bibunit}[alpha]

\section{On the number of cut characteristics in a labeled graph}

\newcommand{\combinatorialproperties}{
For ease of reading we present all the material of this section in the original form, introducing concepts when they are needed in the natural flow of the material. This involves a repetition of some material from the main text.

\begin{definition} \label{disconnectingSet}
Let~$G$ be a graph and let~$s,t \in V(G)$ be distinct non-adjacent vertices. A set~$S \subseteq V(G)$ is an~$s-t$ \emph{vertex-cut} if~$s,t \not \in S$ and~$S$ intersects each~$s-t$ path in~$G$.
\end{definition}

\begin{theorem}[Menger's Theorem \cite{Schrijver03}, Corollary 9.1a] \label{mengersTheorem}
If~$G$ is a graph and~$s,t \in V(G)$ are distinct non-adjacent vertices then the maximum number of internally vertex-disjoint~$s-t$ paths is equal to the minimum size of an~$s-t$ vertex-cut.
\end{theorem}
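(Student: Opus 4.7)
The plan is to reduce to the edge-capacitated max-flow min-cut theorem for directed graphs, which is standard. First I would dispose of the easy inequality: for any collection of $k$ internally vertex-disjoint $s-t$ paths $P_1, \ldots, P_k$ and any $s-t$ vertex-cut $S$, every $P_i$ must contain a vertex of $S$ (otherwise $P_i$ would survive in $G - S$, contradicting that $S$ separates $s$ from $t$), and these hit vertices are distinct since the paths are internally vertex-disjoint and $s,t \notin S$. Hence $|S| \geq k$, showing that the minimum cut size is at least the maximum number of internally vertex-disjoint paths.

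For the reverse inequality I would construct an auxiliary directed capacitated network $D$ by vertex-splitting. Each vertex $v \in V(G) \setminus \{s,t\}$ is replaced by two copies $v^-, v^+$ joined by an arc $(v^-, v^+)$ of capacity $1$, while $s$ and $t$ remain single. Each undirected edge $\{u,v\}$ of $G$ becomes a pair of infinite-capacity arcs $(u^+, v^-)$ and $(v^+, u^-)$; edges incident to $s$ (respectively $t$) are handled so that $s$ has only outgoing infinite-capacity arcs and $t$ only incoming ones. By the integral flow decomposition theorem, any integral $s-t$ flow of value $k$ in $D$ splits into $k$ unit-flow $s-t$ walks; the capacity-$1$ arcs $(v^-, v^+)$ ensure that no internal vertex of $G$ is used twice in total across these walks, so they translate back to $k$ internally vertex-disjoint $s-t$ paths in $G$. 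Conversely, any such path family yields a feasible integral flow of value $k$ in $D$.

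For cuts, I would show that some minimum $s-t$ cut in $D$ uses only capacity-$1$ arcs of the form $(v^-, v^+)$: a finite cut exists (e.g.\ all capacity-$1$ arcs of the neighbors of $s$), so any minimum cut has finite value and therefore cannot contain an infinite-capacity arc. The set $S := \{v \in V(G) \setminus \{s,t\} : (v^-, v^+) \text{ lies in the chosen min cut}\}$ is then an $s-t$ vertex-cut of $G$ whose size equals the cut value in $D$. Invoking max-flow min-cut on $D$ (integrality holds since all capacities are integral) and combining both correspondences yields exactly the equality claimed by Menger's theorem. The main obstacle I expect is the careful bookkeeping for the bijection between integral flow decompositions in $D$ and internally vertex-disjoint path families in $G$, and in particular the argument that a minimum cut in $D$ can always be chosen to consist solely of capacity-$1$ arcs so as to translate back to a vertex-cut of the same size in $G$.
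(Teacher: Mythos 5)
The paper does not prove Menger's theorem: it records it as a known result, citing Schrijver (Corollary~9.1a), so there is no internal argument for you to match against. Your reduction to the directed, edge-capacitated max-flow min-cut theorem via vertex-splitting is the standard textbook proof and is correct in its essentials. The easy inequality is disposed of exactly as it should be, and the auxiliary network~$D$ faithfully encodes internal vertex throughput via the capacity-$1$ arcs~$(v^-,v^+)$. A few routine technicalities to attend to when writing it out in full: (i) to invoke \emph{integral} max-flow min-cut you should replace the ``infinite'' capacities on the edge-arcs by a sufficiently large finite integer such as~$|V(G)|$; this does not change the optimum and keeps all capacities integral; (ii) integral flow decomposition yields paths \emph{and} cycles, and the cycles must be discarded (they carry zero net $s$--$t$ flow) before you read off the $k$ internally vertex-disjoint $s$--$t$ paths; (iii) you correctly rely on the non-adjacency of~$s$ and~$t$ to ensure a finite cut exists in~$D$ --- if $s$ and $t$ were adjacent, the direct large-capacity arc~$(s,t)$ would defeat every cut, which is exactly why the theorem is stated with that hypothesis. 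With these small fixes your sketch is a complete and correct proof of the theorem; it is simply independent of anything the paper itself argues, since the paper treats the result as a black box.
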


\begin{definition}
A \emph{labeled graph} is a tuple~$(G, L, f)$ where~$G$ is a graph,~$L$ is a finite set of labels, and~$f \colon V(G) \to 2^L$ is a labeling function which assigns to each vertex a (possibly empty) subset of the labels. For a subset of labels~$J \subseteq L$ we use the abbreviation~$V^f_G(J) := \{ v \in V(G) \mid f(v) \cap J \neq \emptyset\}$ to denote the vertices which carry a label from~$J$.
\end{definition}
For readability we omit the superscript on the term~$V^f_G(J)$ when this does not lead to confusion. We also use the concept of an important separator as introduced by Marx~\cite{Marx06c}.
\begin{definition}[Important separators] \label{importantSeparatorsDef}
Let~$G$ be a graph. For subsets~$X, S \subseteq V(G)$ the set of vertices reachable from~$X \setminus S$ in~$G - S$ is denoted by~$R_G(X, S)$. For~$X, Y \subseteq V(G)$ the set~$S$ is called an~\emph{$(X,Y)$-separator} if~$Y \cap R(X, S) = \emptyset$. An~$(X,Y)$-separator is \emph{minimal} is none of its proper subsets is an~$(X,Y)$-separator. An~$(X,Y)$-separator~$S'$ \emph{dominates} an~$(X,Y)$-separator~$S$ if~$|S'| \leq |S|$ and~$R(X, S) \subsetneq R(X, S')$ (proper subset). A subset~$S$ is an \emph{important}~$(X,Y)$-separator if it is minimal, and there is no~$(X,Y)$-separator~$S'$ that dominates~$S$.
\end{definition}
Carefully observe the boundary cases of this definition; note in particular that for every~$X \subseteq V(G)$ the empty set~$\emptyset$ is an important~$(X, \emptyset$)-separator. To improve readability we will write~$(t, Y)$-separator instead of~$(\{t\}, Y)$-separator when the first set is just a singleton. Similarly we will write~$R_G(t, S)$ instead of~$R_G(\{t\}, S)$.

We use several results from recent work by Marx and Razgon~\cite{MarxR10}. Although the notation used in the recent work is slightly different than in the original paper by Marx~\cite{Marx06c}, the results also hold for the original notation that we use here. The following claim originates from~\cite[Proposition 2.5]{MarxR10}, and follows from the given definitions.
\begin{proposition} \label{importantInInducedGraphs}
Let~$G$ be a graph, let~$X,Y \subseteq V(G)$ and let~$S$ be an important~$(X,Y)$-separator. For all~$v \in S$ it holds that~$S \setminus \{v\}$ is an important~$(X \setminus \{v\}, Y \setminus \{v\})$-separator in the graph~$G - \{v\}$.
\end{proposition}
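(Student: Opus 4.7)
The plan rests on a single bookkeeping identity: when $v\in S$, deleting $v$ and then $S\setminus\{v\}$ from $G$ produces exactly the graph $G-S$, and the source set in $G-S$, namely $X\setminus S$, is the same as the source set $(X\setminus\{v\})\setminus(S\setminus\{v\})$ used when we view the separation inside $G-\{v\}$. Hence
\[
R_G(X,S) \;=\; R_{G-\{v\}}(X\setminus\{v\},\, S\setminus\{v\}),
\]
and, symmetrically, for any $S'\subseteq V(G)\setminus\{v\}$,
\[
R_G(X,\, S'\cup\{v\}) \;=\; R_{G-\{v\}}(X\setminus\{v\},\, S').
\]
Everything else is a direct consequence of these two equalities.

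First I would check that $S\setminus\{v\}$ is an $(X\setminus\{v\},Y\setminus\{v\})$-separator in $G-\{v\}$: the first identity shows its reachable set equals $R_G(X,S)$, which is disjoint from $Y$ and therefore from $Y\setminus\{v\}$. For minimality, suppose $S\setminus\{v,u\}$ still separates for some $u\in S\setminus\{v\}$. Applying the identity with $S\setminus\{u\}$ in place of $S$, one gets that $R_G(X,S\setminus\{u\})$ avoids $Y\setminus\{v\}$, and it avoids $v$ itself because $v\in S\setminus\{u\}$; thus $S\setminus\{u\}$ would separate $X$ from $Y$ in $G$, contradicting minimality of $S$.

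The heart of the proof is the non-domination step. Suppose for contradiction that some $(X\setminus\{v\},Y\setminus\{v\})$-separator $S'$ in $G-\{v\}$ dominates $S\setminus\{v\}$, so $|S'|\le|S|-1$ and
\[
R_{G-\{v\}}(X\setminus\{v\},\,S\setminus\{v\}) \;\subsetneq\; R_{G-\{v\}}(X\setminus\{v\},\,S').
\]
I would lift $S'$ to the candidate $S'':=S'\cup\{v\}$ in $G$. By the second identity, $R_G(X,S'')=R_{G-\{v\}}(X\setminus\{v\},S')$, which is disjoint from $Y\setminus\{v\}$ by hypothesis, and disjoint from $v$ because $v\in S''$; hence $S''$ is an $(X,Y)$-separator in $G$. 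Its size is $|S'|+1\le|S|$, and combining the two displayed identities gives $R_G(X,S)\subsetneq R_G(X,S'')$. Thus $S''$ dominates $S$ in $G$, contradicting the assumption that $S$ is an important $(X,Y)$-separator.

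The main obstacle, if any, is purely notational: keeping straight which set is being separated in which graph so that the two reachability identities are applied in the right direction. Once those are set down, minimality transfers downward (from $G$ to $G-\{v\}$) by a one-vertex uncontraction argument, and non-domination transfers upward (from $G-\{v\}$ back to $G$) by re-adding $v$ to any hypothetical dominator.
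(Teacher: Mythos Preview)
Your proof is correct. The paper does not actually supply a proof of this proposition; it merely states that the claim ``originates from~\cite[Proposition 2.5]{MarxR10}, and follows from the given definitions.'' Your argument is precisely the unpacking of those definitions that the paper omits: the two reachability identities you state are exactly right, and from them the separator property, minimality, and non-domination transfer as you describe. There is no alternative approach to compare against.
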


In the original paper on important separators, Marx showed a bound of~$4^{k^2}$ on the number of important~$(X,Y)$-separators of size at most~$k$. An improvement of this bound was implicit in work by Chen et al.~\cite{ChenLL09}, which was summarized and made explicit by Marx and Razgon~\cite[Lemma 2.6]{MarxR10}. This bound also holds for the original definition of important separators.
\begin{lemma}[\cite{MarxR10}] \label{numberImportantSeparators}
If~$X$ and~$Y$ are arbitrary vertex sets of a graph~$G$ then there are at most~$4^m$ important~$(X,Y)$-separators of size at most~$m \geq 0$.
\end{lemma}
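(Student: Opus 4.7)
My plan is to prove the bound by strong induction on the potential $\mu := 2m - \lambda$, where $\lambda = \lambda(G,X,Y)$ is the minimum size of an $(X,Y)$-vertex-cut in $G$ (by Menger's theorem this equals the maximum number of internally vertex-disjoint $X$-$Y$ paths). Writing $N(G,X,Y,m)$ for the number of important $(X,Y)$-separators of size at most $m$, I will establish the stronger statement $N(G,X,Y,m)\le 2^{\mu}=2^{2m-\lambda}$ whenever $\lambda\le m$; since $\lambda\ge 0$, this immediately gives the claimed $4^m$ upper bound.

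The base cases are easy. If $\lambda>m$ then no $(X,Y)$-separator of size $\le m$ exists, so $N=0$. If $m=\lambda=0$ then $X$ and $Y$ are already disconnected, the only minimal $(X,Y)$-separator is $\emptyset$, and it is important, giving $N=1=2^0$. For the inductive step I select a canonical branching vertex as follows. By a standard uncrossing argument on minimum $(X,Y)$-separators (closing under intersection "on the $X$-side"), there is a unique minimum $(X,Y)$-separator $S_0$ whose reachability set $R_G(X,S_0)$ is inclusion-minimal among all minimum separators. Pick any $v\in S_0$; the inductive analysis will branch on whether an important separator $S$ contains $v$ or not.

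\textbf{Case 1: $v\in S$.} By \proposref{importantInInducedGraphs}, $S\setminus\{v\}$ is an important $(X\setminus\{v\},Y\setminus\{v\})$-separator in $G'=G-v$ of size at most $m-1$. Deleting a vertex decreases $\lambda$ by at most one, so $\lambda(G',X\setminus\{v\},Y\setminus\{v\})\ge \lambda-1$. The new potential is at most $2(m-1)-(\lambda-1)=\mu-1$, so by induction the number of such $S$ is bounded by $2^{\mu-1}$.

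\textbf{Case 2: $v\notin S$.} I first argue $S$ is an important $(X\cup\{v\},Y)$-separator in $G$: any vertex of $S$ lies on an $X$-$Y$ path, hence by minimality of $S$ in the original sense it lies on some path with an endpoint also reaching $v$ once $v$ is added to the source side; and any separator dominating $S$ here would dominate it in the original problem too. Next I invoke the leftmost choice of $S_0$: every $(X\cup\{v\},Y)$-separator $T$ of size $\lambda$ would in particular be an $(X,Y)$-separator of size $\lambda$ whose $X$-reachability set strictly contains $v$, contradicting inclusion-minimality of $R_G(X,S_0)$ (since $v\in S_0\not\subseteq T$). Hence $\lambda(G,X\cup\{v\},Y)\ge\lambda+1$, yielding new potential $2m-(\lambda+1)=\mu-1$ and by induction at most $2^{\mu-1}$ such $S$.

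Adding the two cases gives $N(G,X,Y,m)\le 2\cdot 2^{\mu-1}=2^{\mu}\le 4^m$, completing the induction. The main technical obstacle is the two auxiliary claims in Case 2: (i) the uncrossing argument that produces a well-defined "leftmost" minimum separator $S_0$ (requiring that the intersection of two minimum $(X,Y)$-separators is again an $(X,Y)$-separator, which follows from a standard submodularity-of-cuts argument on $G$), and (ii) verifying that an important separator in the original instance remains important after moving $v$ from the graph-side to the source-side, so that induction is actually applied to counts of important separators in a smaller-potential instance rather than to arbitrary minimal separators.
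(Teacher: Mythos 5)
The paper does not prove this lemma; it is imported from Marx and Razgon~\cite{MarxR10}. Your high-level strategy---induction on the potential $\mu = 2m - \lambda$, and branching on whether a vertex $v$ of a distinguished minimum separator $S_0$ lies in the candidate important separator $S$---is exactly the strategy used in the literature, and Case~1 is handled correctly via Proposition~\ref{importantInInducedGraphs}. However, Case~2 has a genuine gap, rooted in picking the wrong distinguished separator.

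You take $S_0$ to be the minimum separator with $R_G(X,S_0)$ inclusion-\emph{minimal} (``leftmost,'' closest to $X$). The correct choice is the opposite one: the minimum separator $S^*$ with $R_G(X,S^*)$ inclusion-\emph{maximal}, which is the unique minimum-size \emph{important} $(X,Y)$-separator. The key structural fact that your outline omits is that \emph{every} important $(X,Y)$-separator $S$ satisfies $R_G(X,S^*)\subseteq R_G(X,S)$. This fact is what makes the $v\notin S$ branch work: since $S^*$ is minimal, $v$ has a neighbor in $R_G(X,S^*)\subseteq R_G(X,S)$, hence $v\in R_G(X,S)$, so $S$ is again an important separator for the enlarged source side; and any min separator $T$ of size $\lambda$ for the enlarged source set would have $R_G(X,T)\subseteq R_G(X,S^*)$ yet also separate $v$ from $Y$, which the maximality of $R_G(X,S^*)$ forbids.

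With your leftmost $S_0$, both ingredients of Case~2 fail. First, $v\in R_G(X,T)$ does not follow merely from $T$ separating $X\cup\{v\}$ from $Y$; the vertex $v$ may sit in a component of $G-T$ not reached from $X$. Second, even if $v\in R_G(X,T)$ did hold, it would yield $R_G(X,T)\supsetneq R_G(X,S_0)$, which is not a contradiction---$S_0$ being the lattice minimum allows every other minimum separator to have a strictly larger reachability set. Most decisively, the claimed inequality $\lambda(X\cup\{v\},Y)\geq\lambda+1$ is simply false for the leftmost choice. Take $G$ to be the two internally disjoint paths $x,a,b,y$ and $x,c,d,y$ with $X=\{x\}$, $Y=\{y\}$. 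Then $\lambda=2$, the leftmost minimum separator is $S_0=\{a,c\}$ (with $R_G(X,S_0)=\{x\}$), and taking $v=a$ one still has $\lambda(\{x,a\},\{y\})=2$ because $\{b,c\}$ separates. The potential does not decrease, so the induction is not well-founded and the $4^m$ bound is not established. Switching to the rightmost $S^*=\{b,d\}$ repairs this: there $v=b$ is adjacent to $y$, so $\lambda(\{x,b\},\{y\})=\infty$, as the argument needs. One further point worth flagging: the paper's Definition~\ref{importantSeparatorsDef} allows a separator to intersect $X\cup Y$, under which $S^*$ itself remains an $(X\cup\{v\},Y)$-separator of size $\lambda$; the usual proof assumes separators are disjoint from the terminal sets, and one should either adopt that restriction or explicitly argue around it.
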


 We need the following simple lemma about minimal separators.
\begin{lemma} \label{minimalSeparatorsPaths}
Let~$G$ be a graph, and let~$X, Y \subseteq V(G)$ be vertex subsets. If~$S \subseteq V(G)$ is a \emph{minimal}~$(X,Y)$-separator then for every~$s \in S$ it must hold that~$s \in R(X, S \setminus \{s\})$.
\end{lemma}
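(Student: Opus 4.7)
The plan is to argue by contradiction: if some $s \in S$ were not in $R(X, S \setminus \{s\})$, I would show that $S \setminus \{s\}$ is itself an $(X,Y)$-separator, which contradicts the minimality of $S$. This reduces everything to verifying that $S \setminus \{s\}$ still blocks every $X$-to-$Y$ connection.

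First I would dispose of the trivial case $s \in X$: then $s$ belongs to $X \setminus (S \setminus \{s\})$ and is vacuously reachable from itself in $G - (S \setminus \{s\})$, so $s \in R(X, S \setminus \{s\})$. So from now on assume $s \notin X$ and, for contradiction, that $s \notin R(X, S \setminus \{s\})$; equivalently, every path in $G$ from $X$ to $s$ uses a vertex of $S \setminus \{s\}$. I want to derive $Y \cap R(X, S \setminus \{s\}) = \emptyset$. Suppose for contradiction there is some $y \in Y$ reachable from some $x \in X \setminus (S \setminus \{s\})$ via a path $P$ in $G - (S \setminus \{s\})$. Because $s \notin X$, I get $x \notin S$, so $x \in X \setminus S$.

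Now I would split on whether $y = s$ or $y \neq s$. If $y = s$, then $P$ itself is a walk witnessing $s \in R(X, S \setminus \{s\})$, contradicting our assumption. If $y \neq s$, then $y \notin S$ (since $y \notin S \setminus \{s\}$), so $P$ is a path from $X \setminus S$ to $Y \setminus S$ in $G - (S \setminus \{s\})$; because $S$ is an $(X,Y)$-separator, this path cannot survive in $G - S$, so it must pass through $s$. Taking the prefix of $P$ from $x$ up to $s$ gives a path in $G - (S \setminus \{s\})$ showing $s \in R(X, S \setminus \{s\})$, again a contradiction. Either way we reach a contradiction, so $S \setminus \{s\}$ is an $(X,Y)$-separator, violating minimality of $S$.

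The proof is essentially a straightforward case analysis and I do not expect genuine obstacles; the only subtle point is to track that reachability is computed from $X \setminus (S \setminus \{s\})$ rather than from all of $X$, so the case $s \in X$ must be handled separately before one can safely assume the starting vertex $x$ of the offending path lies outside $S$. Being careful about the boundary interactions between $X$, $Y$, and $S$ (and in particular that $s$ might itself be in $Y$) is the only place where one could slip up.
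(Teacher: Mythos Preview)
Your proof is correct and follows the same contrapositive idea as the paper's proof: assuming $s \notin R(X, S \setminus \{s\})$ and deriving that $S \setminus \{s\}$ already separates $X$ from $Y$. The paper compresses this into a single sentence without the case analysis, whereas you have carefully spelled out the boundary cases ($s \in X$, $y = s$); your extra care is warranted and nothing is missing.
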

\begin{proof}
If~$S$ is an~$(X,Y)$-separator containing~$s \in S$ with~$s \not \in R(X, S \setminus \{s\})$ then~$S \setminus \{s\}$ is also an~$(X,Y)$-separator, showing that~$S$ is not minimal.
\qed
\end{proof}


We also need a lemma on the structure of important separators which intersect the set they are separating.
\begin{lemma} \label{importanceForSubsets}
Let~$G$ be a graph, and let~$X, Y \subseteq V(G)$ be vertex subsets. If~$S \subseteq Y$ is an important~$(X,Y)$-separator then for every set~$Y'$ satisfying~$S \subseteq Y' \subseteq Y$ the set~$S$ is an important~$(X,Y')$-separator.
\end{lemma}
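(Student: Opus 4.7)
The plan is to verify the three requirements for $S$ being an important $(X, Y')$-separator in turn: being a separator, minimality, and non-dominance. The first two are short reductions to the assumptions, while the substantive work lies in non-dominance, where the hypothesis $S \subseteq Y'$ plays a critical role.

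First, since $Y' \subseteq Y$ and $S$ is an $(X, Y)$-separator, $R(X, S) \cap Y' \subseteq R(X, S) \cap Y = \emptyset$, so $S$ is an $(X, Y')$-separator. For minimality, I would invoke Lemma~\ref{minimalSeparatorsPaths} to obtain that for every $s \in S$, $s \in R(X, S \setminus \{s\})$; since $S \subseteq Y'$, this $s$ then lies in $Y' \cap R(X, S \setminus \{s\})$, so $S \setminus \{s\}$ is not an $(X, Y')$-separator. For an arbitrary proper subset $S^\circ \subsetneq S$, pick $s \in S \setminus S^\circ$; by monotonicity of reachability under enlarging the deleted set, $R(X, S^\circ) \supseteq R(X, S \setminus \{s\}) \ni s \in Y'$, so $S^\circ$ also fails to separate.

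For non-dominance, suppose for contradiction that there exists an $(X, Y')$-separator $S'$ with $|S'| \leq |S|$ and $R(X, S) \subsetneq R(X, S')$. Pick $v \in R(X, S') \setminus R(X, S)$ and a path $P$ in $G - S'$ from some $x \in X \setminus S'$ to $v$. I claim $P$ contains a vertex of $S \setminus S'$, which yields the contradiction: any such vertex lies in $R(X, S')$ (as $P$ avoids $S'$) and in $Y'$ (as $S \subseteq Y'$), violating that $S'$ separates $X$ from $Y'$. The degenerate cases $v \in S$ and $x \in X \cap S$ already supply such a vertex directly; otherwise $x \in X \setminus S \subseteq R(X, S)$ and $v \notin R(X, S) \cup S$, and traversing $P$ we reach a first vertex $u_j$ outside $R(X, S)$, whose predecessor $u_{j-1}$ lies in $R(X, S)$. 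Because $R(X, S)$ is closed under $G - S$-adjacency, the boundary vertex $u_j$ must lie in $S$, hence in $S \setminus S'$, as required. The main obstacle is the clean treatment of these degenerate boundary cases; modulo them, the argument amounts to the standard fact that any path in $G$ from the $X$-side of $S$ to a vertex on the other side must cross $S$, together with the observation that here $S \subseteq Y'$ turns any such crossing vertex into a forbidden $Y'$-vertex reachable past $S'$.
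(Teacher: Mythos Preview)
Your proof is correct and follows essentially the same route as the paper's: you verify separation, use \lemmaref{minimalSeparatorsPaths} together with $S \subseteq Y'$ for minimality, and for non-dominance you establish that any hypothetical dominating $(X,Y')$-separator $S'$ would have $S \cap R(X,S') \neq \emptyset$, contradicting $S \subseteq Y'$. The only difference is that you spell out the path-crossing argument for $S \cap R(X,S') \neq \emptyset$ explicitly (including the boundary cases), whereas the paper asserts this step in one line.
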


\begin{proof}
Since~$Y' \subseteq Y$ it is easy to see that~$S$ separates~$X$ from~$Y'$. Since~$S$ is an important~$(X,Y)$-separator, it is a minimal $(X,Y)$-separator which implies by \lemmaref{minimalSeparatorsPaths} that for all~$s \in S$ we have~$s \in R(X, S \setminus \{s\})$. Since~$S \subseteq Y'$ this proves that~$S$ is a minimal $(X,Y')$-separator. If there is an $(X,Y')$-separator~$S'$ which dominates~$S$, then~$R(X, S)$ must be a proper subset of~$R(X,S')$; but this is only possible if $S \cap R(X,S') \neq \emptyset$ which implies by~$S \subseteq Y'$ that~$S'$ is not, in fact, an $(X,Y')$-separator. This concludes the proof.
\qed
\end{proof}

\begin{lemma} \label{pathsForImportantSeparator}
Let~$G$ be a graph, let~$X,Y\subseteq V(G)$, and let~$S \subseteq V(G)$ be an important~$(X,Y)$-separator with~$S \nsubseteq Y$. Then there is a set~$\P = \{P_0, P_1, \ldots, P_{|S|}\}$ of distinct simple paths such that:
\begin{enumerate}
	\item each path~$P_i$ connects a vertex of~$Y$ and a vertex of~$S$,
	\item the paths~$P_1, \ldots, P_{|S|}$ are pairwise vertex-disjoint, i.e.\ $V(P_i) \cap V(P_j) = \emptyset$ for~$1 \leq i < j \leq |S|$, and
	\item the path~$P_0$ is vertex-disjoint from the paths~$P_2, \ldots, P_{|S|}$, i.e.\ $V(P_0) \cap V(P_i) = \emptyset$ for~$2 \leq i \leq |S|$,
	\item the paths~$P_0$ and~$P_1$ intersect only in their endpoint in the set~$S$, i.e.\ $V(P_0) \cap V(P_1) = \{s\}$ with~$s \in S$.
\end{enumerate}
\end{lemma}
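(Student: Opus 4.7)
The plan is to prove the lemma by a Menger-type argument on the ``$Y$-side'' subgraph of $S$, using the importance of $S$ to boost the minimum cut by one above $|S|$.

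First I set up notation. Let $A := R_G(X, S)$, $B := V(G) \setminus (A \cup S)$, and $H := G[S \cup B]$; note $Y \subseteq S \cup B$ since $S$ is an $(X,Y)$-separator. Pick any $v \in S \setminus Y$, which exists by hypothesis. Form the auxiliary graph $H^v$ by adding a ``twin'' vertex $v'$ with $N_{H^v}(v') := N_H(v)$.

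The heart of the argument is to show that every $(S \cup \{v'\})$--$Y$ vertex cut $T$ in $H^v$ has $|T| \geq |S|+1$; I describe the case $S \cap (X \cup Y) = \emptyset$ and note at the end how the remaining overlaps are absorbed. Such a $T$ lies in $B \setminus Y \subseteq V(G)$. I would argue (i) that $T$ is an $(X,Y)$-separator in $G$: any $X$--$Y$ path in $G - T$ must cross $S$ because $S$ is an $(X,Y)$-separator, and after its last entry into $S$ the remainder stays in $V(H) \setminus T$ and would yield an $S$--$Y$ path in $H - T$, contradicting the choice of $T$. And (ii) that $R_G(X, T) \supsetneq A$: since $T$ is disjoint from $A \cup S$, we have $A \subseteq R_G(X, T)$; moreover each $s \in S$ has a neighbor in $A$ (trace an $X$--$Y$ path through $s$ in $G - (S \setminus \{s\})$, guaranteed by minimality of $S$, backward to the last pre-$s$ vertex, which must lie in $A$), so $S \subseteq R_G(X, T)$ as well. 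Importance of $S$ then forces $|T| \geq |S|+1$.

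By Menger's theorem $H^v$ admits $|S|+1$ pairwise vertex-disjoint $(S \cup \{v'\})$--$Y$ paths, one starting at each source vertex. Label the path from $v'$ as $P_0$, the one from $v$ as $P_1$, and those from the remaining $s_i \in S \setminus \{v\}$ as $P_2,\dots,P_{|S|}$. Identifying $v'$ with $v$ sends these to paths in $H \subseteq G$: the paths $P_0$ and $P_1$ share only $v \in S$, since in $H^v$ they were vertex-disjoint apart from their distinct starts $v', v$, and the remaining endpoint and disjointness conditions follow immediately from Menger.

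The main obstacle is the min-cut lower bound in steps (i)--(ii): correctly lifting a cut in the auxiliary graph $H^v$ back to an $(X,Y)$-separator in $G$ whose reachable set strictly extends $A$, so that importance of $S$ can be invoked. Overlaps $S \cap Y$ are handled by trivial single-vertex paths $\{y\}$ for each $y \in S \cap Y$, with Menger then applied in $H^v - (S \cap Y)$ to produce the remaining $|S \setminus Y|+1$ paths, and replacing $R_G(X, T)$ by $R_G(X, T \cup (S \cap Y))$ in the lifting argument; vertices in $S \cap X$ are reached trivially from $X$ and so pose no difficulty in (ii).
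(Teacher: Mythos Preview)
Your approach is sound and in fact proves something slightly stronger than the paper: you fix the meeting vertex $v$ in advance, showing that \emph{any} $v\in S\setminus Y$ can serve as the common endpoint of $P_0$ and $P_1$, whereas the paper lets the construction choose it. The paper's auxiliary graph is genuinely different: instead of twinning one $v$, it deletes $S_0:=S\cap Y$ from $G$, adds a single super-vertex $s^*$ adjacent to all of $N_G(S_1)\setminus S_0$ (where $S_1:=S\setminus Y$), and runs Menger between a source attached to $Y\setminus S$ and a sink attached to $S_1\cup\{s^*\}$; the extra path through $s^*$ is afterwards rerouted to end at whichever $s'\in S_1$ its penultimate vertex happens to neighbour, and that $s'$ becomes the shared endpoint. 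Your twin construction is conceptually cleaner, avoids the rerouting step, and keeps all paths inside $H=G[S\cup B]$ rather than potentially wandering through the $X$-side of $G$.

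There is, however, a real gap in your min-cut step. The assertion that a hypothetical small cut $T$ ``lies in $B\setminus Y$'' is unjustified: in set-to-set Menger the cut may contain source vertices (elements of $S$, or the twin $v'$) or sink vertices. The case $v'\in T$ must be handled separately, since then $T\not\subseteq V(G)$ and your lifting in (i) makes no sense as written; here $T\setminus\{v'\}$ is an $(S,Y)$-cut in $H$ of size at most $|S|-1$, and it lifts to an $(X,Y)$-separator in $G$ as you intend. When $v'\notin T$ you still need to rule out $S\subseteq T$: under $|T|\le|S|$ this would force $T=S$, but $S$ is \emph{not} an $(S\cup\{v'\},Y)$-cut in $H^v$, because $v'$ reaches $Y$ along the $B$-suffix of an $X$--$Y$ path through $v$ guaranteed by minimality. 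Once some $s\in S\setminus T$ is known to survive, that single vertex (via its neighbour in $A$) witnesses $R_G(X,T)\supsetneq A$; your stronger claim $S\subseteq R_G(X,T)$ fails whenever $S\cap T\neq\emptyset$, but one witness is all domination requires. With these cases spelled out, your argument goes through.
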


\begin{proof}
Let~$S_0:=S\cap Y$ and let~$S_1:=S\setminus Y$; clearly~$S_1\neq \emptyset$.
Let a graph~$G'$ be obtained from~$G$ in the following way. Delete the set~$S_0$ and then add a vertex~$s^*$ with neighborhood~$N_{G'}(s^*) := N_G(S_1) \setminus S_0$. Let~$S' := S_1 \cup \{s^*\}$. Add a source~$p$ with~$N_{G'}(p) := Y \setminus S$ and add a sink~$q$ with~$N_{G'}(q) := S'$. (Note that~$G-S_0$ is a subgraph of~$G'$, i.e.~$G-S_0=G'-\{p,q,s^*\}$.)
Let~$\P'$ be a maximum packing of internally vertex-disjoint~$p-q$ paths in~$G'$. By Menger's theorem (\thmref{mengersTheorem}) the size~$|\P'|$ equals the size of a minimum~$p-q$ vertex-cut. Observe that there are at most~$|S'|=|S_1|+1$ such paths, matching the degree of~$q$. The main part of the proof is devoted to showing that~$|\P'| = |S'| = |S_1| + 1$ when~$S$ is an important~$(X,Y)$-separator. We will prove this claim, and afterwards we show how to obtain a set of paths in~$G$ as mentioned in the statement of the lemma. 

So assume for a contradiction that~$|\P'|\leq |S_1|$ and let~$A\subseteq V(G')\setminus\{p,q\}$ be a corresponding minimum~$p-q$ vertex-cut in~$G'$, which has size equal to~$|\P'| \leq |S_1|$ by Menger's theorem. Observe that~$A \neq S_1$: if there is a~$p-q$ path in~$G'$, then there is such a path which avoids~$S_1$ since any~$p-q$ path that reaches~$q$ via~$S_1\subseteq N_{G'}(q)$ can be routed via~$s^*$, avoiding~$S_1$. Indeed, such a path must be of the form~$(p,y,\ldots,s,q)$ with~$y\in Y$,~$s\in S_1$, and with~$y\neq s$ since~$Y\cap S_1=\emptyset$. Hence it reaches~$S_1$ from a vertex in~$N_G(S_1) \setminus S_0$ implying that there is also a path~$(p,y,\ldots,s^*,q)$ avoiding~$A=S_1$. This shows that if there is a~$p-q$ path in~$G'$ then~$S_1$ is not a~$p-q$ vertex-cut in~$G'$ and therefore~$A \neq S_1$; if there is no~$p-q$ path in~$G'$ then the non-empty set~$S_1$ cannot be a minimum-size~$p-q$ vertex-cut (the empty set is a minimum-size vertex-cut), and so we again find~$A \neq S_1$.

Knowing that~$A\neq S_1$, we will now show that (i)~$A\cup S_0$ is an~$(X,Y)$-separator in~$G$ and (ii) that the~$(X,Y)$-separator~$A\cup S_0$ dominates the separator~$S$.

\textbf{i)} Let~$P$ be any~$x-y$ path in~$G$ for some~$x\in X$ and~$y\in Y$. If~$P$ contains a vertex of~$S_0\subseteq A\cup S_0$ we are done. Otherwise there must be a vertex~$w\in S_1=S\setminus S_0$ on~$P$, i.e.,~$P=(x,\ldots,w,\ldots,y)$, since~$S$ separates~$X$ and~$Y$. Furthermore,~$P$ must exist also in~$G'$ (which contains~$G-S_0$). Hence, there must be a vertex of~$A$ on~$P$ separating~$w$ and~$y$, since~$A$ separates~$S_1$ from~$Y$ in~$G'$. Thus~$A\cup S_0$ separates~$X$ and~$Y$ in~$G$.

\textbf{ii)} We now prove that~$A \cup S_0$ dominates~$S$. Assume for a contradiction that~$A$ contains a vertex~$v \in R(X, S)$; since it follows from \defref{importantSeparatorsDef} that~$S \cap R(X, S) = \emptyset$ we must have~$v \not \in S$. Since~$A$ is a minimum~$p-q$ vertex-cut in~$G'$ there must be a~$p-q$ path~$P$ in~$G'$ that intersects~$A$ only in~$v$. Furthermore,~$P$ must leave~$p$ to some vertex~$y\in Y \setminus S = N_{G'}(p)$, i.e.,~$P=(p,y,\ldots,v,\ldots,q)$. Note that~$y\neq v$, otherwise~$X \setminus S$ would reach~$Y \setminus S$ in~$G-S$.

Since~$v \in R(X,S)$ but~$S$ is an~$(X,Y)$-separator, there must be a vertex~$w$ of~$S$ between~$y$ and~$v$ on~$P$, i.e.,~$P=(p,y,\ldots,w,\ldots,v,\ldots,q)$. Since~$w$ is a vertex of~$G'$ it follows that~$w\notin S_0=S\cap Y$. Thus~$w\in S_1$,~$w\neq y\in Y$, and~$w\neq v\notin S$. However,~$w$ is adjacent to~$q$ in~$G'$ which gives rise to a~$p-q$ path~$(p,y,\ldots,w,q)$ in~$G'$ that is not intersecting~$A$ since~$v$ is not contained in it (and~$P$ contains no other vertices of~$A$), a contradiction. Thus,~$A$ contains no vertices reachable from~$X$ in~$G-S$ and the same must be true for~$A\cup S_0$ as~$S_0\subseteq S$. Therefore we have~$R(X, S) \subseteq R(X, A \cup S_0)$.

Now, to see that~$R(X,S)$ is a proper subset of~$R(X, A \cup S_0)$ let us fix a vertex~$v\in S_1\setminus A$ (such a vertex exists since~$|A| \leq |S_1|$ and~$A \neq S_1$). If~$v \in X$ then~$v \in R(X, A \cup S_0) \setminus R(X, S)$ which proves the proper subset relation. If~$v \not \in X$ then since~$S$ is minimal, there must be a vertex~$u$ which is adjacent to~$v$ in the set~$R(X, S)$ (otherwise~$S\setminus\{v\}$ would also be an~$(X,Y)$-separator). Therefore we must have~$u \in R(X, A \cup S_0)$ (as shown above) and since~$v \not \in A \cup S_0$ this shows that~$v \in R(X, A \cup S_0)$, completing the argument that~$A\cup S_0$ dominates~$S$.

We have seen that the assumption that~$|\P'| \leq |S_1|$ leads to the construction of an~$(X,Y)$-separator separator~$A \cup S_0$ which dominates~$S$, thereby proving that~$S$ is not an important separator. Hence if~$S$ is an important~$(X,Y)$-separator we must find a packing~$\P'$ of~$|S_1|+1$ internally vertex-disjoint~$p-q$ paths in graph~$G'$, as claimed.

We will now define a packing~$\P$ of paths in~$G$ that matches our claim. Let~$\P''$ be obtained from~$\P'$ by removing the vertices~$p$ and~$q$ from each path; it follows that~$\P''$ is a packing of vertex-disjoint paths which connect~$N_{G'}(p) = Y \setminus S$ to~$N_{G'}(q) = S_1 \cup \{s^*\}$ in graph~$G'$; since~$p$ and~$q$ are not adjacent in~$G'$ all paths in~$\P''$ are non-empty, and since~$N_{G'}(p) \cap N_{G'}(q) = \emptyset$ all such paths have at least two vertices. Since~$|\P''| = |S_1| + 1$ we know that each path in~$\P''$ ends in a unique vertex of~$S_1 \cup \{s^*\}$. Denote by~$P'_0$ the path which ends in vertex~$s^*$. Since~$N_{G'}(s^*) = N_G(S_1) \setminus S_0$ there is at least one vertex~$s' \in S_1$ which is adjacent in~$G$ to the predecessor of~$s^*$ on path~$P'_0$. Let~$P_1$ be the path in~$\P''$ which ends in~$s'$, and let~$P_2, \ldots, P_{|S_1|}$ be the remaining paths of~$\P''$ in arbitrary order. Each of the paths~$P_1, \ldots, P_{|S_1|}$ also exists in graph~$G$. If we replace the occurence of vertex~$s^*$ on path~$P'_0$ by vertex~$s'$ to obtain path~$P_0$, then the resulting path~$P_0$ also exists in graph~$G$ and only intersects~$P_1$ in the single vertex~$s' \in S$. We are now ready to define the final packing of paths~$\P$. We start with the paths~$P_0, P_1, \ldots, P_{|S_1|}$. For each vertex~$v \in S_0$ we add a singleton path on vertex~$v$ - since~$v \in S_0 = Y \cap S$ such a path trivially connects a vertex from~$Y$ to a vertex in~$S$, and because the vertices of~$S_0$ do not exist in the graph~$G'$ from which the other paths were taken, the new paths we add in this way are vertex-disjoint from the others. The resulting set of paths has size~$1 + |S_1| + |S_0| = |S| + 1$, and it is easy to verify that these paths satisfy the stated claims on disjointness.
\qed
\end{proof}

\begin{lemma} \label{smallWitnessImportantSeparator}
Let~$(G, L, f)$ be a labeled graph, let~$X$ be a subset of vertices and let~$J \subseteq L$ be a subset of labels. If~$S$ is an important~$(X, V_G(J))$-separator of size~$k$ then there is a set~$J^* \subseteq J$ with~$|J^*| \leq \sum _{i=1}^k (i+1)$ such that~$S$ is an important~$(X, V_G(J^*))$-separator.
\end{lemma}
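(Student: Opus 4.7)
The plan is to prove the lemma by induction on $k = |S|$. The base case $k = 0$ is trivial with $J^* = \emptyset$. In the inductive step, if $S \subseteq V_G(J)$, I pick one label $\ell_s \in f(s) \cap J$ per vertex $s \in S$ and set $J^* := \{\ell_s : s \in S\}$; then $|J^*| \leq k \leq k(k+3)/2$, and since $S \subseteq V_G(J^*) \subseteq V_G(J)$, Lemma~\ref{importanceForSubsets} directly yields that $S$ is important for $(X, V_G(J^*))$.

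In the main case $S \not\subseteq V_G(J)$, I combine two sources of labels. First, Lemma~\ref{pathsForImportantSeparator} provides a collection $\mathcal{P} = \{P_0, P_1, \ldots, P_k\}$ of $k+1$ paths between $V_G(J)$ and $S$ with the disjointness structure described there; for each $P_i$ with $V_G(J)$-endpoint $y_i$ I select a label $\ell_i \in f(y_i) \cap J$ and set $J_1 := \{\ell_0, \ldots, \ell_k\}$, so $|J_1| \leq k+1$. Second, I pick a vertex $s \in S \setminus V_G(J)$; by Proposition~\ref{importantInInducedGraphs}, $S \setminus \{s\}$ is an important $(X\setminus\{s\}, V_G(J)\setminus\{s\})$-separator in $G - \{s\}$, so the induction hypothesis produces $J_2 \subseteq J$ with $|J_2| \leq \sum_{i=1}^{k-1}(i+1)$ such that $S \setminus \{s\}$ is important for $(X\setminus\{s\}, V_G(J_2))$ in $G - \{s\}$ (using that $s \notin V_G(J)$ so $V_G(J_2) \setminus \{s\} = V_G(J_2)$). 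Taking $J^* := J_1 \cup J_2$ gives $|J^*| \leq (k+1) + \sum_{i=1}^{k-1}(i+1) = \sum_{i=1}^k(i+1)$.

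It remains to verify that $S$ is an important $(X, V_G(J^*))$-separator in $G$. Being a separator is immediate from $V_G(J^*) \subseteq V_G(J)$. For minimality, each $t \in S \setminus \{s\}$ inherits a witness path from the induction hypothesis (a path in $G - \{s\}$ automatically sits inside $G - (S \setminus \{t\})$); for $t = s$ I concatenate the $X$-to-$s$ prefix of the original minimality-witness path of $S$ with the reversed path $P_j \in \mathcal{P}$ whose $S$-endpoint equals $s$, producing an $X$-to-$V_G(J^*)$ walk in $G - (S\setminus\{s\})$, where I rely on the internal vertices of $P_j$ avoiding $S$ and on $y_j \in V_G(J) \setminus S$ (which is forced by $s \notin V_G(J)$ together with the construction inside Lemma~\ref{pathsForImportantSeparator}). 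For non-domination, suppose a separator $S'$ with $|S'| \leq k$ and $R(X, S) \subsetneq R(X, S')$ separates $X$ from $V_G(J^*)$. If $s \in S'$, translating reachabilities through $G - \{s\}$ reveals that $S' \setminus \{s\}$ dominates $S \setminus \{s\}$ for $(X\setminus\{s\}, V_G(J_2))$ in $G - \{s\}$, contradicting the induction hypothesis.

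The main obstacle lies in the sub-case $s \notin S'$: here $|S'|$ may equal $|S| = k$, so a size-based reduction to the induction hypothesis on $S \setminus \{s\}$ is unavailable. The plan to close this case is to exploit the paths in $\mathcal{P}$. From $R(X, S) \subsetneq R(X, S')$ one extracts a vertex $u \in S \setminus S'$ that is reachable from $X$ in $G - S'$; the path $P_j \in \mathcal{P}$ ending at $u$ must then be internally blocked by $S'$, for otherwise concatenating $P_j$ with an $X$-to-$u$ path in $G - S'$ would yield an $X$-to-$y_j$ walk in $G - S'$ with $y_j \in V_G(J^*)$, contradicting that $S'$ separates $X$ from $V_G(J^*)$. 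Iterating this over all $S$-vertices reachable from $X$ in $G - S'$ and leveraging the pairwise disjointness of the internal vertices of $P_0, P_1, \ldots, P_k$ together with the special structure whereby $P_0$ shares only its $S$-endpoint with $P_1$, the argument concludes that $S'$ must in fact also separate $X$ from $V_G(J)$ in $G$; this would then make $S'$ dominate $S$ for $(X, V_G(J))$, contradicting the importance of $S$ that was hypothesized at the start.
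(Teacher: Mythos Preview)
Your overall architecture matches the paper's proof: induction on $k$, the easy case $S\subseteq V_G(J)$ via \lemmaref{importanceForSubsets}, and in the main case combining $k+1$ path-endpoint labels from \lemmaref{pathsForImportantSeparator} with a label set obtained inductively on a size-$(k-1)$ separator. The crucial difference, and the place where your argument breaks, is \emph{which} vertex you delete for the induction.

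You pick an arbitrary $s\in S\setminus V_G(J)$ and are then forced to confront the sub-case $s\notin S'$. Your proposed resolution---``the argument concludes that $S'$ must in fact also separate $X$ from $V_G(J)$''---is not established and is not true in general. All you know is that $S'$ separates $X$ from $V_G(J^*)$; hitting the $k+1$ specific path-endpoints $y_0,\ldots,y_k$ says nothing about the potentially far larger set $V_G(J)\setminus V_G(J^*)$. No amount of iterating over $S$-vertices reachable in $G-S'$ will manufacture separation from labels that were never placed into $J^*$.

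The paper avoids this sub-case entirely by choosing the deleted vertex to be the \emph{specific} $s_1\in S$ at which $P_0$ and $P_1$ meet (which does lie in $S\setminus V_G(J)$ by the construction inside \lemmaref{pathsForImportantSeparator}). One first shows, exactly as you would, that any candidate $S^*$ (a proper subset of $S$, or a dominator) must hit every $P_i$: each $P_i$ ends at some $s\in S$, $s\in R(X,S\setminus\{s\})$ by minimality, $R(X,S)\subseteq R(X,S^*)$, so a missed $P_i$ would put $y_i\in V_G(J^*)$ into $R(X,S^*)$. Now the two families $\{P_0,P_2,\ldots,P_k\}$ and $\{P_1,P_2,\ldots,P_k\}$ each consist of $k$ pairwise vertex-disjoint paths; since $|S^*|\le k$ and $S^*$ hits all $k+1$ paths, some vertex of $S^*$ lies on two of them, and the only such vertex is $s_1$. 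Hence $s_1\in S^*$ always, the sub-case $s\notin S^*$ is vacuous, and the contradiction comes cleanly from the induction hypothesis applied in $G-\{s_1\}$. Reworking your proof with $s:=s_1$ (chosen \emph{after} obtaining the paths) closes the gap.

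A minor remark on your minimality argument: the sentence ``a path in $G-\{s\}$ automatically sits inside $G-(S\setminus\{t\})$'' is false as stated; what you actually use is that minimality of $S\setminus\{s\}$ in $G-\{s\}$ gives, for each $t\in S\setminus\{s\}$, an $X$-to-$V_G(J_2)$ path in $(G-\{s\})-(S\setminus\{s,t\})=G-(S\setminus\{t\})$, which is the right object.
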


\begin{proof}
We prove the statement by induction on~$k$.

\paragraph{Base case} If~$k = 0$ then~$S$ is the empty set, which happens when the vertices carrying labels from~$J$ do not occur in the same connected component as vertices of~$X$. Take~$J^* := \emptyset$ which implies that~$V_G(J^*) = \emptyset$. It follows from the definition of important separators that~$S = \emptyset$ is an important~$(X, \emptyset)$-separator, which proves the base case.

\paragraph{Induction step} Consider the more interesting case that~$k > 0$, and let~$S = \{s_1, \ldots, s_k\}$.

We first handle the case that~$S \subseteq V_G(J)$. So assume that~$S \subseteq V_G(J)$, which implies that each vertex from~$S$ carries at least one label from~$J$. Let~$z_i \in J \cap f(s_i)$ be a label from~$J$ carried by vertex~$s_i$ for~$1 \leq i \leq k$, and define~$J^* := \{z_1, \ldots, z_k\}$. It is clear that~$J^* \subseteq J$ and~$|J^*| \leq k$. It is not hard to see that~$S \subseteq V_G(J^*) \subseteq V_G(J)$. \lemmaref{importanceForSubsets} now shows that since~$S$ is an important~$(X, V_G(J))$-separator with~$S \subseteq V_G(J)$ and~$S \subseteq V_G(J^*) \subseteq V_G(J)$, the set~$S$ must also be an important~$(X, V_G(J^*))$-separator. This concludes the proof of the lemma for the case that~$S \subseteq V_G(J)$.

In the remainder we attack the harder case that~$S \not \subseteq V_G(J)$, by considering the structure of the separator~$S$. Since~$S$ is an important~$(X, V_G(J))$-separator it follows from \lemmaref{pathsForImportantSeparator} (re-numbering the vertices from~$S$ if need be) together with the assumption that~$S \not \subseteq V_G(J)$ that there is a set of~$S-V_G(J)$ paths~$P_1, \ldots, P_k$ and an additional path~$P_0$ such that:
\begin{enumerate}
	\item path~$P_i$ for~$1 \leq i \leq k$ connects a vertex~$v_i \in V_G(J)$ carrying label~$z_i \in f(v_i) \cap J$ to vertex~$s_i$ (the~$i$-th vertex of the separator~$S$),
	\item the paths~$P_1, \ldots, P_k$ are pairwise vertex-disjoint,
	\item~$V_G(P_0) \cap V_G(P_1) = \{s_1\}$ and~$V_G(P_0) \cap V_G(P_i) = \emptyset$ for~$2 \leq i \leq k$,
	\item path~$P_0$ connects vertex~$v_0 \in V_G(J)$ carrying label~$z_0 \in f(v_0) \cap J$ to vertex~$s_1$.
\end{enumerate}
The vertex~$s_1$ which is the endpoint of the two paths~$P_0$ and~$P_1$ plays a special role in our argument. By \proposref{importantInInducedGraphs} we know that the set~$S' := S \setminus \{s_1\}$ is an important~$(X \setminus \{s_1\}, V_G(J) \setminus \{s_1\})$-separator of size~$k - 1$ in the graph~$G' := G - \{s_1\}$. Since the set~$V_G(J) \setminus \{s_1\}$ contains exactly those vertices of~$G'$ carrying a label from~$J$, it follows that~$V_G(J) \setminus \{s_1\} = V_{G'}(J)$ and therefore~$S'$ is an important~$(X \setminus \{s_1\}, V_{G'}(J))$-separator in graph~$G'$. We may therefore apply induction to find a set~$J' \subseteq J$ such that~$S'$ is an important~$(X \setminus \{s_1\}, V_{G'}(J'))$-separator in graph~$G'$ with~$|J'| \leq \sum _{i=1}^{k-1} (i+1)$. We will use the set~$J'$ to build the desired set of labels~$J^*$, as follows. 

Define~$J^* := J' \cup \{z_0, z_1, \ldots, z_k\}$, from which it is easy to see that~$|J^*| \leq \sum _{i=1}^k (i+1)$. We claim that~$S$ is an important~$(X, V_G(J^*))$-separator in graph~$G$. Since~$S$ is an important~$(X, V_G(J))$-separator and~$V_G(J^*) \subseteq V_G(J)$ (since~$J^* \subseteq J$) it follows immediately that~$S$ is also an~$(X, V_G(J^*))$-separator; hence if~$S$ is not an important separator then~$S$ is not minimal or it is dominated by some other separator. So let~$S^*$ with~$|S^*| \leq k$ be an~$(X, V_G(J^*))$-separator in graph~$G$ which is either a proper subset of~$S$, or which dominates~$S$: we will derive a contradiction.

\begin{claim}
The separator~$S^*$ must contain exactly one vertex from each path~$P_i$ for~$0 \leq i \leq k$.
\end{claim}

\begin{claimproof}
We first prove by contradiction that~$S^*$ contains at least one vertex from each path. So assume that there is a path~$P_i$ such that~$V(P_i) \cap S^* = \emptyset$, and let~$s \in S$ be the endpoint of path~$P_i$ in the set~$S$. Since~$S$ is an important~$(X, V(J))$-separator in~$G$, we know by the definition of important separators that~$S$ is a \emph{minimal}~$(X, V(J))$-separator, which implies by \lemmaref{minimalSeparatorsPaths} that~$s \in R(X, S \setminus \{s\})$. As the next step we will show that~$s \in R(X, S^*)$. 
\begin{itemize}
	\item If~$s \in X$ then the assumption that~$V(P_i) \cap S^* = \emptyset$ proves together with~$s \in V(P_i)$ that~$s \not \in S^*$ and therefore~$s \in R(X, S^*)$. 
	\item If~$s \not \in X$, then since~$s \in R(X, S \setminus \{s\})$ there is a simple path~$P_{x-s}$ from a vertex~$x \in X \setminus (S \setminus \{s\}) = X \setminus S$ to the vertex~$s$ in the graph~$G - (S \setminus \{s\})$, and since~$s \not \in X$ this path must contain at least two vertices. Consider the subpath~$P_{x-u}$ of~$P_{x-s}$ which leads from~$x$ to the predecessor~$u$ of vertex~$s$ on path~$P_{x-s}$. Then~$s \not \in V(P_{x-u})$ and path~$P_{x-u}$ does not use any vertices of~$S \setminus \{s\}$ (since it is a subpath of~$P_{x-s}$ in~$G - (S \setminus \{s\})$), and therefore~$V(P_{x-u}) \cap S = \emptyset$, which together with the fact that~$x \in X \setminus S$ implies that~$V(P_{x-u}) \subseteq R(X, S)$. Since~$S^*$ is either a proper subset of~$S$, or an~$(X, V_G(J^*))$-separator which dominates~$S$, we must have~$R(X, S) \subseteq R(X, S^*)$. By combining these two facts we see that~$V(P_{x-u}) \subseteq R(X, S) \subseteq R(X, S^*)$, and since~$u$ is adjacent to~$s$ with~$s \not \in S^*$ this then proves that~$s \in R(X, S^*)$.
\end{itemize}
We now know that~$s \in R(X,S^*)$ and that path~$P_i$ starts at vertex~$s$ with~$V(P_i) \cap S^* = \emptyset$; this shows that~$V(P_i) \subseteq R(X, S^*)$. But the endpoint~$v_i~$ of path~$P_i$ carries the label~$z_i \in J^*$ and therefore~$v_i \in V_G(J^*)$ and~$v_i \in R(X, S^*)$; but this then proves that~$S^*$ is not an~$(X, V_G(J^*))$-separator, a contradiction. Hence we know that~$S^*$ contains at least one vertex from each path~$P_i$ for~$0 \leq i \leq k$.

To complete the proof we show that~$S^*$ cannot contain more than one vertex from each path~$P_i$ with~$0 \leq i \leq k$. By the structure of the paths we know that the set~$\P_1 := \{P_0, P_2, \ldots, P_k\}$ contains~$k$ paths which are mutually vertex-disjoint, and the set~$\P_2 := \{P_1, P_2, \ldots, P_k\}$ also contains~$k$ paths which are mutually vertex-disjoint. Since we already showed that~$S^*$ of size~$k$ must contain at least one vertex from each path in the set~$\P_1$ and the paths in that set are vertex-disjoint, it must contain exactly one vertex from each path in the set~$\P_1$. The same argument shows that~$S^*$ must contain one vertex from each of the~$k$ disjoint paths in~$\P_2$. But~$\P_1$ and~$\P_2$ together contain all paths~$P_i$ for~$0 \leq i \leq k$, so we have shown that~$S^*$ contains exactly one vertex on each path.
\end{claimproof}

\begin{claim}
Separator~$S^*$ must contain vertex~$s_1$.
\end{claim}

\begin{claimproof}
Recall that the paths~$P_i$ are chosen such that paths~$P_0$ and~$P_1$ only intersect at~$s_1$, and other pairs do not intersect at all. By the previous claim~$S^*$ must contain exactly one vertex from each~$P_i$. Since~$|S^*| \leq k$ and there are~$k+1$ paths, the only way this can be done is if~$S^*$ contains at least one vertex which lies on multiple paths. But~$s_1$ is the only such vertex, hence~$s_1 \in S^*$.
\end{claimproof}

Now that we have some more information about the structure of potential sets~$S^*$ we will finish the proof by showing that~$S^*$ cannot exist.

\begin{itemize}
	\item Assume first that~$S$ is not a minimal~$(X, V(J^*))$-separator because there is a proper subset~$S^* \subsetneq S$ which is also an~$(X, V(J^*))$-separator. Since~$|S| = k$ this would imply~$|S^*| < k$, but since the first claim shows that~$S^*$ must contain one vertex from each path~$P_i$ and the paths~$P_1, \ldots, P_k$ are mutually vertex-disjoint, no set of size less than~$k$ can satisfy this requirement. Hence the set~$S$ must be a \emph{minimal}~$(X, V(J^*))$-separator.
	\item For the remaining case, assume that~$S$ is a minimal but not important $(X, V(J^*))$-separator because it is dominated by a separator~$S^*$, which implies that~$R_G(X, S) \subsetneq R_G(X, S^*)$. Since~$s_1 \in S \cap S^*$ by the second claim, we know that~$R_G(X,S) = R_{G'}(X \setminus \{s_1\}, S \setminus \{s_1\})$ and similarly~$R_G(X,S^*) = R_{G'}(X \setminus \{s_1\}, S^* \setminus \{s_1\})$. Therefore~$R_{G'}(X \setminus \{s_1\},S \setminus \{s_1\}) \subsetneq R_{G'}(X \setminus \{s_1\},S^* \setminus \{s_1\})$. By \lemmaref{importantInInducedGraphs} it follows that~$S^* \setminus \{s_1\}$ is an~$(X \setminus \{s_1\}, V_{G'}(J^*))$-separator in~$G'$, and since~$J' \subseteq J^*$ this implies that~$S^* \setminus \{s_1\}$ is an~$(X \setminus \{s_1\}, V_{G'}(J'))$-separator in~$G'$ with~$R_{G'}(X \setminus \{s_1\},S \setminus \{s_1\}) \subsetneq R_{G'}(X \setminus \{s_1\},S^* \setminus \{s_1\})$ and~$|S^* \setminus \{s_1\}| \leq |S \setminus \{s_1\}| = |S'|$; but this then proves that~$S'$ is not an important~$(X \setminus \{s_1\}, V_{G'}(J'))$-separator in~$G'$ which contradicts the induction hypothesis which was invoked earlier on in the proof. Hence such a set~$S^*$ cannot dominate the separator~$S$ with respect to separation of~$(X, V(J^*))$.
\end{itemize}
We have seen that the assumption that~$S$ is not an important~$(X, V(J^*))$-separator leads to a contradiction; this concludes the proof.
\qed
\end{proof}

\begin{definition} \label{singleTerminalCutLabels}
Let~$(G, L, f)$ be a labeled graph and let~$t \in V(G)$ be a distinguished terminal vertex in~$G$. If~$S \subseteq V(G)$ is a subset of vertices then the set of \emph{labels reachable from~$t$ in~$G - S$} is defined as:
\begin{equation*}
\L(t, S) := \bigcup _{v \in R(t, S)} f(v).
\end{equation*}
\end{definition}

\begin{lemma} \label{equivalentImportantSeparatorLemma}
Let~$(G, L, f)$ be a labeled graph with a terminal vertex~$t \in V(G)$ and let~$S \subseteq V(G)$ be a subset of vertices. Let~$J := L \setminus \L(t, S)$ be the labels which are unreachable from~$t$ in~$G - S$. Then there is an important~$(t, V_G(J))$-separator~$S'$ such that~$|S'| \leq |S|$ and the sets~$S$ and~$S'$ separate the same set of labels from~$t$:~$\L(t, S) = \L(t, S')$.
\end{lemma}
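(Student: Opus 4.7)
The plan is to transform~$S$ into an important separator while controlling the labels reachable from~$t$. First observe that~$S$ is already a~$(t, V_G(J))$-separator: any vertex~$v \in V_G(J)$ carries some label outside~$\L(t, S)$, so~$v \notin R(t, S)$, and hence $V_G(J) \cap R(t, S) = \emptyset$. To produce~$S'$ I would use a two-level extremal selection: among all~$(t, V_G(J))$-separators~$S''$ satisfying $|S''| \leq |S|$ and~$R(t, S) \subseteq R(t, S'')$, first pick one whose reachable set~$R(t, S'')$ is inclusion-maximal, and subject to that pick one of minimum cardinality. The candidate family is non-empty (it contains~$S$) and finite, so such an~$S'$ exists.

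To show~$S'$ is important I would verify minimality and non-dominance separately, exploiting the two levels of the extremal choice. For minimality, if some vertex~$v \in S'$ could be removed while~$S' \setminus \{v\}$ remained a~$(t, V_G(J))$-separator, then since shrinking the cut can only enlarge reachability we would have~$R(t, S') \subseteq R(t, S' \setminus \{v\})$; by the first extremality step these reachable sets coincide, but then~$S' \setminus \{v\}$ violates the tie-breaking minimum-cardinality step, a contradiction. For non-dominance, any $(t, V_G(J))$-separator~$S^\star$ dominating~$S'$ satisfies~$|S^\star| \leq |S'| \leq |S|$ and~$R(t, S') \subsetneq R(t, S^\star)$, so~$R(t, S) \subseteq R(t, S^\star)$ and~$S^\star$ is a candidate strictly beating~$S'$ on the first criterion, again a contradiction.

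Finally~$\L(t, S) = \L(t, S')$ follows from two inclusions: the reachability inclusion~$R(t, S) \subseteq R(t, S')$ built into the selection yields~$\L(t, S) \subseteq \L(t, S')$, while the separator property of~$S'$ forbids labels of~$J$ from being reached, giving~$\L(t, S') \subseteq L \setminus J = \L(t, S)$. The main obstacle I anticipate is precisely the two-level nature of the selection: a single extremal criterion is not enough, since inclusion-maximality of~$R(t, S')$ alone does not prevent the existence of a removable vertex in~$S'$ that lies outside the reachable set, and bare minimality of~$S'$ does not prevent the existence of a dominating separator. Tying the two criteria together lexicographically is what makes the argument go through cleanly; an essentially equivalent alternative would be to start from~$S$, minimize, and iteratively replace by a dominating separator, using strict monotone growth of~$R(t, \cdot)$ within a finite vertex set to guarantee termination.
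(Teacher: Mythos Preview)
Your proof is correct and follows essentially the same idea as the paper's: both exploit that~$S$ is already a~$(t,V_G(J))$-separator and then push it toward an important one while preserving the set of reachable labels. The paper phrases this as an iterative ``minimize or replace by a dominating separator'' process (exactly the alternative you mention at the end), whereas you package it as a single two-level extremal selection; the arguments are equivalent, and your version has the minor advantage of making termination a non-issue.
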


\begin{proof}
Assume the conditions in the statement of the lemma hold. It follows from the definition of~$\L$ that the set~$S$ must be a~$(t, V_G(J))$-separator. Hence if~$S$ is an important~$(t, V_G(J))$-separator then taking~$S' = S$ satisfies all conditions of the lemma, and we are done. So assume for the remainder that~$S$ is not an important~$(t, V_G(J))$-separator because~$S$ is not a minimal separator, or because there is a~$(t, V_G(J))$-separator which dominates~$S$. If~$S$ is not minimal then let~$S'$ be a proper subset which is still a separator; if~$S$ is dominated then let~$S'$ a separator which dominates~$S$. In both cases it is easy to see from the definitions that~$R(t, S) \subseteq R(t, S')$ and~$|S'| \leq |S|$. This implies that for every vertex~$v \in V_G(J)$ which is reachable from~$t$ in~$G - S$, this vertex is still reachable in~$G - S'$ and therefore~$\L(t, S) \subseteq \L(t, S')$. Now observe that since~$S'$ is a~$(t, V_G(J))$-separator, by the definition of separation we know that no vertices carrying a label of~$J$ can be reachable from~$t$ after deleting~$S'$. Since the set~$J$ contains exactly those labels which were not reachable from~$t$ in~$G - S$, we know that none of these labels are reachable from~$t$ in~$G - S'$. Hence we must also have~$\L(t, S') \subseteq \L(t, S)$, which together with our earlier fact shows~$\L(t, S) = \L(t, S')$.

If~$S'$ is an \emph{important}~$(t, V_G(J))$-separator then the sets~$S'$ and~$J$ satisfy all conditions of the lemma, and we are done. If~$S'$ is not important then we can repeat the argument to find a separator~$S''$ which is a subset of~$S'$ or dominates~$S'$, and for which~$\L(t, S') = \L(t, S'')$. We can repeat this process until we have found an important separator, and since we either decrease the size of the separator or move to a dominating separator at each step, the process terminates after a finite number of steps. This proves the lemma.
\qed
\end{proof}

\begin{definition}
Let~$(G, L, f)$ be a labeled graph and let~$T = t_1, \ldots, t_n$ be a sequence of distinct terminal vertices in~$G$. The \emph{cut characteristic}~$\K(S, T)$ of a set~$S \subseteq V(G)$ with respect to the terminals~$T$ is an~$n$-dimensional vector whose elements are subsets of~$L$, and which is defined as:
\begin{equation*}
\K(S, T) := \left ( \L(t_1, S), \L(t_2, S), \ldots, \L(t_n, S) \right ).
\end{equation*}
Define the set of distinct cut characteristics~$\K^m(T)$ for separators of size at most~$m \geq 1$ as:
\begin{equation*}
\K^m(T) := \left \{ \K(S, T)  \middlemid S \in \binom{V(G)}{\leq m} \right \}.
\end{equation*}
\end{definition}

The final goal of this section is to bound~$\K^m(T)$ for arbitrary sets of terminals~$T$. As the next step we will show how to bound this term when~$T = \{t\}$ is a singleton.

\begin{lemma} \label{singleTerminalBound}
Let~$(G, L, f)$ be a labeled graph and let~$t$ be a distinguished terminal vertex in~$G$. Then~$|\K^m(\{t\})| \leq \binom{|L|}{\leq m'} 4^m$, where~$m' = \mbound$.
\end{lemma}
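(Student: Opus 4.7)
The proof is a straightforward counting argument that combines Lemma~\ref{equivalentImportantSeparatorLemma} (which reduces every separator to an important one with the same cut characteristic) with Lemma~\ref{smallWitnessImportantSeparator} (which shows that every important separator is certified as important by a small subset of the target labels) and the bound of~$4^m$ on the number of important separators of size at most~$m$.

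\textbf{Step 1: reduce to important separators.} Let~$S\subseteq V(G)$ be arbitrary with~$|S|\leq m$. Set~$J:=L\setminus\L(t,S)$; then~$S$ is a~$(t,V_G(J))$-separator by definition of~$J$. Apply Lemma~\ref{equivalentImportantSeparatorLemma} to obtain an important~$(t,V_G(J))$-separator~$S'$ with~$|S'|\leq|S|\leq m$ and~$\L(t,S')=\L(t,S)$. Thus every cut characteristic~$\L(t,S)$ realised by some size-$\leq m$ separator is also realised by an important~$(t,V_G(J))$-separator of size at most~$m$, for an appropriate~$J\subseteq L$.

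\textbf{Step 2: shrink the label witness.} Apply Lemma~\ref{smallWitnessImportantSeparator} to~$S'$ and~$J$: there exists~$J^*\subseteq J$ with
\[
|J^*|\ \leq\ \sum_{i=1}^{|S'|}(i+1)\ =\ \frac{|S'|(|S'|+3)}{2}\ \leq\ \frac{m(m+3)}{2}\ =\ m'
\]
such that~$S'$ is an important~$(t,V_G(J^*))$-separator. So~$\L(t,S)$ is the cut characteristic of some important~$(t,V_G(J^*))$-separator of size at most~$m$, where~$J^*$ is a subset of~$L$ of size at most~$m'$.

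\textbf{Step 3: count.} Each candidate label set~$J^*\in\binom{L}{\leq m'}$ can be chosen in at most~$\binom{|L|}{\leq m'}$ ways, and by Lemma~\ref{numberImportantSeparators} the number of important~$(t,V_G(J^*))$-separators of size at most~$m$ is at most~$4^m$. Every cut characteristic in~$\K^m(\{t\})$ arises, by Steps~1 and~2, as~$\L(t,S')$ for at least one such pair~$(J^*,S')$, and distinct characteristics come from distinct~$S'$ (a fortiori distinct pairs). Therefore
\[
|\K^m(\{t\})|\ \leq\ \binom{|L|}{\leq m'}\cdot 4^m,
\]
which is the desired bound. No step is particularly delicate: the only nontrivial ingredients are the two earlier lemmas, which have already done the combinatorial work; once they are in place the argument is essentially bookkeeping, with the arithmetic~$\sum_{i=1}^{m}(i+1)=m(m+3)/2$ explaining the exponent~$m'$.
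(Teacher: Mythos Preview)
Your proof is correct and follows essentially the same approach as the paper: reduce an arbitrary separator to an important one with the same reachable-label set via Lemma~\ref{equivalentImportantSeparatorLemma}, then invoke Lemma~\ref{smallWitnessImportantSeparator} to obtain a small witness label set~$J^*$, and finally count pairs~$(J^*,S')$ using the~$4^m$ bound on important separators. The paper packages the same steps by first defining the set~$\H$ of all size-$\leq m$ separators that are important for \emph{some}~$J^*\in\binom{L}{\leq m'}$ and then bounding~$|\H|$, but the logic is identical.
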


\begin{proof}
Assume the conditions in the statement of the lemma hold. We will define a set~$\H \subseteq \binom{V(G)}{\leq m}$ of bounded size, and show that for every~$S \in \binom{V(G)}{\leq m}$ there is a~$S' \in \H$ such that~$\L(t, S) = \L(t, S')$, which will then imply a bound on~$|\K^m(\{t\})|$. Let~$m' := \mbound$. Now define \H as follows:
\[
\H := \left \{ S \in \binom{V(G)}{\leq m} \middlemid \exists J \in \binom{L}{\leq m'}: S \mbox{ is an important~$(t, V_G(J))$-separator}  \right \}.
\]

We will show that the size of \H is bounded independently of the number of vertices in the graph~$G$. Consider some set of labels~$J' \in \binom{L}{\leq m'}$ and the vertices~$V_G(J')$ on which those labels appear. By \lemmaref{numberImportantSeparators} the number of important~$(t, V(J'))$-separators of size at most~$m$ is bounded by~$4^m$. Hence the number of separators in the set \H which are added because of this~$J'$ is at most~$4^m$. Since the number of different options for~$J$ is~$|\binom{L}{\leq m'}|$ it follows that~$|\H| \leq \binom{|L|}{\leq m'} 4^m$. It is easy to see that the set~$\{ \L(t, S) \mid S \in \H \}$ is not larger than~$\H$. To complete the proof we will therefore show that this is a superset of~$\K^m(\{t\})$. From the definition of the set~$\K^m(\{t\})$ it suffices to show that for every~$S \in \binom{V(G)}{\leq m}$ there is a set~$S' \in \H$ such that~$\L(t, S) = \L(t, S')$, which will be the subject of the remainder of the proof.

So let~$S \in \binom{V(G)}{\leq m}$. Now take~$J := L \setminus \L(t, S)$; it follows that~$S$ is a~$(t, V_G(J))$-separator. By \lemmaref{equivalentImportantSeparatorLemma} we know that there is an \emph{important}~$(t, V_G(J))$-separator $S'$ with~$|S'| \leq |S| \leq m$ and~$\L(t, S) = \L(t, S')$. By \lemmaref{smallWitnessImportantSeparator} there exists a set~$J^* \subseteq J$ satisfying~$|J^*| \leq \sum _{i=1}^{|S|} (i+1) \leq \sum _{i=1}^m (i+1) = m'$ such that~$S'$ is an important~$(t, V_G(J^*))$-separator. But since~$|J^*| \leq m'$ we must have~$J^* \in \binom{L}{\leq m'}$, and therefore~$S' \in \H$. Since~$\L(t, S) = \L(t, S')$ and~$S' \in \H$ this proves that~$\{ \L(t, S) \mid S \in \H\}$ is indeed a superset of~$\K^m(\{t\})$, and since we showed earlier that~$|\H| \leq \binom{|L|}{\leq m'} 4^m$ this concludes the proof of the lemma.
\qed
\end{proof}

\begin{lemma} \label{multiterminalSeparatorBound}
If~$(G, L, f)$ is a labeled graph and~$T = t_1, \ldots, t_n$ is a sequence of distinct terminal vertices in~$G$ then the number of distinct cut characteristics for separators of size at most~$m$ is polynomial in~$|L|$ for fixed values of~$m$ and~$n$:~$|\K^m(T)| \leq (\binom{|L|}{\leq m'} 4^m)^n$, where~$m' = \mbound$.
\end{lemma}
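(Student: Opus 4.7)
The plan is to bound the number of cut characteristics by a straightforward coordinate-wise product argument that lifts the single-terminal bound from Lemma~\ref{singleTerminalBound}. Recall that by definition $\K(S,T) = (\L(t_1,S),\ldots,\L(t_n,S))$, so the $i$-th coordinate of any cut characteristic for terminals $T$ is simply a cut characteristic for the single terminal $\{t_i\}$. Hence $\K^m(T)$ embeds into the Cartesian product of its per-coordinate projections, and it suffices to bound each projection separately.

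Concretely, I would introduce for each $i \in \{1,\ldots,n\}$ the set $\K^m_i := \{\L(t_i,S) \mid S \in \binom{V(G)}{\leq m}\}$, which is exactly $\K^m(\{t_i\})$ in the notation of Definition~\ref{cutCharacteristicDef}. Lemma~\ref{singleTerminalBound}, applied individually to the labeled graph $(G,L,f)$ with the single terminal $t_i$, then gives $|\K^m_i| \leq \binom{|L|}{\leq m'} 4^m$ where $m' = m(m+3)/2$. Since any element of $\K^m(T)$ arises as $\K(S,T)$ for some $|S|\leq m$, its $i$-th coordinate necessarily lies in $\K^m_i$, so $\K^m(T) \subseteq \K^m_1 \times \cdots \times \K^m_n$. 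Taking sizes yields $|\K^m(T)| \leq \prod_{i=1}^n |\K^m_i| \leq \left(\binom{|L|}{\leq m'} 4^m\right)^n$, as required.

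I do not anticipate any real obstacle: the whole work is already packed into Lemma~\ref{singleTerminalBound}, and the step from one terminal to $n$ terminals is purely a product bound. The only point worth flagging is that this bound is in general loose — the $n$ coordinates of a realized cut characteristic must come from a single common separator $S$, so most tuples in $\K^m_1 \times \cdots \times \K^m_n$ are not actually achievable — but a tight count is not needed here, only the fact that the bound is polynomial in $|L|$ for fixed $n$ and $m$, which is immediate since $\binom{|L|}{\leq m'}$ is a polynomial in $|L|$ of degree $m'$ and $4^m$ is constant.
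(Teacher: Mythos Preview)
Your proposal is correct and is essentially identical to the paper's own proof: the paper also observes that the $i$-th column of any tuple in $\K^m(T)$ lies in $\K^m(\{t_i\})$, applies Lemma~\ref{singleTerminalBound} to bound each column by $\binom{|L|}{\leq m'}4^m$, and then takes the product over the $n$ columns. Your remark that the product bound is loose (since all coordinates must be realized by a common $S$) is a nice addition not present in the paper, but the argument itself is the same.
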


\begin{proof}
Assume the conditions in the lemma statement to hold. The set~$\K^m(T)$ contains~$n$-tuples of sets of labels. If we look at the set of all such~$n$-tuples and restrict our attention to column number~$i$ for~$1 \leq i \leq n$, then the elements occurring in that column are exactly the elements of the set~$\K^m(\{t_i\})$. \lemmaref{singleTerminalBound} shows that~$|\K^m(\{t_i\})| \leq \binom{|L|}{\leq m'} 4^m$. So~$\K^m(T)$ contains~$n$-tuples where the elements of the~$i$-th column are taken from a domain with at most~$\binom{|L|}{\leq m'} 4^m$ different members; this shows that the number of distinct tuples is at most~$(\binom{|L|}{\leq m'} 4^m)^n$, which concludes the proof.
\qed
\end{proof}

\thmref{simpleCutCharacteristicBound} follows directly from \lemmaref{multiterminalSeparatorBound} by simple formula manipulation.
}


\combinatorialproperties

\section{Omitted proofs of Section~\ref{section:kernelization}}

\subsection{Proof of Lemma~\ref{lemma:twocloringextension}}

\newcommand{\xpathfigure}{
\begin{figure}[t]
\centering
\begin{tikzpicture}[scale=1,thick]
\coordinate (p) at (0,3);
\coordinate (q) at (0,2);
\coordinate (r) at (0,1);
\coordinate (s) at (0,0);
\coordinate (a) at (2,3);
\coordinate (b) at (2,2);
\coordinate (c) at (2,1.5);
\coordinate (d) at (2,1);
\coordinate (e) at (2,0.5);
\coordinate (f) at (2,0);
\coordinate (g) at (4,2.5);
\coordinate (h) at (4,0.5);

\draw (q) -- (r);
\draw (p) -- (a);
\draw (q) -- (b);
\draw[dashed] (a) -- (g) -- (b);
\draw (q) -- (b) -- (c) -- (r);
\draw (r) -- (e) -- (s);
\draw (r) -- (d);
\draw[dashed] (d) -- (h) -- (f);
\draw (f) -- (s);
\draw (g) -- (h);


\draw[dotted] (0,3.5) .. controls (0.5,3.5) .. (0.5,3) -- (0.5,0) .. controls (0.5,-0.5) .. (0,-0.5) .. controls (-0.5,-0.5) .. (-0.5,0) -- (-0.5,3) .. controls (-0.5,3.5) .. (0,3.5);

\draw[dotted] (4,3.5) .. controls (4.5,3.5) .. (4.5,3) -- (4.5,0) .. controls (4.5,-0.5) .. (4,-0.5) -- (2,-0.5) .. controls (1.5,-0.5) .. (1.5,0) -- (1.5,3) .. controls (1.5,3.5) .. (2,3.5) -- cycle;

\draw (0,2.5) node {$X$};
\draw (3.8,3) node {$G-X$};

\draw[color=white!50!black] (g) +(0.2,0.2) -- +(-0.2,0.2) --  +(-0.2,-0.2) -- +(0.2,-0.2) -- cycle;
\draw[color=white!50!black] (h) +(0.2,0.2) -- +(-0.2,0.2) --  +(-0.2,-0.2) -- +(0.2,-0.2) -- cycle;

\drawgreyvertex{(p)}
\drawgreyvertex{(q)}
\drawgreyvertex{(r)}
\drawgreyvertex{(s)}
\drawemptyvertex{(a)}
\drawemptyvertex{(b)}
\drawemptyvertex{(h)}
\drawblackvertex{(c)};
\drawblackvertex{(d)};
\drawblackvertex{(e)};
\drawblackvertex{(f)};
\drawblackvertex{(g)};

\draw (g) +(-1,0) node {$P_1$};
\draw (h) +(-1,0) node {$P_2$};

\draw (p) +(-0.25,0) node {$p$};
\draw (q) +(-0.25,0) node {$q$};
\draw (r) +(-0.25,0) node {$r$};
\draw (s) +(-0.25,0) node {$s$};
\end{tikzpicture}
\caption{\label{figure:xpaths} A graph~$G$ and an odd cycle transversal~$X$. Suppose~$M=\{\{r,s\}\}$. The dashed path~$P_1$ is an important~$p-q$~$X$-path. The dashed path~$P_2$ is a non-important~$r-s$~$X$-path. Further, the two vertices marked by gray boxes intersect all important~$X$-paths.}
\end{figure}
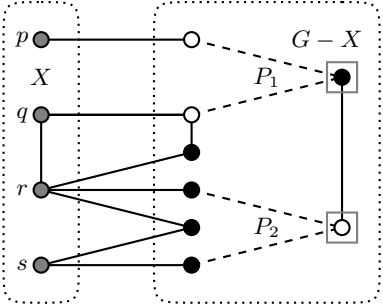}

\xpathfigure

\newcommand{\twocoloringextensionproof}{
\begin{proof}
We use the following procedure to try to extend~$c$ to a proper~$2$-coloring of~$G$. Pick a colored vertex~$u$ that has an uncolored neighbor~$v$ and define~$c(v):=1-c(u)$. For constructing the claimed path, let us orient the edge~$\{u,v\}$ from~$u$ to~$v$. Repeat until no colored vertex has an uncolored neighbor. (We will later color the vertices that are not connected to a vertex of~$S$.)

We observe that every colored vertex has a unique directed path from~$S$ to itself (since each vertex has at most one incoming edge) and that all directed edges~$(u,v)$ we have~$c(u)\neq c(v)$.

Let us assume that this extension of the coloring fails and that there is an edge~$\{u,v\}$ with~$c(u)=c(v)$. Since we extended a proper coloring for~$G[S]$ it follows that at least one of the vertices~$u$ and~$v$ is not in~$S$; w.l.o.g.~$v\in V(G) \setminus S$. Let~$P_u$ and~$P_v$ be the unique paths that connect~$S$ to~$u$ and~$v$ (note that possibly~$P_u=(u)$ in the case that~$u\in S$). We note that all edges along the two paths are properly colored and that all vertices except for the starting vertices in~$S$ are from~$V(G) \setminus S$. Furthermore, the two paths can at most overlap in their endpoints in~$S$: indeed, if they share a vertex~$w\in V(G) \setminus S$, then it can be easily seen that~$G-S$ must contain an odd cycle, contradicting the assumption that~$G-S$ is bipartite (the subpaths from~$w$ to~$u$ and~$v$ are contained in~$G-S$ and they are properly colored).

Thus concatenating the two paths via the edge~$\{u,v\}$ we obtain a path~$P$ between two vertices~$p,q\in S$ which is simple except possibly for~$p=q$. It can be easily seen that one of the two conditions on the parity of the length of~$P$ must hold. E.g., if the path has~$2t+1$ internal vertices, then it has~$2t+1$ directed edges and as well as the edge~$\{u,v\}$. This corresponds to a chain of~$2t+1$ disequalities and one equality (recall that~$c(u)=c(v)$) between the colors of~$p$ and~$q$, giving~$c(p)\neq c(q)$.

If the procedure succeeds then we obtain a proper~$2$-coloring of all connected components of~$G$ that intersect~$S$. By assumption all other components are bipartite and we can efficiently find proper $2$-colorings for them. This completes the proof.
\qed
\end{proof}}

\twocoloringextensionproof

\subsection{Proof of Lemma~\ref{hittingsetandannotations}}

\newcommand{\hittingsetandannotationsproof}{
\begin{proof}
Let~$V(G) \setminus X=P\cup Q$ be a bipartition of~$G-X$.
We point out that any odd~$X$-path between two vertices~$u,v\in X$ must be of the form~$(u,x,\ldots,y,v)$ where either~$x,y\in P$ or~$x,y\in Q$, since~$G-X$ is bipartite. Similarly, for any even~$X$-path~$(u,x,\ldots,y,v)$ we must have either~$x\in P$ and~$y\in Q$ or~$x\in Q$ and~$y\in P$.

We call Algorithm~\ref{algorithm:computehittingset} as ComputeHittingSet$(G,X,\ell)$. It computes for each pair~$\{u,v\}\in \binom{X}{2}$ vertex sets~$PP$,~$QQ$,~$PQ$, and~$QP$ intersecting all~$u-v$~$X$-paths that enter and leave~$G-X$ in the respective side of the bipartition (e.g.,~$PP$ intersects all paths~$(u,x,\ldots,y,v)$ with~$x,y\in P$).

Clearly, if~$PP>\ell$ or~$QQ>\ell$ then there are more than~$\ell$ odd~$u-v$~$X$-paths and~$\{u,v\}$ is correctly added to~$B$: we rely on the fact that by Menger's theorem, the maximum number of internally vertex-disjoint~$u-v$ paths equals the size of a minimum~$u-v$ vertex-cut. Otherwise,~$PP\cup QQ$ is a set of at most~$2\ell$ vertices that intersects all odd~$u-v$~$X$-paths. The above observation that odd~$u-v$~$X$ paths can only enter and leave~$G-X$ in these two ways is crucial here. The analog, for even paths, is true when~$PQ>\ell$ or~$QP>\ell$.

Then, for all vertices~$v\in X$, sets~$PQ$ are computed in the same way. It is easy to see that the same argumentation applies there and that computation of sets~$QP$ is not necessary. Note that each such path is indeed a~$p-p$~$X$-path since it enters and leaves~$G-X$ in different vertices. Clearly, the total size of~$H$ is bounded by~$4\ell\cdot \left(\binom{|X|}{2}+|X|\right)\leq 4\ell\cdot|X|^2$.

Finally, it is easy to see that this computation can be performed in polynomial time. The main work lies in the subroutine calls to VertexCut$(G,P,Q;u,v,S,T)$ (with~$S,T\in\{P,Q\}$) which can be implemented as follows: Make an auxiliary graph~$H$ by taking~$G[P\cup Q]$ and adding a source~$s$ adjacent to all neighbors of~$u$ in~$S$ and a sink~$t$ adjacent to the neighbors of~$v$ in~$T$. Then use a polynomial-time algorithm to compute a maximum set of internally vertex-disjoint~$s-t$ paths in~$H$. Schrijver~\cite[Theorem 9.3]{Schrijver03} gives an~$\BigO(nm)$ time algorithm providing a maximum packing of paths as well as a corresponding vertex-cut of the same cardinality.
\qed
\end{proof}}

\hittingsetandannotationsproof

\subsection{Proof of Lemma~\ref{lemma:makeannotations}}

\newcommand{\makeannotationsproof}{
\begin{proof}
We start from the obviously equivalent instance~$(G,X,M,\ell)$ with~$M=\emptyset$. Each transformation can be easily seen to be correct.

First, we consider single vertices~$p\in X$. If~$p\in C$ then there are more than~$\ell$ even length~$p-p$~$X$-paths in~$G$ that are vertex-disjoint (by definition~$p$ is not part of the~$X$-path). Hence there are more than~$\ell$ odd cycles in~$G$ which pairwise intersect only in~$p$, and therefore any odd cycle transversal of~$G$ of size at most~$\ell$ must contain~$p$. Therefore, we may delete~$p$ from~$G$ and decrease~$\ell$ by~$1$. This does not affect~$X$-paths between other vertices since~$p\notin V(G) \setminus X$. 

Next, we consider pairs of vertices~$\{p,q\}\in\binom{X}{2}$. If~$\{p,q\}\in A$ then there are more than~$\ell$ even length~$X$-paths. Hence, no deletion of at most~$\ell$ vertices can remove all those paths implying that whenever there is a set~$S \subseteq V(G)$ of size at most~$\ell$ such that~$G - S$ is bipartite, then if~$p, q \not \in S$ then vertices~$p,q$ must receive different colors in any proper $2$-coloring of~$G - S$, since at least one~$p-q$ path with an even number of vertices will not be intersected by~$S$: hence we may add an edge between~$p$ and~$q$ without changing the instance. Similarly, if~$\{p,q\}\in B$ then we may add the annotation~$\{p,q\}$ to~$M$.

The set~$X'$ is simply what is left of~$X$ after the vertex deletions. It is easy to see that all important~$p-q$~$X'$-paths in the obtained instance must be such that~$p,q\notin C$ and (depending on their parity)~$\{p,q\}\notin A$ or~$\{p,q\}\notin B$, since we added annotations respectively deleted the vertices in the other cases. Hence,~$H$ is a hitting set for all important~$X'$-paths.
\qed
\end{proof}}

\makeannotationsproof

\subsection{Proof of Lemma~\ref{protrusionDecomposition}}

\newcommand{\protrusiondecompositionproof}{
\begin{proof}
Let~\T be rooted at an arbitrary vertex. We will mark bags (nodes) of~\T to select a set~$S'$. First, for every~$v\in S$ we mark a bag of~\T that contains~$v$ (i.e., at most~$|S|$ bags).

Second, for any two marked bags, we also (exhaustively) mark their lowest common ancestor in~\T by the following procedure. Let all bags marked so far be \emph{active}; there are at most~$|S|$ such bags. Identify the lowest bag, say~$B$, that is a common ancestor of at least two active bags, say~$B_1$ and~$B_2$. Mark~$B$ (if it was unmarked) and set it active. Furthermore, set all other marked bags in the subtree rooted at~$B$ to \emph{inactive} (this includes~$B_1$ and~$B_2$).
Observe that any lowest common ancestor of a bag in the subtree and some bag~$B'$ outside the subtree is also a lowest common ancestor of~$B$ and~$B'$. Hence it suffices to proceed only for the active bags. Since the number of active bags is reduced by at least one each time that another bag is marked, we mark at most~$|S|$ additional bags.

Let~\B denote the set of marked bags; clearly~$|\B|\leq 2|S|$. Now, let~$S'$ denote the set of all vertices that are in a marked bag of~\T; a total of at most~$2|S|(w+1)$ vertices. Clearly~$S\subseteq S'$ since we marked a bag for each~$v\in S$. To establish the lemma it remains to prove that the number of neighbors that a connected component of~$G - S'$ has in the set~$S'$ is appropriately bounded.

So let~$C$ be an arbitrary connected component of~$G-S'$ and let~$s \in S'$ be a neighbor of~$C$ in~$S'$. There must be a connected component~$\T_C$ of~$\T-\B$ that contains all vertices of~$C$. Furthermore,~$s$ must be contained in at least one bag of~$\T_C$. Since~$s$ is also contained in at least one marked bag, there must be a marked bag that is adjacent to~$\T_C$ in~$\T$ which contains~$s$. The reason is that all occurrences of a vertex in bags of~$\T$ must be connected and that all bags adjacent to~$\T_C$ are marked (as~$\T_C$ is a connected component of~$\T-\B$). This argument shows that all neighbors of~$C$ in the set~$S'$ must occur in marked bags adjacent to~$\T_C$. To be able to bound the number of such neighbors, we show that the number of adjacent marked bags is at most two.

Let us assume for contradiction that~$\T_C$ is adjacent to at least three marked bags. It follows that at least two of those bags are children of~$\T_C$ (i.e., they are adjacent to~$\T_C$ in~\T and they are below~$\T_C$ with respect to the root of~\T). This, however, would imply that~$\T_C$ must contain the lowest common ancestor of two marked bags; a contradiction.

Hence~$\T_C$ is adjacent to at most two marked bags and, therefore,~$C$ can have at most~$2(w+1)$ neighbors in~$S'\supseteq S\cup S'$.

For a bound of at most~$2w$ neighbors consider the following: a neighbor~$s$ of~$C$ must be in a marked bag, say~$B\in\B$, adjacent to~$\T_C$, but it must also be contained in a bag together with a vertex of~$C$. Let~$B'$ denote a bag of~$\T_C$ that contains vertices of~$C$ and that is nearest to~$B$ (i.e., adjacent to~$B$ or connected to it by a path of bags that contain no vertices of~$C$). This bag is unique since bags containing vertices of~$C$ form a subtree of~$\T_C$. It is easy to see that all~$S'$-neighbors of~$C$ which are in~$B$ must also be contained in~$B'$. That bag, however, must also contain at least one vertex of~$C$. Hence, each marked bag adjacent to~$\T_C$ can contribute at most~$w$ neighbors, and~$C$ has at most~$2w$ neighbors in~$S'=S\cup S'$.
\qed
\end{proof}}

\protrusiondecompositionproof

\subsection{Proof of Lemma~\ref{lemma:numberofcomponents}}

\newcommand{\numberofcomponentsproof}{
\begin{proof}
Let~$\C$ denote the set of connected components of~$(G-X)-H$. To identify connected components in~$\C$ that may be safely deleted, we use a similar but simpler procedure as in the proof of Lemma~\ref{hittingsetandannotations}.

The main idea is that the way in which a component~$C \in \C$ affects the problem instance is by possibly providing a path between two vertices~$p,q \in X \cup H$: the coloring implications along this path might prevent some colorings of~$X \cup H$ from being valid, and in a solution we might want to delete a vertex from~$C$ to break this path of implications. But if there are more than~$\ell$ components which provide a path of the same parity between~$p$ and~$q$, then we cannot destroy all such paths by $\ell$ vertex deletions, and hence this fixes the relative colors of~$p$ and~$q$ in every solution. As soon as there are more than~$\ell$ components which provide a path of a given parity for some pair~$p,q \in X \cup H$, the existence of additional components which realize the same path is not relevant to the problem anymore, and we can remove such components if they are not relevant to \emph{any} pair~$p,q \in X \cup H$. In the remainder of the proof we formalize this idea into a reduction procedure.

For each pair of vertices~$p,q\in X\cup H$ (also for~$p=q$) and a choice of odd or even parity, we test for each component~$C \in \C$ whether there is a path from~$p$ to~$q$ whose internal vertices lie in~$C$ and for which the number of internal vertices matches the chosen parity. We can perform this test by~$2$-coloring the component~$C$ (which is bipartite since it is a subgraph of~$G - X$), observing that~$C$ provides a~$p-q$ path of odd (resp.\ even) parity if and only if~$p$ and~$q$ have neighbors of the same color (resp.\ different colors) in the component. For the given choice of~$p,q$ and given parity, we mark the first~$\ell+1$ components of~$\C$ that provide an appropriate path.

After doing this for all pairs, we delete all unmarked components of~$\C$ from~$G$, obtaining~$G'$.
Clearly, we have marked at most~$2\cdot(\ell+1)\cdot(|X|+|H|)^2$ components, and only those exist also in~$(G'-X)-H$.

Let us argue equivalence of the two instances. Clearly, deleting vertices of~$G$ can only make the problem easier, so assume for contradiction that~$(G',X,M,\ell)$ is \yes, but that~$(G,X,M,\ell)$ is \no. Accordingly, let~$S\subseteq V(G')$ be a set of at most~$\ell$ vertices of~$G'$ such that~$G'-S$ is bipartite and let~$c' \colon V(G' - S)\to\{0,1\}$ be a proper~$2$-coloring of~$G'-S$ which respects the annotations. We will show how to extend~$c'$ to a proper~$2$-coloring of~$G-S$, proving that~$(G,X,M,\ell)$ is \yes too.

We start from a partial $2$-coloring~$c$ of~$G - S$ which is obtained by restricting~$c'$ to the vertices of~$X \cup H$. We will show how to extend~$c$ to a $2$-coloring of the entire graph~$G - S$. Note that~$(G - X) - H$ is bipartite (since it is a subgraph of the bipartite graph~$G - X$) and hence we may apply \lemmaref{lemma:twocloringextension} to the graph~$G - S$ with the $2$-coloring~$c$ of~$G[X \cup H]$, letting~$X \cup H$ play the role of the set~$S$ in the statement of \lemmaref{lemma:twocloringextension}. By the lemma we either find an extension of~$c$ to a proper $2$-coloring of the entire graph~$G - S$ (and we would be done, since this $2$-coloring of~$G - X$ must respect the annotations since~$c'$ does), or we find a path~$P$ between two vertices~$p, q \in X \cup H$ such that the parity of this path conflicts with the colors assigned to~$p,q$ by function~$c$. We will show that this latter case leads to a contradiction, and that therefore we must always be able to extend to a proper $2$-coloring. By the guarantee of the lemma, all internal vertices on~$P$ are from the set~$(G - S) - (X \cup H)$ and hence the internal vertices of~$P$ are all contained within a single connected component~$C \in \C$.

It is not hard to see that if~$G' - S$ can be properly $2$-colored with the given colors for vertices~$p$ and~$q$, then there can be no path between~$p$ and~$q$ in the graph~$G' - S$ whose parity equals the parity of~$P$. So in particular, the component~$C$ cannot exist in graph~$G'$ and must have been deleted when forming~$G$. But by the definition of the reduction procedure, if the component~$C$ was deleted it was not marked, and hence we marked~$\ell + 1$ components which provided a~$p-q$ path of the same parity. Since at least one of these components is not intersected by~$S$ (which has size at most~$\ell$), this shows that~$G' - S$ must contain a $p-q$ path of the same parity as~$P$; a contradiction to the assumption that~$c'$ properly $2$-colors the graph~$G' - S$.

This contradiction shows that when applying \lemmaref{lemma:twocloringextension} we must always obtain a proper $2$-coloring~$c$ of~$G - S$, and since~$c$ assigns the same colors to~$X \cup H$ as the function~$c'$ which respects the annotations, we find that~$c$ is a $2$-coloring of~$G - S$ which respects the annotations; this proves that~$(G,X,M,\ell)$ is a \yes instance and completes the proof.
\qed
\end{proof}}

\numberofcomponentsproof

\subsection{Proof of Lemma~\ref{lemma:separatorreplacement}}

\newcommand{\separatorreplacementproof}{
\begin{proof}
Let~$c \colon V(G-R) \to \{0,1\}$ be a proper~$2$-coloring of~$G-R$ that respects the annotations on~$X$. Let a partial coloring~$c' \colon V(G-R')\to\{0,1\}$ of~$G-R'$ be defined via:~$c'(v):=c(v)$ for all vertices that are not in the component~$C$ of~$(G-X)-H$. Note that~$R$ and~$R'$ differ only on these vertices and, hence,~$(G-V(C))-R=(G-V(C))-R'$. Thus~$c'$ is a proper~$2$-coloring of~$(G-V(C))-R'$ and respects the annotations, since~$X\subseteq V(G)\setminus V(C)$ and~$c$ respects the annotations.

We apply Lemma~\ref{lemma:twocloringextension} on the graph~$G-R'$, the coloring~$c'$ and using the vertex set~$V(G)\setminus (V(C)\cup R')$ as the set~$S$ in the lemma. Let us assume for contradiction that we obtain a connected component~$C' \subseteq C$ and a simple path~$P$ between two vertices~$p,q\in N_{G-R'}(C')\subseteq X\cup H$ whose internal vertices are from~$V(C')$, with the guarantee that~$P$ cannot be properly~$2$-colored given the colors of~$p$ and~$q$. We will derive a contradiction by a case analysis on the status of~$p$ and~$q$.


$\boldsymbol{(i): p,q\in X:}$ We first consider the case that both endpoints of the path are contained in~$X$. \lemmaref{lemma:twocloringextension} guarantees that either~$p \neq q$ and~$P$ is a simple path, or that~$p = q$ and~$P$ is an odd cycle through~$p = q$. Let~$P'$ be the interior of the path: $P'$ is obtained from~$P$ by deleting~$p$ and~$q$, and it is not hard to verify that~$P'$ cannot be empty if~$c$ is a proper coloring. Clearly,~$P'$ is an~$X$-path between~$p$ and~$q$. Since~$H$ intersects all important~$X$-paths and all vertices of~$P'$ lie in~$C' \subseteq C$ which is a component of~$(G - X) - H$, it is clear that~$P'$ is not an important $X$-path. This, however, implies that there must be either an edge~$\{p,q\}$ in~$G$ or an annotation~$\{p,q\}$ in~$M$ and that the path~$P'$ has a matching length. But~$P'$ cannot be properly colored given the colors of~$p$ and~$q$, whereas the fact that~$P'$ is not important implies that whenever the endpoints are colored according to the annotations, the path \emph{can} be properly colored: this implies that~$c$ does not respect the annotations, which is a contradiction.

$\boldsymbol{(ii): p,q\in T:}$ In this case the vertices~$p$ and~$q$ are terminals of the labeled graph $(D, X \cup T, f)$. If~$p = q$ then by \lemmaref{lemma:twocloringextension} we know~$P$ is an odd cycle through~$p = q$, and since~$P$ is entirely contained within~$D$ (which is a subgraph of~$G - X$) this contradicts the assumption that~$G - X$ is bipartite. Hence in the remainder of this case we assume that~$p \neq q$. Since~$P$ is a~$p-q$ path in~$G - R'$ whose internal vertices are from~$C$, and since~$S' = V(C) \cap R'$, the set~$S'$ does not separate~$p$ from~$q$ in~$C$. It follows from the definition of the labeling function~$f$ and \defref{reachableLabelsDef} and \defref{cutCharacteristicDef} that we must have~$p \in \L(q, S')$ and~$q \in \L(p, S')$. Since the cut characteristics of~$S$ and~$S'$ with respect to this labeled graph are the same by the assumption that~$\K(S, T) = \K(S', T)$, we must have~$p \in \L(q, S)$ and~$q \in \L(p, S)$ which shows that vertices~$p$ and~$q$ are also connected in~$D - S$. If we take a path~$P'$ from~$p$ to~$q$ in the graph~$D - S$, then~$P'$ must also be a~$p-q$ path in the graph~$G - R$. Now, if the parities of~$P$ and~$P'$ differ then~$P \cup P' \subseteq V(C) \cup T$ must contain an odd cycle, which contradicts the assumption that~$G - X$ is bipartite. If the parities are the same, then the fact that~$P$ cannot be properly $2$-colored given the colors of~$p, q$ implies that~$P'$ cannot be properly $2$-colored, which contradicts the assumption that~$c$ is a proper $2$-coloring of~$G - R$.


$\boldsymbol{(iii): \mbox{\textbf{w.l.o.g.}\ } p\in T, q\in X:}$ Observe that the requirements for this case imply~$p \neq q$ and hence~$P$ is a~$p-q$ path in~$G - R'$ whose internal vertices are contained in~$C$: hence the subpath~$P - \{q\}$ also exists in the graph~$D - S'$ and connects~$p$ to a neighbor of~$q$, showing by the definition of the labeling function that~$q \in \L(p, S')$. Since~$S$ and~$S'$ have the same cut characteristic with respect to the labeled graph~$(D, X \cup T, f)$ we must have~$q \in \L(p, S)$: this implies that in~$D - S$ there is a path from vertex~$p$ to a vertex labeled~$q$, and this vertex labeled~$q$ must be a neighbor to~$q$ in the graph~$G$. Hence there is a path~$P'$ from~$p$ to~$q$ through the component~$C$ in the graph~$G - R$. Since~$c$ is a proper $2$-coloring of~$G - R$, it must properly $2$-color the path~$P'$. Since~$P$ cannot be properly $2$-colored given the colors of~$p$ and~$q$, it follows that the paths~$P$ and~$P'$ must have different parities. Since~$P$ and~$P'$ are two paths of different parities between distinct vertices~$p$ and~$q$, their union must contain an odd cycle. Since~$G - X$ is bipartite by assumption, the union of~$P$ and~$P'$ must contain an odd cycle~$Q$ through a vertex in~$X$, and hence this odd cycle must intersect~$q$. Let~$x,y$ be the predecessor and successor to~$q$ on the odd cycle~$Q$; it is easy to see that~$x,y \in V(C)$. It follows that~$Q - \{q\}$ is a path between~$x$ and~$y$ in~$G - X$ containing an even number of vertices. Since~$C$ is a connected component containing~$x$ and~$y$, there is a simple path~$\hat{P}$ from~$x$ to~$y$ which only uses vertices from~$C$. Since~$G - X$ is bipartite all simple paths between two given vertices in~$G - X$ must have the same parity, which shows in particular that~$\hat{P}$ must contain an even number of vertices since~$Q - \{q\}$ is a path between~$x$ and~$y$ in~$G - X$ with an even number of vertices. Now~$\hat{P}$ is entirely contained within~$C$, and~$\hat{P}$ forms an odd cycle with~$q$; but by \defref{importantXPath} this implies that~$\hat{P}$ is an important~$q-q$ $X$-path, contradicting the assumption that~$H$ intersects all important $X$-paths. This concludes the proof of this last case.

Thus, in all cases we have found a contradiction. This implies that the application of Lemma~\ref{lemma:twocloringextension} must provide a proper~$2$-coloring of~$G-R'$ that is an extension of~$c'$. Thus,~$G-R'$ has a proper~$2$-coloring that respects the annotations and, hence,~$R'$ is also a valid solution, as claimed.
\qed
\end{proof}}

\separatorreplacementproof

\subsection{Proof of Lemma~\ref{lemma:restrictedproblem}}

\newcommand{\restrictedproblemproof}
{
\begin{proof}
The proof is organized as follows. First, for each component~$C$ of~$(G-X)-H$ we will partition all separators of size at most~$\delta-1$ into equivalence classes according to their cut characteristic with respect to a labeled graph whose labels express adjacency to~$X$ and to the neighborhood of the component; accordingly these separators are subsets of~$V(C)\cup(N(C)\cap H)$. We will arbitrarily pick one minimum-size representative for each class and mark its vertices in~$C$ as deletable. All other vertices of the component~$C$ will remain undeletable. Doing this for all components we obtain an instance of the restricted annotated problem. Second, we will show that for each odd cycle transversal~$R$ of~$G$ which allows a $2$-coloring of~$G - R$ respecting the annotations, there is a transversal~$R'$ of at most the same size that intersects each component of~$(G-X)-H$ in deletable vertices (i.e., in a representative separator). From this, equivalence of the instance of the restricted annotated problem follows immediately.

For each component~$C$ of~$(G-X)-H$ we define the necessary vertex sets to express the cut characteristics of its separators and to be able to invoke Lemma~\ref{lemma:separatorreplacement} later on. We will omit subscripts~$C$ for readability and always focus only on one component at a time. Let~$T=\{t_1,\ldots,t_{\delta'}\}=N_G(C)\setminus X$ be the set of the~$\delta'\leq\delta$ vertices of~$H$ that are adjacent to~$C$. We define a labeled graph~$(D, X \cup T, f)$ on the base graph~$D := G[V(C)\cup T]$. Its vertices are labeled by~$f \colon V(D)\to X \cup T$ (exactly as in Lemma~\ref{lemma:separatorreplacement}), i.e., each vertex is labeled by its set of neighbors in~$X$ plus possibly by itself if it is in~$T$:
\begin{align*}
f(v) := 
\begin{cases}
N_G(v) \cap X & \mbox{If~$v \not \in T$.} \\
(N_G(v) \cap X) \cup \{v\} & \mbox{If~$v \in T$.}
\end{cases}
\end{align*}

Now, we consider all separators of size at most~$\delta-1$, i.e., all~$S\in\binom{V(C)\cup T}{\leq \delta - 1}$. We let two such separators~$S$ and~$S'$ be equivalent, if they have the same cut characteristic in the labeled graph~$(D,X\cup T,f)$ with respect to the terminals~$T$, i.e., if~$\K(S,T)=\K(S',T)$. It can be easily checked that a partition of~$\binom{V(C)\cup T}{\leq \delta - 1}$ into equivalence classes can be computed in time polynomial in~$\binom{|V(C)|+|T|}{\leq \delta - 1}+|V(G)|$: the main work consists of running~$|T|$ breadth-first searches for each separator to identify the set of reachable labels. Since we assume~$\delta$ to be a constant, this takes polynomial time in the input size.

We arbitrarily pick one separator of minimum size as the representative for each class. Each equivalence class is characterized by a tuple in~$\K^{\delta - 1}(T)$ with respect to the labeled graph~$(D, X \cup T, f)$. By \thmref{simpleCutCharacteristicBound} the number of equivalence classes is therefore bounded by~$\kappa(\delta, \delta - 1, |X| + \delta)$ since we are considering labeled graphs with~$n = \delta$ terminals, for which we look at separators of size at most~$m = \delta - 1$ in a graph with~$|X| + \delta' \leq |X| + \delta$ different labels. We now mark as deletable all vertices of~$C$ which occur in a representative separator (of size at most~$\delta - 1)$. Thus, per component, we mark less than~$\delta$ vertices for each representative separator, resulting in a total number of at most~$\delta \cdot \kappa(\delta, \delta - 1, |X| + \delta)$ marked vertices per component~$C$.

After doing this for all~$\alpha$ components there are at most~$\alpha \cdot \delta \cdot \kappa(\delta, \delta - 1, |X| + \delta)$ vertices marked as deletable in~$(G - X) - H$. In addition we mark the vertices of~$X$ and~$H$ as deletable. We now let~$Z$ contain the marked vertices, from which it easily follows that~$Z$ does not exceed the stated size bound. It remains to prove that the instance~$(G,X,M,\ell)$ of \annotatedbiptwoct is equivalent to instance~$(G,X,M,\ell,Z)$ of \restrictedbiptwoct. This equivalence will follow mainly from the following claim.

\begin{claim}
Let~$R \subseteq V(G)$ be a set of vertices such that~$G-R$ is bipartite and has a proper~$2$-coloring that respects the annotations. Then there is another set~$R' \subseteq Z$ of deletable vertices of size~$|R'| \leq |R|$ such that~$G-R'$ is bipartite and and has a proper~$2$-coloring that respects the annotations.
\end{claim}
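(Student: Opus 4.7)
The plan is to process each connected component $C$ of $(G-X)-H$ independently and to modify $R\cap V(C)$ to land inside $Z$ without enlarging $R$; because the modifications on different components act on pairwise disjoint pieces of~$V(G)$, they can be applied in parallel. Fix a component $C$, let $T:=N_G(C)\setminus X$ (so $T\subseteq H\subseteq Z$ and $|T|\leq \delta$), and set $S:=V(C)\cap R$.

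The easy case is $|S|\leq \delta-1$. Then $S\in\binom{V(C)\cup T}{\leq \delta-1}$ is one of the precomputed separators and therefore lies in some equivalence class; let $S^\ast$ be the chosen representative, so that $|S^\ast|\leq |S|$ and $\K(S^\ast,T)=\K(S,T)$ in the labeled graph $(D,X\cup T,f)$. By construction $S^\ast\cap V(C)$ consists of marked vertices and $S^\ast\cap T\subseteq H$, so $S^\ast\subseteq Z$. The Separator Replacement Lemma (\lemmaref{lemma:separatorreplacement}) then yields that $(R\setminus S)\cup S^\ast$ is a valid solution, at least when $S^\ast\subseteq V(C)$. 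When $S^\ast$ contains a terminal $t\in T$, I first apply the lemma to the $V(C)$-part of $S^\ast$ and then add $t$ to the transversal separately; because $t\in S^\ast$ means $\L(t,S^\ast)=\emptyset$, the terminal $t$ is disconnected from every label, so inserting it into the transversal cannot break the coloring.

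The interesting case is $|S|\geq \delta$, where the budget is already spent. Set $R^C:=(R\setminus S)\cup T$; since $|T|\leq \delta\leq |S|$ we have $|R^C|\leq |R|$, and $T\subseteq Z$ together with $V(C)\cap R^C=\emptyset$ gives $R^C\cap V(C)\subseteq Z$. To verify validity, restrict the assumed coloring $c$ of $G-R$ to $V(G)\setminus(R\cup T)$ and apply \lemmaref{lemma:twocloringextension} in $G-R^C$ to extend it onto the newly ``un-deleted'' bipartite piece $S\subseteq V(C)\subseteq V(G-X)$. If the extension fails, the lemma returns a simple path $P$ with internal vertices in $S$ and endpoints $p,q\in V(G)\setminus(R\cup T)$, which must lie in $X\cup V(C)$ because $N_G(V(C))\subseteq X\cup T$. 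If $p,q\in X$ then $P$ is an $X$-path whose vertex set avoids $H$ and is therefore non-important; but any non-important path has parity matching the annotations that $c$ respects, contradicting the violation. If at least one endpoint lies in $V(C)\setminus R$, concatenating $P$ with any $G[V(C)]$-path between the $V(C)$-endpoints either produces an odd cycle inside the bipartite graph $G-X$, or it reduces to an $X$-$X$ subpath whose parity is constrained by $c$; in both cases we reach a contradiction.

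Assembling these per-component replacements produces the desired $R'\subseteq Z$ with $|R'|\leq |R|$ and with $G-R'$ properly $2$-colorable respecting the annotations. The main obstacle is the subcase analysis in the branch $|S|\geq \delta$: the endpoints of the violating path need not both lie in~$X$, and ruling out violations with an endpoint in $V(C)\setminus R$ requires carefully combining the bipartition of $G-X$ with the fact that no important $X$-path runs through $C$ (because $H$ intersects every important $X$-path while $V(C)\cap H=\emptyset$). A secondary technical point, in the $|S|\leq \delta-1$ case, is handling representatives that include terminals, which the separator replacement lemma as stated does not cover directly.
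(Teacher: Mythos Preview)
Your overall strategy matches the paper's: handle each component~$C$ of~$(G-X)-H$ separately, split on whether the solution spends few or many vertices there, and in the two cases replace by a representative separator or by the terminal set~$T$. Two points deserve comment.

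\textbf{Small case: the terminal worry is a non-issue.} You define~$S=V(C)\cap R$, so~$S\cap T=\emptyset$. Because each terminal~$t\in T$ carries itself as a label, $t\notin S$ implies~$t\in\L(t,S)$, whereas~$t\in S^\ast$ would force~$\L(t,S^\ast)=\emptyset$; since~$\K(S,T)=\K(S^\ast,T)$, this is impossible. Hence~$S^\ast\subseteq V(C)$ automatically, and the Separator Replacement Lemma applies directly without your proposed patch. (The paper defines~$S$ to include~$R\cap T$ as well and then proves~$S\cap T=S'\cap T$ by exactly this self-label argument; with your choice of~$S$ the conclusion is simply~$S^\ast\cap T=\emptyset$.)

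\textbf{Large case: keeping the old coloring on~$V(C)\setminus R$ creates a genuine gap.} When~$|S|\geq\delta$ you restrict~$c$ to~$V(G)\setminus(R\cup T)$, thereby \emph{retaining} the colors that~$c$ assigned on~$V(C)\setminus R$, and then try to extend onto~$S$. If the extension fails with endpoints~$p,q\in V(C)\setminus R$, your sketch (``concatenating~$P$ with any~$G[V(C)]$-path\dots reduces to an~$X$--$X$ subpath'') does not go through: any other~$p$--$q$ path in the bipartite connected graph~$G[V(C)]$ has the \emph{same} parity as~$P$, so no odd cycle in~$G-X$ arises, and there is no evident~$X$--$X$ subpath to appeal to. The colors~$c(p),c(q)$ are constrained only by~$G-R$, which need not see the path through~$S$ at all, so the violation is not immediately contradictory. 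The mixed case~$p\in V(C)\setminus R$, $q\in X$ is likewise unclear (there is only one ``$V(C)$-endpoint'' to concatenate from).

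The paper avoids this entirely by discarding the coloring on \emph{all} of~$V(C)\cup T$ and re-extending from the colors on~$X\setminus R$ alone. Since~$T\subseteq R'$, every colored neighbor of~$C$ in~$G-R'$ lies in~$X$, so a failure of \lemmaref{lemma:twocloringextension} yields a~$p$--$q$ path with~$p,q\in X$ and interior in~$V(C)$: this is an~$X$-path avoiding~$H$, hence non-important, hence its parity matches the annotation on~$\{p,q\}$ that~$c$ already respects --- contradiction. Adopting this one-line change closes the gap.
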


\begin{claimproof}
Let~$R$ be an odd cycle transversal of~$G$ such that~$G - R$ has a $2$-coloring respecting the annotations. If~$R \subseteq Z$ we are done, so assume that~$R$ contains at least one undeletable vertex and let~$C$ be the connected component of~$(G-X)-H$ containing the vertex. Let~$T$ contain the neighbors that~$C$ has in~$H$ in some arbitrary order, i.e.,~$T:=N_G(C)\cap H$. Let~$S:=R\cap (V(C)\cup T)$ be the intersection of~$R$ with~$C$ augmented by its neighborhood in~$H$.

We will show that we can obtain a solution~$R'$ which is not bigger than~$R$, and which only intersects~$C$ in deletable vertices; we will then appeal to induction to show that this process can be repeated to obtain a solution which is a subset of~$Z$. To find a set~$R'$ such that~$R' \cap V(C) \subseteq Z$ we distinguish between two cases.

\begin{itemize}
	\item We first consider the case that~$|S| \leq \delta - 1$. Let~$S'$ be the representative separator that has the same cut characteristic as~$S$ with respect to the labeled graph~$(D,X\cup T,f)$ (as defined above). Since we remembered a \emph{minimum-size} representative of each class, we must have~$|S'| \leq |S|$. We now argue that~$S\cap T=S'\cap T$. Assume there is a terminal~$t_i \in T \setminus S$. Since this terminal is labeled with itself, we have~$t_i \in f(t_i)$ and since~$t_i \not \in S$ we must have~$t_i \in \L(t_i, S)$. Since the cut characteristics of~$S$ and~$S'$ are identical, we must have~$t_i \in \L(t_i, S')$ and it follows directly from \defref{reachableLabelsDef} this is only possible if~$t_i \not \in S'$. Similarly for every~$t_i \in T \cap S$ we have~$t_i \not \in \L(t_i, S)$ which is only possible if~$t_i \not \in \L(t_i, S')$. Hence the fact that~$S \cap T = S' \cap T$ is implied by the fact that~$S$ and~$S'$ have the same cut characteristic.
	
Now, from applying Lemma~\ref{lemma:separatorreplacement} for~$S\setminus T$ and~$S'\setminus T$ (both subsets of~$V(C)$) we get that
\[
R'=(R\setminus (S\setminus T))\cup (S'\setminus T)=(R\setminus S)\cup S'
\]
is also an odd cycle transversal of~$G$ such that~$G - R'$ has a proper $2$-coloring respecting the annotations, and since~$S'$ is a representative separator whose vertices were marked as deletable we have~$R' \cap V(C) = S' \subseteq Z$.

\item If~$|S|\geq \delta$ then we replace it by the set~$T$, i.e.,~$R'=(R\setminus S)\cup T$, implying that~$|R'|\leq|R|$. Let us briefly argue that~$R'$ is indeed a valid solution. Let~$c$ be a proper~$2$-coloring of~$G-R$ that respects the annotations, and create a~$2$-coloring of~$R'$ by first copying the coloring of~$c$ onto all vertices outside of~$V(C)\cup T$. Since all~$X$-paths through~$C$ (not crossing vertices of~$H$) must match annotations (and edges) among vertices of~$X$, it follows from Lemma~\ref{lemma:twocloringextension} that a greedy extension of the~$2$-coloring of~$X$ into~$C$ cannot fail (e.g., see also the argument in the proof of Lemma~\ref{lemma:numberofcomponents} and note that all neighbors of~$C$ in~$G-R'$ are in~$X$).
\end{itemize}

In both cases we have identified an odd cycle transversal~$R'$ of at most the same size as~$R$ that only intersects~$C$ in deletable vertices, and which ensures that~$G - R'$ has a $2$-coloring respecting the annotations. Since this replacement step within the component~$C$ does not affect the intersection of the solution with any other components of~$(G - X) - H$, we may repeatedly apply such replacement steps until we arrive at a solution which is entirely contained within~$Z$. This completes the proof of the claim.
\end{claimproof}

The given claim allows us to prove that the restricted annotated instance is equivalent tot he annotated instance. Clearly, restricting the set of deletable vertices can only make the problem harder: if the restricted instance is \yes then the same deletion set is a valid solution to the original. For the other direction it follows from the claim that if the original is \yes, then the restricted instance is also \yes. This completes the proof of the lemma.
\qed
\end{proof}}

\restrictedproblemproof

\subsection{Proof of Lemma~\ref{lemma:backtransformation}}

\newcommand{\backtransformationproof}{
\begin{proof}
Let~$(G,X,M,\ell,Z)$ be an instance of the restricted annotated problem. Clearly, if~$G-Z$ is not bipartite, then the instance is \no and we may return a dummy \no instance of \biptwoct.

We argue that we may assume w.l.o.g.\ that~$M=\emptyset$: If~$M$ is not empty, then for each~$\{u,v\}\in M\subseteq\binom{V}{2}$ we may add a new vertex~$w$ adjacent to~$u$ and~$v$. Since~$w$ is not in~$Z$ it may not be deleted. Hence, for any odd cycle transversal~$S\subseteq Z$, either~$u\in S$ or~$v\in S$ or~$u$ and~$v$ must have the same color in any~$2$-coloring of~$G-S$. This does not affect~$\ell$ or~$Z$. Henceforth, we assume~$M=\emptyset$.

We construct the graph~$G'$, starting from~$G'=G[Z]$. For all~$p,q\in Z$ we do the following:
\begin{itemize}
\item If there is an odd~$Z$-path between~$p$ and~$q$ in~$G$, i.e., a~$p-q$ path with internal vertices only from~$V(G)\setminus Z$, then add~$\ell+1$ new vertices to~$G'$ and make them adjacent to both~$p$ and~$q$.
\item If there is an even~$Z$-path between~$p$ and~$q$ in~$G$, then add the edge~$\{p,q\}$ (unless it exists already in~$G[Z]$).
\end{itemize}
Recall, that the existence of odd and even~$p-q$~$Z$-paths can be easily checked by~$2$-coloring~$G-Z$ and checking whether~$p$ and~$q$ have neighbors of the same, respectively, different colors in some component of~$G-Z$.

Finally, we let~$X':=Z$ and return the instance~$(G',X',\ell)$. Clearly,~$G'-X'$ is an independent set, and hence it is bipartite and has bounded treewidth, since we only added vertices that are adjacent to~$Z$ (but not to one another).

\begin{claim}
$(G,X,M,\ell,Z)$ is \yes if and only if~$(G',X',\ell)$ is \yes.
\end{claim}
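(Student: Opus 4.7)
\begin{claimproof}
The plan is to verify the two directions separately, using the opening reduction to assume~$M=\emptyset$ throughout.

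For the forward direction, let~$S\subseteq Z$ with~$|S|\leq\ell$ witness that~$(G,X,\emptyset,\ell,Z)$ is \yes, with proper~$2$-coloring~$c$ of~$G-S$. The plan is to show that the same set~$S$ works for~$(G',X',\ell)$ by defining~$c'$ on~$V(G')\setminus S$ as follows: copy~$c$ on~$Z\setminus S$, and color each newly added middle vertex~$w$ (attached to some pair~$p,q\in Z$ because of an odd~$Z$-path between them) with the opposite of~$c(p)$ when~$p\notin S$, else the opposite of~$c(q)$ when~$q\notin S$, else arbitrarily. Since the internal vertices of any~$Z$-path lie in~$V(G)\setminus Z\subseteq V(G)\setminus S$, every odd~$Z$-path between~$p,q\in Z\setminus S$ survives in~$G-S$ and hence~$c(p)=c(q)$, which makes the gadget edges proper under~$c'$. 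Similarly each added edge~$\{p,q\}$ (from an even~$Z$-path) is proper because the corresponding even~$Z$-path survives in~$G-S$, forcing~$c(p)\neq c(q)$. Edges internal to~$G[Z]$ are already proper by assumption.

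For the backward direction, given~$S'\subseteq V(G')$ with~$|S'|\leq\ell$ and a proper~$2$-coloring~$c'$ of~$G'-S'$, the plan is to set~$S:=S'\cap Z\subseteq Z$, note that~$|S|\leq|S'|\leq\ell$, and show that~$G-S$ is bipartite. Since~$c'$ restricted to~$Z\setminus S$ is a proper~$2$-coloring of~$G[Z\setminus S]$ and~$G-Z$ is bipartite (the opposite case having been reduced away), I would apply \lemmaref{lemma:twocloringextension} to extend~$c'|_{Z\setminus S}$ to all of~$G-S$. If the extension fails, it yields a~$Z$-path~$P$ between vertices~$p,q\in Z\setminus S\subseteq Z\setminus S'$ whose parity mismatches the pair~$(c'(p),c'(q))$. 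In the odd-path/different-color case, each of the~$\ell+1$ parallel middle vertices between~$p$ and~$q$ in~$G'$ has two neighbors of distinct colors, so none of them can be properly~$2$-colored, forcing all~$\ell+1$ into~$S'$, contradicting~$|S'|\leq\ell$. In the even-path/same-color case, the edge~$\{p,q\}$ added to~$G'$ has both endpoints in~$V(G')\setminus S'$ with equal color under~$c'$, contradicting that~$c'$ is proper. Thus the extension must succeed, and~$G-S$ is bipartite as required.

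The main delicate point will be the backward direction: one must verify that every parity-based forcing constraint which could obstruct the~$2$-coloring extension in~$G$ is faithfully represented by a gadget in~$G'$, and that the multiplicity~$\ell+1$ genuinely rules out killing every gadget by deletions. A minor detail to handle uniformly is the case~$p=q$, i.e.,~$Z$-paths that return to their starting vertex (corresponding to cycles through a single vertex of~$Z$), which the construction accommodates symmetrically.
\end{claimproof}
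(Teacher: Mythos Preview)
Your proof is correct and follows essentially the same approach as the paper's: in both directions you use the same solution set (either~$S$ itself, or~$S:=S'\cap Z$), transfer the coloring on~$Z$, and rely on the fact that any~$Z$-path survives because internal vertices lie outside~$Z\supseteq S$; the paper argues the backward extension informally as a ``greedy extension'' while you invoke \lemmaref{lemma:twocloringextension} explicitly, but the contradiction via the~$\ell+1$ gadget vertices (odd case) and the added edge (even case) is identical.
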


\begin{claimproof}
$\boldsymbol{(\Rightarrow):}$ Assume first that~$(G,X,M,\ell,Z)$ is \yes and let~$S\subseteq Z$ be an odd cycle transversal of~$G$ of size at most~$\ell$ (recall that~$M=\emptyset$). To see that~$S$ is also an odd cycle transversal of~$G'$, let~$c \colon V(G-S)\to\{0,1\}$ be a proper~$2$-coloring of~$G-S$. We will show how the restriction of~$c$ to~$Z$ can be extended to a proper~$2$-coloring of~$G'-S$. Recall, that~$G'$ is the same as~$G[Z]$ except possibly for additional edges between vertices of~$Z$ and additional vertices that are adjacent to pairs of vertices from~$Z$.

Let us check first that the additional edges pose no problem: if~$p,q\in Z\setminus S$ and~$\{p,q\}$ is an edge of~$G'$ but not of~$G$ then there must be an even~$Z$-path between~$p$ and~$q$ in~$G$. This path exists in~$G-S$, since~$p,q\notin S$ and~$S\subseteq Z$. Hence,~$c(p)\neq c(q)$.

Now, let us consider the additional vertices: if~$p,q\in Z\setminus S$ have a shared neighbor in~$G'$ then there must be an odd~$Z$-path between~$p$ and~$q$ in~$G$. Again this path must exist also in~$G-S$, implying that~$c(p)=c(q)$. Therefore, we may color such a shared neighbor with color~$1-c(p)$. It follows that~$S$ is an odd cycle transversal of~$G'$ and that~$(G',X',\ell)$ is \yes.

$\boldsymbol{(\Leftarrow):}$ Now, assume that~$(G',X',\ell)$ and let~$S'$ be an odd cycle transversal of~$G'$ of size at most~$\ell$. We let~$S:=S'\cap Z$ and claim that~$S$ is an odd cycle transversal of~$G$ (clearly~$S\subseteq Z$ and~$|S|\leq \ell$).

Let~$c' \colon V(G'-S')\to\{0,1\}$ be a proper~$2$-coloring of~$G'-S'$. We will extend its restriction to~$Z$ to a proper~$2$-coloring of~$G-S'$. Let us first note that~$G[Z]$ is a subgraph of~$G'[Z]$ and hence~$c'$ is a proper~$2$-coloring for~$G[Z]$. It is easy to see that a greedy extension of the coloring onto~$V(G-S')\setminus Z$ suffices: First of all,~$G-Z$ is bipartite, so connected components that do not intersect~$Z$ can be~$2$-colored arbitrarily. Second, if the coloring fails then there must be an odd path between two vertices~$p,q\in Z\setminus S$ with different colors under~$c'$, or an even path between vertices of the same color. In both cases, such a path led to adding an edge (for an even path) or~$\ell+1$ shared neighbors (for an odd path) between those vertices in~$G'$. In the first case, the two vertices must have different colors (as~$c'$ is proper for~$G'-S'[Z]=G'-S[Z]$). In the second case,~$S'$ cannot contain all shared neighbors, so the vertices must have the same color under~$c$. Thus, in both cases we find a contradiction, implying that~$G-S$ can be properly~$2$-colored and that~$(G,X,M,\ell,Z)$ is a \yes instance.

This completes the proof of the claim.
\end{claimproof}

It is easy to see that the construction can be performed in polynomial time. Correctness follows from the previous claim. To see that~$G'$ has at most~$|Z|+(\ell+1)\cdot|Z|^2$ vertices recall that the only additional vertices that we added to~$G[Z]$ are at most~$\ell+1$ shared neighbors per pair of vertices from~$Z$.
\qed
\end{proof}}

\backtransformationproof

\subsection{Proof of Theorem~\ref{theorem:polyKernelProof}}

\newcommand{\polykernelproof}{
\begin{proof}
We will sketch the actions of the algorithm for some fixed integer~$w$. Let~$(G, X, \ell)$ be an input of \biptwoct, and let~$k := |X|$ be the size of the parameter to the problem. Observe that we may assume without loss of generality that~$\ell < |X| = k$, otherwise the set~$X$ is an OCT of size at most~$\ell$ and therefore we can just output a constant-size \yes instance. This implies that we can assume that~$\ell + 1 \leq k$.

We first compute sets~$A, B \subseteq \binom{X}{2}$, a set~$C \subseteq X$ and a set~$H \subseteq V(G) \setminus X$ of size at most~$4 \ell |X|^2$ by \lemmaref{hittingsetandannotations}, and use these with \lemmaref{lemma:makeannotations} to obtain an instance $(G', X', M, \ell')$ of \annotatedbiptwoct with~$X' \subseteq X$ and~$\ell' \leq \ell$ such that~$H$ intersects all important~$X'$-paths of the annotated instance. We then use Bodlaender's algorithm~\cite{Bodlaender96} to compute a tree decomposition \T of the graph~$G' - X'$, which can be done in linear time since~$G' - X'$ has treewidth at most~$w$ which we treat as a fixed constant.

We now apply \lemmaref{protrusionDecomposition} to the triple~$(G' - X', \T, H)$: the set~$H$ plays the role of~$S$ in the lemma statement. We find a superset~$H' \supseteq H$ of size at most~$2(w+1)|H|$ such that for each connected component~$C$ of~$G' - X'$ it holds that~$|N_{G'}(C) \cap H'| \leq 2w$. Since~$H$ intersects all important~$X'$-paths of the instance~$(G', X', M, \ell')$ it is easy to see that the superset~$H'$ must also have this property.

We now apply \lemmaref{lemma:numberofcomponents} to the instance~$(G', X', M, \ell')$ and set~$H'$ to obtain in polynomial time an equivalent instance~$(G'', X', M, \ell')$ of \annotatedbiptwoct with the guarantee that~$(G'' - X') - H'$ has at most~$2(\ell' + 1)(|X'| + |H'|)^2$ connected components; observe that the sets~$X',M$ and the value of~$\ell'$ remains unchanged by this step. Since the graph~$G''$ is an induced subgraph of~$G'$, it follows that~$H'$ is also a hitting set for the important~$X'$-paths in the graph~$G''$.

Define~$\alpha := 2(\ell' + 1)(|X'| + |H'|)^2$ and~$\delta := 2w$; we may then apply \lemmaref{lemma:restrictedproblem} to the instance~$(G'', X', M, \ell')$ and the set~$H'$ to obtain an equivalent instance $(G'', X', M, \ell', Z)$ of \restrictedbiptwoct where~$|Z|$ is bounded by~$|X'| + |H'| + \alpha \cdot \delta \cdot \kappa(\delta, \delta - 1, |X'| + \delta)$. 

As the final step of the kernelization we move from the instance of the restricted, annotated problem back to the original problem. We apply \lemmaref{lemma:backtransformation} to the instance~$(G'',X',M,\ell',Z)$ to obtain an equivalent instance~$(G^*, X^*, \ell^*)$ of the original problem, and the lemma guarantees that~$|V(G^*)| \leq |Z| + (\ell' + 1)\cdot |Z|^2$. If~$\ell' < 0$ then the original input is equivalent to an instance which asks for a set of negative size; hence the original input is \no, and we can output a constant-size \no instance. If~$\ell' = 0$ then we can decide the instance in polynomial time: we output \yes if and only if~$G^*$ is bipartite. In the remaining cases, the instance~$(G^*, X^*, \ell^*)$ is used as the output to the kernelization algorithm and we are guaranteed that~$\ell^* > 0$. It follows directly from the intermediate lemmas that this procedure takes polynomial time for each fixed~$w$, and that the output instance~$(G^*, X^*, \ell^*)$ is equivalent to the original input~$(G,X,\ell)$. It remains to prove that the size of the output instance is indeed bounded polynomially in the parameter to the input problem, which is~$k = |X|$. This is just a matter of formula manipulation using the facts we collected above.
{\allowdisplaybreaks
\begin{align}
|V(G^*)| &\leq |Z| + (\ell' + 1) \cdot |Z|^2 & \mbox{By \lemmaref{lemma:backtransformation}.} \\
&\leq 2k|Z|^2 & \mbox{$\ell' \leq \ell < k$.} \label{vBound} \\
\delta &= 2w & \mbox{By definition.} \\
|X'| &\leq |X| \leq k & \mbox{Since~$X' \subseteq X$.} \\
|H'| &\leq 2(w+1)|H| \leq 2 \delta |H| & \mbox{By \lemmaref{protrusionDecomposition}.} \label{hprimebound} \\
|H| &\leq 4 \ell |X|^2 \leq 4 \ell k^2 \leq 4 k^3 & \mbox{By \lemmaref{lemma:makeannotations}.} \label{hbound} \\
|H'| &\leq 8 \cdot \delta \cdot k^3 & \mbox{By \eqref{hprimebound}, \eqref{hbound}.} \label{hprimebound2} \displaybreak[0] \\
|X'| + |H'| &\leq k + 8 \cdot \delta \cdot k^3 \leq 9 \cdot \delta \cdot k^3 & \mbox{By \eqref{hprimebound2}.} \label{sumBound} \displaybreak[0] \\
\alpha &= 2(\ell' + 1)(|X'| + |H'|)^2 & \mbox{By definition.} \label{alphaDef} \\
&\leq 2k (|X'| + |H'|)^2 & \mbox{$\ell + 1 \leq k$.} \\
&\leq 2k (9 \cdot \delta \cdot k^3)^2 & \mbox{By \eqref{sumBound}.} \\
&\leq 162 \cdot k^7 \cdot \delta^2 & \mbox{Simplifying.} \label{alphaBound} \displaybreak[0] \\
|X'| + |H'| &\leq \alpha & \mbox{By \eqref{alphaDef}.} \label{sumBound2} \displaybreak[0] \\
|Z| &\leq |X'| + |H'| + \alpha \cdot \delta \cdot \kappa(\delta, \delta - 1, |X'| + \delta) & \mbox{By \lemmaref{lemma:restrictedproblem}.}\\
&\leq \alpha + \alpha \cdot \delta \cdot \kappa(\delta, \delta - 1, |X'| + \delta) & \mbox{By \eqref{sumBound2}.} \\
&\leq 2 \alpha \cdot \delta \cdot \kappa(\delta, \delta - 1, |X'| + \delta) & \mbox{$\delta, \kappa(\ldots) > 0$.} \\
&\leq 2 \alpha \cdot \delta \cdot \kappa(\delta, \delta, k + \delta) & \mbox{Def.\ of~$\kappa(\cdot)$.}\\
\kappa(\delta, \delta, k + \delta) &\in \BigO(\delta^{2\delta} \cdot (k + \delta)^{\delta^2(\delta+3)/2} \cdot 4^{\delta^2}) & \mbox{By Thm.~ \ref{simpleCutCharacteristicBound}.} \\
\kappa(\delta, \delta, k + \delta)^2 &\in \BigO(\delta^{4\delta} \cdot (k^{\delta^2(\delta+3)} + \delta^{\delta^2(\delta+3)}) \cdot 16^{\delta^2}) & \mbox{Simplifying.} \\
|Z|^2 &\in \BigO(\alpha^2 \delta^2 \kappa(\delta, \delta, k + \delta)^2) \\
|Z|^2 &\in \BigO(k^{14} \delta^4 \delta^{4\delta} \cdot (k^{\delta^2(\delta+3)} + \delta^{\delta^2(\delta+3)}) \cdot 16^{\delta^2}) \label{zsquareBound}
\end{align}
If we now treat~$w$ (and therefore~$\delta$) as a fixed constant, we find:
\begin{align}
|Z|^2 &\in \BigO(k^{\BigO(w^3)}) & \mbox{By \eqref{zsquareBound}.} \label{sloppyZSquareBound} \\
|V(G^*)| &\in \BigO(k^{O(w^3)}) & \mbox{By \eqref{vBound} and \eqref{sloppyZSquareBound}.} \nonumber
\end{align}
}
Since this shows that the size of a reduced instance is appropriately bounded, this concludes the proof.
\qed 
\end{proof}
}

\polykernelproof

\section{On approximating the deletion distance to a bipartite treewidth-$w$ graph}

\newcommand{\approximatingtheparameter}{

\begin{proposition}[\cite{RobertsonS86,RobertsonS04}] \label{boundedtwExcludesPlanarGraph}
Let~$w \geq 1$ be an integer. There is a finite set of graphs~$\F_{\tw(w)}$ containing at least one planar graph, such that for any graph~$G$ we have~$\tw(G) \leq w$ if and only if~$G$ excludes all graphs~$H \in \F_{\tw(w)}$ as a minor.
\end{proposition}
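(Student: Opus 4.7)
The plan is to combine two landmark results of Robertson and Seymour. First I would observe that the class $\gtw{w}$ is closed under taking minors: deleting an isolated vertex, deleting an edge, or contracting an edge each preserve the existence of a tree decomposition of width at most $w$, because bags can be shrunk or merged along the contracted edge without increasing their size. Hence $\gtw{w}$ is a minor-closed graph class.

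Next I would invoke the Graph Minor Theorem, which states that graphs are well-quasi-ordered under the minor relation. A direct consequence is that every minor-closed graph class admits a finite obstruction set, i.e., a finite list of minor-minimal graphs not in the class, whose exclusion as a minor characterizes membership. Applied to $\gtw{w}$, this yields the claimed finite set $\F_{\tw(w)}$ such that $\tw(G) \leq w$ if and only if $G$ excludes every $H \in \F_{\tw(w)}$ as a minor.

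To produce a planar member of $\F_{\tw(w)}$, it suffices to exhibit a planar graph that is not in $\gtw{w}$. The $r \times r$ grid $\Gamma_r$ is planar and has treewidth exactly $r$ for $r \geq 2$; choosing $r := w + 2$ therefore gives $\Gamma_r \notin \gtw{w}$. By the forbidden-minor characterization established in the previous step, $\Gamma_r$ must contain some $H \in \F_{\tw(w)}$ as a minor. Since the class of planar graphs is itself minor-closed, $H$ inherits planarity from $\Gamma_r$, which proves the claim.

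The main obstacle is the Graph Minor Theorem itself, which is extraordinarily deep; but since the proposition is attributed to Robertson and Seymour, this black-box invocation is appropriate and no independent proof is needed. The remaining ingredients, namely minor-closure of $\gtw{w}$, minor-closure of planarity, and the lower bound on the treewidth of a grid, are all standard and each fits in a few lines.
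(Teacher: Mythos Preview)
Your argument is correct and follows essentially the same route as the paper: minor-closure of $\gtw{w}$, the Graph Minor Theorem to obtain a finite obstruction set, and then a planar grid of large treewidth to force a planar obstruction via minor-closure of planarity. The only cosmetic difference is that the paper uses the $(w+1)\times(w+1)$ grid rather than the $(w+2)\times(w+2)$ grid, which is immaterial.
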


\begin{proof}
It is well-known that the treewidth of a graph does not increase when taking a minor. Hence the set of graphs~$\G_{\tw(w)}$ of treewidth at most~$w$ is minor-closed, and by the Graph Minor Theorem~\cite{RobertsonS04} there is a finite obstruction set~$\F$ such that for all graphs~$G$ we have~$G \in \G_{\tw(w)}$ if and only if~$G$ excludes all graphs~$H \in \F$ as a minor. Observe that the~$k \times k$ grid graph has treewidth~$k$~\cite{RobertsonS86} (we ignore the easy case that~$w \leq 1$). Therefore~$\G_{\tw(w)}$ does not contain the~$(w+1) \times (w+1)$ grid graph. Since~$\F$ is an obstruction set for~$\G_{\tw(w)}$ there must be a graph~$H' \in \F$ which is a minor of the~$(w+1) \times (w+1)$ grid. But since a grid graph is planar, and planarity is preserved when taking minors, the graph~$H'$ must be planar; this proves the claim.
\qed
\end{proof}

We use the following theorem from recent work by Fomin et al.~\cite[Theorem 3]{FominLMPS11}.
\begin{theorem} \label{approximateHittingMinors}
Let \F be an obstruction set containing a planar graph. Given a graph~$G$, in polynomial time we can find a subset~$S \subseteq V(G)$ such that~$G - S$ contains no element of \F as a minor and~$|S| \in \BigO(\opt \cdot \log^{3/2} \opt)$. Here~$\opt$ is the minimum size of such a set~$S$.
\end{theorem}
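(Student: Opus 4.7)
The plan is to combine a treewidth-based exact algorithm with a recursive balanced-separator approach to achieve the $\BigO(\opt \cdot \log^{3/2} \opt)$ bound. Fix a planar graph $H \in \F$ on $h$ vertices. By the Excluded Grid Theorem there is a function $g(h)$ such that any graph of treewidth at least $g(h) \cdot t$ contains a $t \times t$ grid, and hence $H$, as a minor; moreover such a graph contains $\Omega(t^2/h^2)$ vertex-disjoint copies of $H$ as minors, so the optimal hitting set must be large whenever the treewidth is large. This is the quantitative lower bound that will allow us to convert $\log n$ factors into $\log \opt$ factors at the end.

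\textbf{Recursive scheme.} First I would dispose of the easy case $\tw(G) \leq g(h) \cdot (\opt + 1)$ by computing an approximate tree decomposition of width $\BigO(\opt)$ via Bodlaender's algorithm and then solving the problem exactly using Courcelle's theorem, since ``$G - S$ is $\F$-minor-free'' is expressible in monadic second-order logic and $\F$ is finite. Otherwise I would invoke the Feige--Hajiaghayi--Lee $\BigO(\sqrt{\log n})$-approximation for balanced vertex separators. An optimal hitting set $S^*$ shatters $G$ into $\F$-minor-free (hence bounded-treewidth) components, so a standard argument shows $S^*$ is also an $\alpha$-balanced vertex separator for a suitable constant $\alpha$; thus the FHL algorithm returns a balanced separator $S$ with $|S| \leq \BigO(\sqrt{\log n}) \cdot \opt$. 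I add $S$ to the output set, recurse on each component of $G - S$, and return the union of the computed sets.

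\textbf{Analysis and main obstacle.} Since each recursive call works on subgraphs of size at most a constant fraction of the parent, the recursion depth is $\BigO(\log n)$; summing the $\BigO(\sqrt{\log n}) \cdot \opt$ overhead over the levels yields a total deletion set of size $\BigO(\opt \cdot \log^{3/2} n)$. To trade $\log n$ for $\log \opt$ I would preprocess $G$ by discarding components of $G$ whose treewidth is already below $g(h) \cdot (\opt + 1)$ using the protrusion decomposition of \lemmaref{protrusionDecomposition} as a black box, after which the packing bound above forces $|V(G)| \leq \mathrm{poly}(\opt)$. The main difficulty is the charging argument across recursion levels: we must guarantee that the sum of the ``local optima'' on the two sides of $S$ does not exceed $\opt$ on the parent instance, so that the $\opt$ factor does not accumulate with depth. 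This follows from the observation that the restriction of any feasible hitting set on $G$ to either side of $S$ remains feasible there, since $\F$-minor-freeness is preserved under taking subgraphs; carefully combining this with the FHL guarantee and the size-reduction property of balanced separators yields the claimed bound.
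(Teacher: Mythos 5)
The paper does not prove this theorem at all: it is imported verbatim as Theorem~3 of Fomin, Lokshtanov, Misra, and Saurabh~\cite{FominLMPS11}, so there is no internal proof to compare your sketch against. Your plan does correctly identify the architecture used in that reference — a Feige--Hajiaghayi--Lee balanced-separator recursion paying $\BigO(\sqrt{\log n})$ per level over $\BigO(\log n)$ levels, the disjointness observation that keeps the sum of subinstance optima bounded by $\opt$ across a level, and a preprocessing phase meant to convert $\log n$ into $\log\opt$.

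There are, however, at least two genuine gaps. First, the claim that an optimal hitting set $S^*$ ``is also an $\alpha$-balanced vertex separator'' is false: $G - S^*$ is $\F$-minor-free and hence of bounded treewidth, but it can perfectly well be connected (a long path, say), in which case $S^*$ balances nothing. What FHL actually needs is that $G$ has \emph{some} small balanced separator; the correct statement is that $S^*$ together with a balanced separator of the bounded-treewidth residual graph is one, of size $\opt + \BigO(1)$. Second, and more seriously, the reduction to $|V(G)| \le \mathrm{poly}(\opt)$ cannot be extracted from Lemma~\ref{protrusionDecomposition}. That lemma takes a tree decomposition and a set $S$ and returns a \emph{superset} $S' \supseteq S$ whose components have smaller boundaries; it never deletes a vertex and so cannot shrink the graph. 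The step you are eliding is protrusion \emph{replacement} — replacing large, constant-boundary, bounded-treewidth protrusions by constant-size equivalent gadgets, and then using a grid-minor packing bound to argue that after exhaustive replacement only $\mathrm{poly}(\opt)$ vertices survive. That is precisely the technically heaviest part of the Fomin et al.\ argument and is not available as a black box in the present paper. A smaller unaddressed wrinkle is that your stopping threshold is phrased in terms of $\opt$, which the algorithm does not know; one needs either a doubling search over the target value or a threshold stated purely in computable quantities.
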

We also use Courcelle's theorem for graphs of bounded treewidth.

\begin{proposition}[\cite{ArnborgLS91,Bodlaender96,BoriePT92,Courcelle90,CourcelleM93}]
\label{courcellesTheorem}
Let~$\phi$ be a property that is expressible in Monadic Second Order Logic. For any fixed integer~$w \geq 1$, there is an algorithm that, given a graph~$G$ of treewidth at most~$w$ as input, finds a largest (alternatively, smallest) set~$S$ of vertices of~$G$ that satisfies~$\phi$ in time~$f(w, |\phi|) |V (G)|$.
\end{proposition}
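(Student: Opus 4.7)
The plan is to follow the standard dynamic programming proof of Courcelle's theorem on a tree decomposition, combined with a bounded-type argument from the theory of MSO on graphs. First I would invoke Bodlaender's linear-time algorithm to compute a tree decomposition of~$G$ of width at most~$w$, then transform it into a rooted \emph{nice} tree decomposition in which each node is one of \emph{leaf}, \emph{introduce-vertex}, \emph{forget-vertex}, or \emph{join}. The size of the decomposition stays~$\BigO(|V(G)|)$ and the width stays~$w$, so all subsequent combinatorial quantities depend only on~$w$ and~$|\phi|$.

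The heart of the proof is the introduction of MSO \emph{types}. Let~$q$ be the quantifier rank of~$\phi$ (and of the formula expressing that~$S$ is a valid witness for the optimization version). Two structures~$(H_1, \overline{U}_1)$ and~$(H_2, \overline{U}_2)$ obtained by restricting the graph and the free set variables to a subtree rooted at node~$t$ are said to be \emph{$q$-equivalent over the bag}~$B_t$ if they satisfy the same MSO formulas of quantifier rank at most~$q$ whose free variables are interpreted in~$B_t$. A standard Ehrenfeucht--Fra\"iss\'e / Feferman--Vaught style argument shows that the number of such equivalence classes is bounded by a function~$g(w,|\phi|)$ that is independent of~$|V(G)|$, and that each class has a finite canonical representation. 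I would use this as the state space of the DP.

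Next I would run a bottom-up dynamic program. For each node~$t$ of the nice decomposition and each interpretation of the free set variables of~$\phi$ on~$B_t$, I store the set of types that are realizable by some extension of the interpretation to the whole subtree rooted at~$t$. The tables for the four node kinds can be updated from their children's tables using fixed, purely type-theoretic rules: introducing a vertex adds it to~$B_t$ and updates the type accordingly; forgetting a vertex projects the type onto~$B_t \setminus \{v\}$; a join node combines two type tables by a composition operation justified by the Feferman--Vaught theorem. Each update takes~$g(w,|\phi|)^{\BigO(1)}$ time, giving total running time~$f(w,|\phi|)\cdot |V(G)|$. Acceptance is read off by checking at the root whether the empty interpretation realizes a type in which~$\phi$ holds.

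For the optimization variant in the proposition (finding a largest or smallest~$S$ satisfying~$\phi$), I would augment each stored type with the size of the best partial witness achieving that type, and combine sizes additively during joins while keeping only the best representative per type. The main obstacle is establishing the bounded-type lemma; this is the non-trivial model-theoretic ingredient, and I would prove it either by a direct Ehrenfeucht--Fra\"iss\'e game argument or by invoking the equivalence between MSO-definable properties of bounded-treewidth graphs and recognizability by tree automata over the algebra of boundaried graphs, as formalized by Courcelle and later refined by Arnborg, Lagergren, and Seese; the remaining work (nice decompositions, the four update rules, and the size augmentation) is then purely mechanical.
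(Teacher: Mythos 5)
The paper states Proposition~\ref{courcellesTheorem} as a black-box citation to Courcelle's theorem and its optimization extension (the \cite{ArnborgLS91,Bodlaender96,BoriePT92,Courcelle90,CourcelleM93} references), and gives no proof of its own, so there is nothing in the paper to compare against. Your sketch correctly reconstructs the standard argument behind those citations: Bodlaender's linear-time decomposition, conversion to a nice tree decomposition, the bounded-type lemma via Ehrenfeucht--Fra\"iss\'e / Feferman--Vaught compositionality, a bottom-up DP over type tables, and the Arnborg--Lagergren--Seese size-augmentation for the optimization variant. One small caveat worth fixing in a full write-up: at a join node you cannot simply add the witness sizes of the two children, since the bag vertices are common to both subtrees; the usual fix is to charge each vertex's contribution only at its forget node (or subtract $|S\cap B_t|$ once), which keeps the accounting additive.
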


\begin{lemma} \label{boundedTwOCT}
For every fixed value of~$w \geq 1$ there is a linear-time algorithm which given a graph~$G$ of treewidth at most~$w$ computes a minimum-size odd cycle transversal~$S \subseteq V(G)$.
\end{lemma}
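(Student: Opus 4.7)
The plan is to apply Courcelle's theorem (Proposition~\ref{courcellesTheorem}) to a suitable MSO formulation of \oct. Specifically, I would first write an MSO formula $\phi(S)$ over the graph $G$ expressing ``$S \subseteq V(G)$ is an odd cycle transversal of $G$'', and then invoke Courcelle's theorem to obtain, for fixed treewidth $w$, a linear-time algorithm computing a smallest $S$ satisfying $\phi(S)$.

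The MSO formula $\phi(S)$ can be written as a statement asserting the existence of vertex sets $P, Q \subseteq V(G) \setminus S$ forming a proper $2$-coloring of $G - S$. Concretely, take
\[
\phi(S) \;\equiv\; \exists P\, \exists Q \bigl[ \mathrm{Partition}(P, Q, V \setminus S) \wedge \mathrm{Indep}(P) \wedge \mathrm{Indep}(Q) \bigr],
\]
where $\mathrm{Partition}(P, Q, V \setminus S)$ asserts that $P$ and $Q$ partition $V(G) \setminus S$ (expressible using $\forall v$: $(v \in S) \lor (v \in P \land v \notin Q) \lor (v \in Q \land v \notin P)$), and $\mathrm{Indep}(X) \equiv \forall u\, \forall v\, \bigl((u \in X \wedge v \in X) \to \neg E(u,v)\bigr)$. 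This formula has constant size (independent of $G$), and it holds exactly when $G - S$ is bipartite, i.e., when $S$ is an odd cycle transversal.

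Now I apply Proposition~\ref{courcellesTheorem} with this fixed formula $\phi$, asking for a smallest set $S$ satisfying $\phi(S)$. Since $\phi$ has constant size and $w$ is treated as a constant, the dependence $f(w, |\phi|)$ becomes an absolute constant, and the algorithm runs in time $\BigO(|V(G)|)$ on graphs of treewidth at most $w$. The returned set $S$ is, by construction, a minimum-size odd cycle transversal of $G$. There is no real obstacle here beyond writing down the MSO formula and checking that Courcelle's theorem applies; the only minor point to verify is that the quantification over $P$ and $Q$ is over vertex sets (monadic second-order), which it is.
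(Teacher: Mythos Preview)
Your proposal is correct and follows essentially the same approach as the paper: both write an MSO formula asserting that $V(G)\setminus S$ can be covered by two independent sets and then invoke Courcelle's theorem (Proposition~\ref{courcellesTheorem}) to minimize $|S|$ in linear time for fixed treewidth. The only cosmetic difference is that the paper does not bother enforcing that the two color classes are disjoint from each other or from $S$, which is harmless since coverage by two independent sets already characterizes bipartiteness of $G-S$.
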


\begin{proof}
We will apply Courcelle's theorem, using the fact that OCT can be expressed in Monadic Second Order Logic: if~$G$ is a graph then finding a minimum OCT is equivalent to minimizing the cardinality of a set~$S \subseteq V(G)$ which satisfies the following MSOL formula~$\phi(S)$:
\begin{align*}
\phi(S) :=& \exists X, Y \subseteq V(G): \forall v \in V(G): [v \in S \vee v \in X \vee v \in Y] \wedge \\ & \mbox{Independent}(X) \wedge \mbox{Independent}(Y) \\
\mbox{where}\\ 
\mbox{Independent}(Z) :=& \forall u, v \in Z: \neg \mbox{Adj}(u,v)
\end{align*}
The formula uses the fact that if~$S$ is an odd cycle transversal then~$G - S$ is bipartite and hence the vertices of~$G - S$ can be covered by two independent sets. Since the formula~$\phi(S)$ is fixed and does not depend on~$w$, the lemma now follows from~\proposref{courcellesTheorem}.
\qed
\end{proof}

\begin{lemma} \label{approximateBipartiteBoundedTWDeletion}
Let~$w \geq 1$ be a fixed integer. There is a polynomial-time algorithm which gets as input a graph~$G$, and computes a set~$X \subseteq V(G)$ such that~$G - X \in \bipartite \cap \gtw{w}$  with~$|X| \in \BigO(\opt \cdot \log^{3/2} \opt)$, where~$\opt$ is the minimum size of such a deletion set.
\end{lemma}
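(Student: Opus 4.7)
The plan is to implement the two-stage strategy informally described just before the lemma: first approximate a vertex set whose removal yields treewidth at most $w$, and then, on the resulting bounded-treewidth graph, compute an exact odd cycle transversal. Concretely, I would compute a set $S_{\tw(w)} \subseteq V(G)$ by applying \thmref{approximateHittingMinors} to the obstruction set $\F_{\tw(w)}$; by \proposref{boundedtwExcludesPlanarGraph} this obstruction set contains a planar graph, so the hypotheses of \thmref{approximateHittingMinors} are met and we obtain $|S_{\tw(w)}| \in \BigO(\opt_{\tw} \cdot \log^{3/2} \opt_{\tw})$, where $\opt_{\tw}$ is the minimum size of a deletion set to $\gtw{w}$. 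Then I would compute a minimum odd cycle transversal $S_{\oct}$ of the bounded-treewidth graph $G - S_{\tw(w)}$ via \lemmaref{boundedTwOCT} (Courcelle's theorem applied to the MSOL-expressible OCT property). Finally, output $X := S_{\tw(w)} \cup S_{\oct}$.

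The correctness of this output is immediate: $G - X = (G - S_{\tw(w)}) - S_{\oct}$ is a subgraph of $G - S_{\tw(w)} \in \gtw{w}$, so it has treewidth at most $w$, and by construction $S_{\oct}$ is an odd cycle transversal of $G - S_{\tw(w)}$, so $G - X$ is bipartite; hence $G - X \in \bipartite \cap \gtw{w}$. Both computations run in polynomial time for fixed $w$.

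What remains is the size analysis, and this is the step that requires the most care, though it is conceptually routine. Let $X^*$ be a minimum-size deletion set, so $|X^*| = \opt$ and $G - X^* \in \bipartite \cap \gtw{w}$. Since $X^*$ is in particular a deletion set to $\gtw{w}$, we have $\opt_{\tw} \leq \opt$, and the approximation guarantee gives $|S_{\tw(w)}| \in \BigO(\opt \cdot \log^{3/2} \opt)$. For the second term, observe that $X^* \setminus S_{\tw(w)}$ is an odd cycle transversal of $G - S_{\tw(w)}$: removing it leaves $G - (S_{\tw(w)} \cup X^*)$, which is an induced subgraph of the bipartite graph $G - X^*$. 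Since $S_{\oct}$ is a \emph{minimum} OCT of $G - S_{\tw(w)}$, we conclude $|S_{\oct}| \leq |X^* \setminus S_{\tw(w)}| \leq \opt$.

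Putting the bounds together yields $|X| \leq |S_{\tw(w)}| + |S_{\oct}| \in \BigO(\opt \cdot \log^{3/2} \opt) + \opt = \BigO(\opt \cdot \log^{3/2} \opt)$, as claimed. The only possible obstacle is a subtlety in the approximation bound: \thmref{approximateHittingMinors} is phrased in terms of the optimum of \emph{its} problem (hitting minors in $\F_{\tw(w)}$), not of the composite problem, so one has to make the inequality $\opt_{\tw} \leq \opt$ explicit and verify that $\log^{3/2}$ is monotone enough to absorb this. No new ideas are needed beyond the cited machinery.
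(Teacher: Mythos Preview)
Your proposal is correct and follows essentially the same two-stage approach as the paper's proof: approximate a treewidth-$w$ deletion set via \thmref{approximateHittingMinors} (legitimized by \proposref{boundedtwExcludesPlanarGraph}), then exactly solve \oct on the bounded-treewidth remainder via \lemmaref{boundedTwOCT}, and output the union. Your size analysis is in fact slightly more explicit than the paper's---you exhibit $X^* \setminus S_{\tw(w)}$ as a concrete witness bounding $|S_{\oct}|$, whereas the paper just observes that an optimal OCT of a subgraph cannot exceed the global $\opt$---but the arguments are equivalent.
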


\begin{proof}
We introduce a little bit of terminology to simplify the proof. Let~$G$ be a graph. A subset~$S \subseteq V(G)$ is called a \emph{treewidth-$w$ deletion set} for~$G$ if~$G - S \in \gtw{w}$. If the graph~$G - S$ is also bipartite, then~$S$ is a \emph{treewidth-$w$ odd cycle transversal}.

Now fix an integer~$w \geq 1$: we will sketch the actions of the algorithm on input~$G$. Let~$\opt_{\tw(w)}$ denote the minimum cardinality of a treewidth-$w$ deletion set for~$G$, and let~$\opt_{\tw(w) \oct}$ be the minimum cardinality of a treewidth-$w$ odd cycle transversal.

By \proposref{boundedtwExcludesPlanarGraph} there is a finite set of graphs~$\F_{\tw(w)}$ such that for all~$S \subseteq V(G)$ we have~$\tw(G - S) \leq w$ if and only if~$G - S$ excludes all graphs of~$\F_{\tw(w)}$ as a minor, and~$\F_{\tw(w)}$ includes a planar graph. Hence we may apply \thmref{approximateHittingMinors} to compute in polynomial time a set~$S_{\tw(w)}$ such that~$G - S_{\tw(w)}$ excludes all graphs of~$\F_{\tw(w)}$ as a minor (and hence~$\tw(G - S_{\tw(w)}) \leq w$), and~$|S_{\tw(w)}| \in \BigO(\opt _{\tw(w)} \cdot \log^{3/2} \opt_{\tw(w)})$.

Observe that we have~$\opt_{\tw(w)} \leq \opt_{\tw(w) \oct}$ since a treewidth-$w$ odd cycle transversal must also be a treewidth-$w$ deletion set. Now consider the graph~$G - S_{\tw(w)}$ which has treewidth at most~$w$, but which might not be bipartite. We will compute our approximate treewidth-$w$ odd cycle transversal by taking the union of~$S_{\tw(w)}$ and a minimum odd cycle transversal of the graph~$G - S_{\tw(w)}$. Since~$G - S_{\tw(w)}$ has treewidth at most~$w$ and we take~$w$ to be a constant, \lemmaref{boundedTwOCT} shows we can compute an optimal odd cycle transversal~$S_{\oct}$ of~$G - S_{\tw(w)}$ in linear time. Observe that since a treewidth-$w$ odd cycle transversal cannot be smaller than an odd cycle transversal of a subgraph, we must have~$|S_{\oct}| \leq \opt_{\tw(w)\oct}$. Since the graph~$(G - S_{\tw(w)}) - S_{\oct}$ has treewidth at most~$w$ and is bipartite, we find that~$X := S_{\tw(w)} \cup S_{\oct}$ is a treewidth-$w$ odd cycle transversal of~$G$ and it follows from our earlier observations that this set satisfies the claimed size bound.
\qed
\end{proof}}


\approximatingtheparameter

\section{Omitted proofs and definitions of Section~\ref{section:lowerbounds}}

\newcommand{\crosscompositiondefs}{

\begin{definition}[\cite{BodlaenderJK11}]
\label{polyEquivalenceRelation}
An equivalence relation~\R on~$\Sigma^*$ is called a \emph{polynomial equivalence relation} if the following two conditions hold:
\begin{enumerate}
	\item There is an algorithm that given two strings~$x,y \in \Sigma^*$ decides whether~$x$ and~$y$ belong to the same equivalence class in~$(|x| + |y|)^{O(1)}$ time.
	\item For any finite set~$S \subseteq \Sigma^*$ the equivalence relation~\R partitions the elements of~$S$ into at most~$(\max _{x \in S} |x|)^{O(1)}$ classes.
\end{enumerate}
\end{definition}

\begin{definition}[\cite{BodlaenderJK11}]
\label{crossComposition}
Let~$L \subseteq \Sigma^*$ be a set and let~$Q \subseteq \Sigma^* \times \mathbb{N}$ be a parameterized problem. We say that~$L$ \emph{cross-composes} into~$Q$ if there is a polynomial equivalence relation~\R and an algorithm which, given~$t$ strings~$x_1, x_2, \ldots, x_t$ belonging to the same equivalence class of~\R, computes an instance~$(x^*,k^*) \in \Sigma^* \times \mathbb{N}$ in time polynomial in~$\sum _{i=1}^t |x_i|$ such that:
\begin{enumerate}
	\item~$(x^*, k^*) \in Q \Leftrightarrow x_i \in L$ for some~$1 \leq i \leq t$,
	\item~$k^*$ is bounded by a polynomial in~$\max _{i=1}^t |x_i|+\log t$.
\end{enumerate}
\end{definition}

We point out that all logarithms are base two. For ease of reading we let a sequence of~$R$ zeros be the binary expansion of~$2^R$ (this still gives one unique representation for all numbers from~$1$ to~$2^R$).

\begin{theorem}[\cite{BodlaenderJK11}, Corollary 10] \label{crossCompositionNoKernel}
If a set~$L$ is NP-hard under Karp reductions and~$L$ cross-composes into a parameterized problem~$Q$ then there is no polynomial kernel for~$Q$ unless \containment.
\end{theorem}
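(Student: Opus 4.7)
The strategy is to contrapose: assume that $Q$ admits a polynomial kernel, say of size bounded by $p(k)$ for a fixed polynomial $p$, and combine it with the cross-composition from $L$ to obtain an OR-distillation of $L$; then $\containment$ follows from the Fortnow--Santhanam theorem, or more precisely from its Dell--van Melkebeek strengthening that tolerates a logarithmic slack in the input batch size. Let $\R$ and the cross-composition algorithm be as supplied by Definitions~\ref{polyEquivalenceRelation} and~\ref{crossComposition}, and let $r$ be a polynomial witnessing the parameter bound in Definition~\ref{crossComposition}(2).

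First I would design a polynomial-time compression for $\bigvee_{i=1}^{t}[x_i\in L]$. Given any batch $x_1,\ldots,x_t$ with $n:=\max_i|x_i|$, apply $\R$ to partition the $x_i$'s into equivalence classes in polynomial time. By property~(2) of Definition~\ref{polyEquivalenceRelation}, the number of classes is bounded by some polynomial $q(n)$. To each class I apply the cross-composition, obtaining an instance $(y_j,k_j^*)$ of $Q$ with $k_j^*\le r(n+\log t)$ that is yes iff some member of the class belongs to $L$. The assumed polynomial kernel then compresses $(y_j,k_j^*)$ to an instance of size at most $p(r(n+\log t))$. Concatenating these compressed instances, together with an outer OR-gadget, produces a string whose membership in a fixed NP language decides $\bigvee_i[x_i\in L]$; its total size is at most $q(n)\cdot p(r(n+\log t))$, which is polynomial in $n+\log t$.

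Second, observe that this is exactly an OR-distillation of $L$ in the sense of Dell--van Melkebeek: from $t$ instances of $L$ each of size at most $n$ we produce, in polynomial time, a single string of length $\mathrm{poly}(n+\log t)$ whose membership in a fixed language decides the OR. Since $L$ is NP-hard under Karp reductions, the refined Fortnow--Santhanam theorem then yields $\containment$, contradicting the assumption that $Q$ has a polynomial kernel. This is the contrapositive of the claim.

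The principal obstacle is handling the $\log t$ slack: a direct appeal to the classical Fortnow--Santhanam theorem would require an output of size $\mathrm{poly}(n)$ independent of $t$, whereas the cross-composition step only guarantees parameter $\mathrm{poly}(n+\log t)$. The cleanest resolution is to invoke the Dell--van Melkebeek strengthening (proved via an oracle communication/universal-hashing argument) which explicitly permits distillations whose output size is polynomial in $n+\log t$; alternatively one can cap $t$ by a polynomial in $n$ via self-reduction so that $\log t\in\BigO(\log n)$ and the classical Fortnow--Santhanam theorem applies directly. In either case the pigeonhole step that justifies partitioning into polynomially many classes (and thus amortizing the slack over the batch) is essential, and this is precisely the role played by the polynomial equivalence relation $\R$ in Definition~\ref{crossComposition}.
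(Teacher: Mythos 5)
The paper does not prove Theorem~\ref{crossCompositionNoKernel}; it imports it verbatim from the cited reference~\cite{BodlaenderJK11} and uses it as a black box for the lower bounds in Section~\ref{section:lowerbounds}. So there is no ``paper's own proof'' to compare against. That said, your sketch is a faithful high-level account of how that theorem is actually established in the literature: assume a polynomial kernel, partition the input batch by the polynomial equivalence relation~\R (using computability and the polynomial bound on the number of classes from Definition~\ref{polyEquivalenceRelation}), cross-compose each class into a $Q$-instance with parameter bounded by a polynomial in $\max_i|x_i|+\log t$, shrink each with the kernel, take the OR of the results, and then invoke the Fortnow--Santhanam-type collapse. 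You also correctly identify the $\log t$ slack as the only subtle point and give both standard remedies: either appeal to the variant tolerating $\mathrm{poly}(n+\log t)$ output, or de-duplicate so that $t\le 2^{\BigO(n)}$ and hence $\log t\in\BigO(n)$, after which the classical bound suffices. The only cosmetic gap is that you gloss over the fact that ``concatenation plus an outer OR-gadget'' must itself define a fixed language in NP so that the resulting object genuinely is an OR-distillation; this is routine (use the language of tuples of $Q$-instances at least one of which is a yes-instance) but should be said explicitly. In short: correct sketch, matching the standard argument, for a theorem the present paper deliberately leaves unproved.
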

}

\crosscompositiondefs

The following Theorems~\ref{theorem:lowerbound:outerplanar},~\ref{theorem:lowerbound:cluster}, ~\ref{theorem:lowerbound:cocluster}, and~\ref{theorem:lowerbound:weightedoctbyvertexcover} together imply Theorem~\ref{theorem:lowerbounds}.

\newcommand{\lowerboundproofouterplanar}{
\begin{theorem}\label{theorem:lowerbound:outerplanar}
$(\outerplanar)$-\oct does not admit a polynomial kernelization unless \containment.
\end{theorem}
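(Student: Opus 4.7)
The plan is to apply the cross-composition framework of Theorem~\ref{crossCompositionNoKernel}, cross-composing the NP-hard \oct problem into $(\outerplanar)$-\oct. For the polynomial equivalence relation $\eqvr$, I would place two \oct inputs in the same class whenever they have the same number of vertices $n$ and the same budget $\ell$ (collecting malformed strings in one dummy class). On a class I may therefore assume that the $t$ input instances $(G_1, \ell), \ldots, (G_t, \ell)$ all share a common vertex set $V = \{v_1, \ldots, v_n\}$; padding with trivial \no-copies I may further assume $t = 2^R$ with $R = \lceil \log t \rceil$. The task is to build a composed instance $(G^*, X, \ell^*)$ of $(\outerplanar)$-\oct with both $|X|$ and $\ell^*$ polynomial in $n+R$, such that $(G^*, X, \ell^*)$ is a \yes-instance iff some $(G_j, \ell)$ is a \yes-instance of \oct.

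The composed graph $G^*$ would consist of three ingredients: (i) a \emph{selector} on $\BigO(R)$ vertices whose feasible 2-colourings bijectively encode indices $j^* \in [t]$; (ii) a shared copy of the vertex set $V$; and (iii) for each $j \in [t]$ and each edge $e = \{v_a, v_b\} \in E(G_j)$, an \emph{activation gadget} whose net effect is to enforce the bichromatic constraint on $\{v_a, v_b\}$ precisely when the selector encodes $j$. Both the selector and the shared copy of $V$ would be placed into the deletion set $X$, so $|X| \in \BigO(n+R)$, which is polynomial in $\max_i |x_i| + \log t$ as required by Definition~\ref{crossComposition}. Setting $\ell^* := \ell$ (plus a small additive term for selector-internal odd cycles if needed), the equivalence would follow by arguing that any \yes-certificate of $(G^*, X, \ell^*)$ fixes a unique index $j^*$ via its selector colouring and restricts to an OCT of $G_{j^*}$, and conversely that any OCT of $G_{j^*}$ lifts to a solution of $G^*$ by colouring the selector to encode $j^*$ while leaving the inactive instances constraint-free. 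An appeal to Theorem~\ref{crossCompositionNoKernel} then closes the argument.

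The central technical obstacle is guaranteeing that $G^*-X$ is outerplanar, since outerplanar graphs are highly restricted (no $K_4$- or $K_{2,3}$-minor, treewidth at most two). My approach is to design each activation gadget as a short path whose internal vertices all lie in $G^*-X$ and whose endpoints lie in $X$; if the paths are pairwise vertex-disjoint away from $X$, then $G^*-X$ is a disjoint union of paths, which is automatically outerplanar. The delicate point is to shape the path-gadgets so that their parity is selector-controlled --- odd (i.e.\ enforcing bichromaticity) when the selector encodes $j$ and imposing no constraint otherwise --- despite the fact that a plain path always forces a parity relation on its endpoints. One way out is to provide, for each edge of each instance, a pair of complementary sub-paths (one odd, one even) routed through distinct selector vertices, combined so that exactly one of them is deletion-forced while the other realises the desired bichromatic constraint. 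Carrying out this design, and verifying that the resulting collection of paths inside $G^*-X$ really does avoid $K_4$- and $K_{2,3}$-minors, is what I expect to be the main difficulty of the proof.
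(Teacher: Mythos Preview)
Your overall framework—cross-composition via Theorem~\ref{crossCompositionNoKernel} with a selector encoding an index $j^*\in[t]$—matches the paper. But the concrete gadget design you sketch has a real gap, and the paper resolves it by a rather different construction.

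The difficulty you yourself flag (``a plain path always forces a parity relation on its endpoints'') is in fact fatal for a scheme with budget $\ell^*\approx\ell+R$ and gadgets that are bare paths. Suppose the activation gadget for an edge $\{v_a,v_b\}$ of $G_j$ is a path visiting $v_a$, then selector vertices $s_1^{j_1},\dots,s_R^{j_R}$, then $v_b$. Deactivating it by deleting one intermediate $s_p^{j_p}$ still leaves the first segment intact whenever $s_1^{j_1}$ survives, i.e.\ whenever $j_1=j^*_1$. That segment forces a fixed parity between $v_a$ and $s_1^{j^*_1}$. Ranging over all inactive instances and all their edges, every vertex of $V$ that ever occurs as a first endpoint of an edge in some $G_j$ with $j_1=j^*_1$ receives the \emph{same} colour relative to $s_1^{j^*_1}$; so two such vertices that happen to be adjacent in $G_{j^*}$ are forced monochromatic, and the composed instance becomes \no\ irrespective of $G_{j^*}$. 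Your ``pair of complementary sub-paths'' variant does not escape this: whichever path survives imposes \emph{some} parity on $v_a,v_b$, so with only $\BigO(\ell+R)$ deletions the $\Theta(mt)$ inactive gadgets cannot all be rendered constraint-free.

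The paper sidesteps both issues. First, it composes from \textsc{Vertex Cover} rather than \oct, so the per-edge constraint to be encoded is ``delete $p$ or $q$'' (realised by an odd cycle through $p$, $q$, and a subdivided-$K_n$ \emph{solution selector} inside $X$), not a colouring relation. Second—and this is the structural idea you are missing—the gadgets outside $X$ (\emph{edge checkers}) are not bare paths: each is a short path carrying $R$ pendant triangles, so each already requires $R$ internal deletions. The budget is set to $\ell'=nR+n\cdot mt\cdot R+\ell$, which depends on $t$; this is permitted since only the parameter $|X|$, not $\ell'$, must be small. A tightness argument (using $n$-fold replication) then shows that some instance selector receives exactly $R$ deletions (fixing $i$), and that for the selected instance $i$ the $R$ deletions inside an edge checker are forced onto the pendant-triangle vertices (via further triangles shared with the instance selector), leaving the main path intact and creating the odd cycle through $p,q$; for non-selected instances one of the $R$ deletions may instead hit a main-path vertex, splitting the checker into two trees with \emph{no} residual constraint. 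The set $X$ consists of the instance selectors ($n$ copies of $R$ triangles) together with the solution selector, of total size $\BigO(nR+n^2)$.
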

\begin{proof}
We prove the lower bound by a cross-composition from (unparameterized) \textsc{Vertex Cover}. An instance~$x$ of \textsc{Vertex Cover} consists of a graph~$G=(V,E)$ and an integer~$\ell$, asking whether~$G$ has a vertex cover of size at most~$\ell$. We define two instances~$(G_i=(V_i,E_i),\ell_i)$ and~$(G_j=(V_j,E_j),\ell_j)$ to be equivalent under a relation~\R if and only if~$|V_i|=|V_j|$,~$|E_i|=|E_j|$, and~$\ell_i=\ell_j<|V_i|$. For technical reasons, we let all ill-formed instances, i.e., those not encoding a graph~$G$ and an integer~$\ell$, be equivalent under~\R. Since those instances are trivially \no, they may equivalently be deleted from the input to the cross-composition and we will tacitly ignore them henceforth. Similarly, we let all instances~$(G,\ell)$ with~$\ell\geq|V(G)|$ form one equivalence class. However, since such an instance is trivially \yes, the output can be any dummy \yes-instance, if such an instance is in the input to the cross-composition; henceforth~$\ell<|V(G)|$. Clearly,~\R is a polynomial equivalence relation since instances of size at most~$N$ are partitioned into at most~$N^3+2$ classes, and equivalence can be checked in polynomial time.

Let~$x_1,\ldots,x_t$ be~$t$ instances of \textsc{Vertex Cover} that are equivalent under~\R. W.l.o.g.\ we assume~$t=2^R$ for some integer~$R$ (otherwise we could copy one instance sufficiently often, at most doubling the input size). Each instance~$x_i$ asks whether a graph~$G_i$ on~$n$ vertices and~$m$ edges has a vertex cover of size at most~$\ell$. We construct an instance of odd cycle transversal parameterized by a modulator from an outerplanar graph, by first constructing a graph~$G'$ by adding \emph{instance selectors}, a \emph{solution selector}, and \emph{edge checkers}:
\begin{itemize}
\item \textbf{Instance selectors:} An instance selector consists of~$R$ vertex-disjoint triangles~$T_1,\ldots,T_R$. Two vertices of each triangle are called~$0$-vertex respectively~$1$-vertex. Any odd cycle transversal for the instance selector must contain at least~$R$ vertices. We make a total of~$n$ copies of this construction, which gives a total cost of at least~$n\cdot R$ vertex deletions for handling all odd cycles of the instance selectors.
\item \textbf{Solution selector:} We start from a clique on~$n$ vertices, corresponding to the~$n$ vertices in each of the graphs~$G_i$, and subdivide each of its edges once. We obtain a bipartite graph with bipartitions of size~$n$ (the original vertices) and~$\binom{n}{2}$ vertices. The so obtained independent set of size~$n$ will encode the selection of~$n$ vertices. The other~$\binom{n}{2}$ vertices will serve to complete odd cycles.
\item \textbf{Edge checkers:} We add~$n$ copies of the following edge checker for each of the~$m$ edges of each of the~$t$ graphs. The construction will be outerplanar.

We start with a path of at least~$R$ vertices and with an even number of vertices (i.e.,~$R$ or~$R+1$ vertices). To the first~$R$ of these vertices we add a triangle with a pending vertex.  Let~$v$ be such a vertex on the path: then we add vertices~$a_v$,~$b_v$, and~$c_v$ with edges~$\{v,a_v\}$,~$\{v,b_v\}$,~$\{a_v,b_v\}$, and~$\{b_v,c_v\}$ (i.e., a triangle on the vertices~$v$,~$a_v$, and~$b_v$ with a pending vertex~$c_v$ at~$b_v$). We note that any odd cycle transversal for an edge checker has size at least~$R$ since it contains~$R$ vertex-disjoint triangles. This implies a total cost of at least~$n\cdot mt\cdot R$ vertex deletions for removing all odd cycles from all edge checkers (as there are~$n\cdot mt$ edge checkers in total).

An edge checker for an edge~$\{p,q\}$ of~$G_i$ is connected to the instance selectors as well as to~$p$ and~$q$ in the solution selector in the following way; it ensures that one of~$p$ or~$q$ must be chosen if the~$i$-th instance is chosen. Let~$v_1,\ldots,v_R$ be the~$R$ vertices on the path to which we added triangles. For~$j\in[R]$, if the~$j$th bit of the binary expansion of~$i$ is zero then we connect~$b_{v_j}$ and~$c_{v_j}$ to the~$0$-vertex of the~$j$th triangle~$T_j$ of each instance selector. Otherwise we connect~$b_{v_j}$ and~$c_{v_j}$ to the~$1$-vertex of~$T_j$. In both cases the vertices~$b_{v_j}$ and~$c_{v_j}$ form a triangle with a vertex of~$T_j$. Finally, we connect the first vertex of the path to~$p$ and the last vertex to~$q$ (this latter connection is arbitrary, the roles of~$p$ and~$q$ may be exchanged without harm).
\end{itemize}
The cross-composition generates an instance~$x'=(G',\ell',X)$, where~$\ell':=n\cdot R+n\cdot mt\cdot R+\ell$, and~$X$ contains all vertices of the instance and the solution selectors. Clearly,~$G'-X$ is an outerplanar graph, as it is a disjoint union of edge checkers, and the parameter value~$k'=|X|$ of~$x'$ is bounded by~$n\cdot 3R+n+\binom{n}{2}$, which is polynomial in~$\max_i|x_i|+\log t$. It is easily checkable that the construction of~$G'$ can be performed in time polynomial in~$\sum_i |x_i|$.

For correctness of the cross-composition we will show that~$x'$ is \yes if and only if at least one instance~$x_i$ is \yes. I.e., we establish that~$G'$ has an odd cycle transversal of size at most~$\ell'$ if and only if at least one of the graphs~$G_i$ has a vertex cover of size at most~$\ell$.

$\boldsymbol{(\Rightarrow)}$ We assume that some instance~$x_i$ is \yes, and we let~$S$ be a vertex cover for~$G_i$ of size at most~$\ell$. We define a set~$S'$ to serve as an odd cycle transversal of~$G'$. First, we include from the solution selector the~$\ell$ vertices that correspond to~$S$. Second, we add from each of the~$n$ instance selectors~$0$- and~$1$-vertices matching the complement of the binary expansion of~$i$: if the~$j$th bit of~$i$ is~$0$ then we add the~$1$-vertex of~$T_j$ to~$S'$ and otherwise we add the~$0$-vertex. In edge checkers for edges of~$G_i$ we add the~$b$-vertices from the triangles. For other edge checkers, say, for graphs~$G_{i'}$ with~$i'\neq i$, we pick a position~$j$ where the binary expansions of~$i$ and~$i'$ differ, and add~$v_j$ to~$S'$. For the other positions we add~$b_{v_{j'}}$ to~$S'$ for all~$j'\in[R]\setminus\{j\}$. Thus we pick a set~$S'$ of size at most~$n\cdot R+n\cdot mt\cdot R+\ell$.

Let us argue that~$S'$ is indeed an odd cycle transversal for~$G'$. First, we observe that there are no odd cycles in the instance selectors in~$G'-S'$, since we selected the~$0$ or the~$1$ vertex of each triangle. Second, let us consider the edge checkers:
\begin{itemize}
\item In edge checkers for~$G_i$ we added all~$b$-vertices of the triangles, hence those checkers are disconnected from the instance selectors in~$G'-S'$. Furthermore, for each of the corresponding edges the set~$S$ contains one of its endpoints, and hence we have added one of its endpoints in the solution selector to~$S'$. It can be easily checked that the remainders of those edge checkers are (caterpillar) trees attached to at most one of the endpoints (in the solution selector) of the corresponding edge.
\item In other edge checkers, for graphs~$G_{i'}$, we have also added all~$b$-vertices of the triangles except for one position, say~$j$, where the binary expansions of~$i$ and~$i'$ differ; there we added~$v_j$. Hence, also these edge checkers are disconnected from the instance selectors since our choice on the instance selector (in particular in position~$p$) is exactly opposite to~$i$ (and matching~$i'$).

By adding~$v_j$ to~$S'$ we have ensured that in~$G'-S'$ such an edge checker is split into two (caterpillar) trees each attached to at most one vertex in the solution selector.
\end{itemize}
Clearly, the solution selectors are already bipartite in~$G'$ (and we have checked for odd cycles via the edge checkers). Hence~$S'$ is indeed an odd cycle transversal of~$G'$, and~$x'$ is \yes.

$\boldsymbol{(\Leftarrow)}$ We assume that~$x'$ is \yes. Let~$S'$ be an odd cycle transversal of~$G'$ of size at most~$\ell'$. From each instance selector and from each edge checker,~$S'$ must contain~$R$ vertices. Therefore it contains at most~$\ell$ vertices from the solution selector. We define a set~$S$ to contain all vertices of~$S'$ that are in the independent set of size~$n$ (corresponding to vertices of the graphs~$G_i$). Furthermore, for each vertex of~$S'$ that is contained in the independent set of size~$\binom{n}{2}$ of the solution selector (i.e., the one obtained by subdividing all edges of the initial clique on~$n$ vertices), we arbitrarily include one of its two neighbors (instead of the subdivision on, say,~$\{p,q\}$, we include~$p$ or~$q$). Clearly, the size of~$S$ is at most~$\ell$.

Let us first see that there must be an instance selector where~$S'$ selects exactly~$R$ vertices (recall that the minimum is~$R$). The reason is that the total size of~$S'$ would otherwise be at least
\[
n\cdot(R+1)+n\cdot mt\cdot R=n\cdot R+n\cdot mt\cdot R+n>n\cdot R+n\cdot mt\cdot R+\ell=\ell',
\]
since~$\ell<n$. Similarly it can be seen that for each edge of any graph~$G_i$ there must be an edge checker where~$S'$ picked the minimum number of~$R$ vertices. For the following, let us focus on one instance selector and one edge checker per edge and instance where~$S'$ selected the minimum of~$R$ vertices.

Let us first consider the instance selector. Since there are~$R$ triangles in the selector,~$S'$ must contain one vertex of each. We choose an integer~$i$ via its binary expansion: for all~$j\in[R]$, if~$S'$ does not include the~$0$-vertex of triangle~$T_j$ of the selector, then we let the~$j$th bit be~$0$; otherwise we let it be~$1$.

We claim that~$S$ is a vertex cover of~$G_i$, i.e., that~$x_i$ is \yes. Let~$\{p,q\}$ be an edge of~$G_i$ and consider the edge checker (i.e., one where~$S'$ selected exactly~$R$ vertices) corresponding to this edge. By choice of~$i$, for each~$j\in[R]$ the corresponding vertex of triangle~$T_j$ in the edge checker is present in~$G'-S'$. Hence, for each~$j\in[R]$, there is a triangle formed by~$b_{v_j}$,~$c_{v_j}$, and a vertex of~$T_j$, but~$S'$ does not contain the latter. Hence~$S'$ must contain~$b_{v_j}$ or~$c_{v_j}$ (in fact it must be~$b_{v_j}$ on account of the triangle at~$v_j$). Therefore,~$S'$ cannot contain any of the vertices~$v_1,\ldots,v_R$. This implies that the path through these vertices together with~$p$,~$q$, and the subdivision of the former edge~$\{p,q\}$ would give an odd cycle, and the only free option is that~$S'$ contains~$p$,~$q$, or the subdividing vertex. In any case,~$S$ must contain~$p$ or~$q$, proving that it is indeed a vertex cover of~$G_i$.
\qed
\end{proof}
}

\lowerboundproofouterplanar

\newcommand{\lowerboundproofcluster}{
\begin{theorem}\label{theorem:lowerbound:cluster}
$(\cluster)$-\oct does not admit a polynomial kernelization unless \containment.
\end{theorem}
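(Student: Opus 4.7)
The plan is to cross-compose from \textsc{Vertex Cover}, following the template of the~$(\outerplanar)$-\oct lower bound (Theorem~\ref{theorem:lowerbound:outerplanar}) with the same polynomial equivalence relation that identifies \textsc{Vertex Cover} instances of equal type~$(n,m,\ell)$ with~$\ell<n$ (and lumps malformed or trivial instances into dummy classes), padded to~$t=2^R$ equivalent instances. The principal modification is that the edge-checker gadget has to be redesigned so that the non-modulator part~$G'-X$ is a disjoint union of cliques.

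I would take the modulator~$X$ to consist of a solution selector~$S=\{s_1,\dots,s_n\}$ representing the vertex choices shared across all~$G_i$, together with an instance selector whose~$O(R)$ vertices~$\{a_j^0,a_j^1\}_{j\in[R]}$ encode the binary index of the chosen instance; as in the outerplanar proof I would add enough padded copies of the selectors so that a budget-accounting argument guarantees at least one instance selector on which any candidate OCT~$S'$ uses exactly the minimum~$R$ deletions. The bit gadget at each position~$j$ is realised either entirely inside~$X$ or with a single apex placed in~$G-X$ (which stays an isolated vertex of~$G-X$, hence trivially a clique). For each edge~$e=\{p,q\}$ of each~$G_i$, the edge checker then consists of a batch of~$\ell'+1$ fresh vertices~$v_e^1,\dots,v_e^{\ell'+1}$ placed as isolated vertices of~$G-X$, all with identical adjacencies: each~$v_e^j$ is made adjacent to~$s_p$, to~$s_q$, and to the instance-selector vertices matching the binary expansion of~$i$. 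Intra-$X$ edges among the selector vertices are chosen so that an odd cycle through~$v_e^j$, the solution selector and the instance selector arises precisely when the selector's surviving bits encode~$i$ and neither~$s_p$ nor~$s_q$ has been deleted. Since no OCT of size~$\leq\ell'$ can delete all~$\ell'+1$ copies of~$v_e^j$, at least one survives, and any valid~$S'$ must break the corresponding cycle via the solution selector or the instance selector. The modulator satisfies~$|X|=n+O(R)$, which is polynomial in~$\max_i|x_i|+\log t$.

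Correctness then follows the two standard directions. For the forward implication, a vertex cover~$C$ of some~$G_i$ yields the transversal~$S':=\{s_p:p\in C\}$ together with the instance-selector vertices forced by the binary encoding of~$i$, and one verifies that every edge-checker odd cycle is broken. For the reverse implication, given an OCT~$S'$ of size at most~$\ell'$, one uses the padded-selector slack to locate an instance selector used at exactly the minimum cost, reads off its binary encoding to obtain an index~$i$, and argues that the edge-checker at each edge~$e=\{p,q\}$ of~$G_i$ still carries an odd cycle through some surviving~$v_e^j$ unless~$S'$ contains~$s_p$ or~$s_q$; the set $S'\cap S$ therefore projects to a vertex cover of~$G_i$. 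Theorem~\ref{crossCompositionNoKernel} then rules out a polynomial kernel under the stated assumption.

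The main obstacle is the design of the edge-checker gadget under the cluster-graph restriction. In the outerplanar proof the gadget is a path of~$R$ vertices with attached triangles that serves as the "wire" between~$s_p$ and~$s_q$, but no such path can live inside~$G-X$ once~$G-X$ must be a cluster graph: any path of length~$\geq 2$ between distinct cliques is forbidden. My plan is therefore to route the entire~$R$-bit parity test through edges inside~$X$, letting each~$v_e^j$ act only as the single apex that closes the cycle. The delicate technical step will be verifying that no spurious odd cycles arise when the surviving bits of the instance selector only partially match~$i$, or when~$S'$ makes an unexpected choice of which bit vertex to delete; I expect to need a constant-size "firewall" of~$X$-vertices between consecutive bit-tests to avoid accidental parities, but this only inflates~$|X|$ by an additive~$O(R)$ term and therefore keeps~$|X|$ polynomial in~$\max_i|x_i|+\log t$.
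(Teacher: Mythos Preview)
Your plan has a fatal structural problem that does not depend on how the ``firewall'' gadget is eventually filled in. As you describe it, the non-modulator part~$G'-X$ consists entirely of the edge-checker apices~$v_e^j$ (and possibly a handful of bit-gadget apices), all of which are isolated vertices of~$G'-X$. In other words,~$G'-X$ is not merely a cluster graph but an \emph{independent set}, so your~$X$ is a vertex cover of~$G'$. But an independent set is bipartite and of treewidth at most~$1$, so~$(\edgeless)$-\oct is a special case of~$(\bipartite\cap\gtw{1})$-\oct, which by the paper's main positive result (Theorem~\ref{theorem:polyKernelProof}) \emph{does} admit a polynomial kernel. A correct cross-composition of an NP-hard problem into~$(\edgeless)$-\oct would therefore yield~\containment unconditionally; the gadget you are trying to design cannot exist. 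Concretely, once~$G'-X$ is edgeless every vertex of~$G'-X$ interacts with~$X$ only through its (small) neighbourhood in~$X$, and the kernelization of Section~\ref{section:kernelization} will collapse your~$m\cdot t\cdot(\ell'+1)$ edge-checker vertices down to polynomially many in~$|X|$, destroying the OR.

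The paper's proof avoids this by using the cluster structure in an essential way. It cross-composes from \oct rather than \textsc{Vertex Cover} and places the \emph{bulk} of the construction inside~$G'-X$: after subdividing all edges of each~$G_i$ by a~$P_2$ and identifying the~$P_2$'s across instances, the~$n$ ``original'' vertices of the~$t$ instances are organised into~$n$ cliques of size~$t+1$ (one vertex per instance plus a universal vertex). Every odd cycle transversal must delete at least~$t-1$ vertices from each such clique, so instance selection is encoded by \emph{which} vertex survives in each clique. The modulator~$X$ then only contains the~$2m$ subdivision vertices and~$nR$ copies of the~$\KB$ graph; the~$\KB$ gadgets (costing exactly~$2$ deletions each) force all~$n$ cliques to select the same instance. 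The key point is that the large odd structures live in~$G'-X$, not in~$X$; your attempt to route all parity information through~$X$ and use~$G'-X$ only as isolated apices is precisely what the positive kernelization result rules out.
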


\begin{definition}[\cite{BodlaenderJK11}] \label{K4InABoxDef}
The~$K_4$-in-a-box graph \KB (see \imgref{K4InABox}) is the graph obtained from a complete graph on~$4$ vertices~$\{a,b,c,d\}$ by adding a new degree-$2$ vertex~$v$ for each pair~$\{a,b\}, \{b,c\}, \{c,d\}, \{d,a\}$ such that~$v$ is adjacent to both vertices of the pair. The vertices~$\{a,c\}$ are the~$0$-labeled terminals of the graph, and the vertices~$\{b,d\}$ are the~$1$-labeled terminals of the graph.
\end{definition}

\begin{figure}
\centering
\begin{tikzpicture}[scale=0.9,thick]
\draw (0,1) -- (0,0) -- (2,0) -- (2,2) -- (0,2) -- (0,1) -- (1,0) -- (2,1) -- (1,2) -- (0,1);
\draw (1,0) -- (1,2);
\draw (2,1) -- (0,1);

\drawvertex{(0,0)}
\drawvertex{(0,1)}
\drawvertex{(0,2)}
\drawvertex{(1,0)}
\drawvertex{(1,2)}
\drawvertex{(2,0)}
\drawvertex{(2,1)}
\drawvertex{(2,2)}

\draw (-0.3,1) node {$0$};
\draw (2.3,1) node {$0$};
\draw (1,-0.3) node {$1$};
\draw (1,2.3) node {$1$};

\end{tikzpicture}
\caption{\label{K4InABox} The~$K_4$-in-a-box graph \KB with labeled vertices.}
\end{figure}
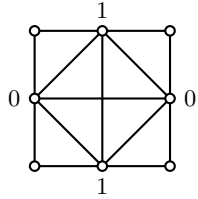

\begin{proof}[Theorem~\ref{theorem:lowerbound:cluster}]
We give a cross-composition from (unparameterized) \oct. An instance~$x$ of \oct consists of a graph~$G=(V,E)$ and an integer~$\ell$, asking whether~$G$ has an odd cycle transversal of size at most~$\ell$. 
We use the same equivalence relation~$\R$ as in the proof of Theorem~\ref{theorem:lowerbound:outerplanar}; w.l.o.g.~$\ell<|V|$.

Let~$x_1,\ldots,x_t$ be~$t$ instances of \oct that are equivalent under~\R. W.l.o.g.\ we assume~$t=2^R$ (otherwise we could copy one instance sufficiently often, at most doubling the input size). Each instance~$x_i$ asks whether a graph~$G_i$ on~$n$ vertices and~$m$ edges has an odd cycle transversal of size at most~$\ell$.

Consider a graph~$G_i$ before and after subdividing each edge with two vertices, i.e., with a path~$P_2$ of two vertices. It is easy to see that the subdivisions do not change whether or not~$G_i$ has an odd cycle transversal of size at most~$\ell$, since they do not change the parity of any path (i.e., the number of edges on any path is multiplied by three).
In a slight abuse of notation we use~$G_1,\ldots,G_t$ to denote the graphs obtained after the subdivision. Each consists of~$m$~$P_2$ as well as of an independent set of~$n$ vertices; we assume both the~$P_2$'s as well as the~$n$ independent vertices to be numbered from~$1$ to~$m$ and from~$1$ to~$n$, respectively (in each graph~$G_i$).

We will now construct a graph~$G'$ for the cross-composed instance, starting from a disjoint union of the graphs~$G_1,\ldots,G_t$:
\begin{itemize}
\item There are~$t$~$P_2$'s in~$G'$ for any given number~$i\in[m]$ at this point. We identify all of them to a single~$P_2$, for each~$i\in[m]$, and obtain~$m$~$P_2$'s.
\item We remark that all of the numbered vertices are only adjacent to~$P_2$'s at this point (due to the subdivisions). For each~$i\in[n]$ we turn all~$t$ vertices with number~$i$ into a clique by adding all possible edges between them. We add one universal vertex to each of these~$n$ cliques (i.e., a new vertex adjacent to all vertices of the clique). Thus, the independent set of all~$tn$ numbered vertices is turned into a disjoint union of~$n$ cliques each containing~$1$ vertex from each graph~$G_i$ plus one new vertex (i.e., size~$t+1$). Note that each~$G_i$ is still an induced subgraph of~$G'$.
\item For each~$p\in[R]$ we add~$n$ copies of the \KB graph. For~$\alpha\in\{0,1\}$, we connect the~$\alpha$-labeled terminals of each copy to all vertices of the cliques which correspond to a graph~$G_i$ such that the~$p$th position in the binary expansion of~$i$ is~$\alpha$.
\end{itemize}
We let~$x':=(G',\ell',X)$ denote the cross-composed instance. The set~$X$ denotes the set of all vertices in the~$m$~$P_2$'s as well as all vertices of the~$n\cdot R$ copies of the \KB graph. Clearly~$G'-X$ is a cluster graph since it consists only of the disjoint union of the~$n$ cliques. The size of~$X$, and hence the parameter value of~$x'$, is equal to~$2m+8nR$, i.e., polynomial in~$\max_i|x_i|+\log t$. The instance~$x'$ asks for an odd cycle transversal of~$G'$ of size at most~$\ell':=(t-1)\cdot n+2nR+\ell$.

We will now show that~$x'$ is \yes if and only if at least one of the instances~$x_i$ is \yes.

$\boldsymbol{(\Rightarrow)}$ Assuming that~$x_i$ is \yes, let~$S$ be an odd cycle transversal for~$G_i$ of size at most~$\ell$. We choose an odd cycle transversal~$S'$ of~$G'$. First, we add the at most~$\ell$ vertices of~$S$ (as~$G_i$ is a subgraph of~$G'$). Next, we add the~$(t-1)\cdot n$ vertices of the~$n$ cliques to~$S'$ which correspond to graphs~$G_{i'}$ with~$i'\in[t]\setminus\{i\}$. Then we add~$2$ vertices from each~\KB graph to~$S'$, matching the binary expansion of~$i$. The total size of~$S'$ is at most~$(t-1)\cdot n+2nR+\ell$.

We argue that~$G'-S'$ must be bipartite. Let us first consider the~\KB graphs in~$G'-S'$. In~$G'$ a~\KB graph has neighbors in the~$n$ cliques; they are connected either to its~$0$- or its~$1$-terminals. The set~$S'$ contains all those vertices, except for some that correspond to~$G_i$. However,~$S'$ was selected to contain exactly those terminals of the~\KB graphs that are adjacent to the vertices which correspond to~$G_i$. Hence, in~$G'-S'$ the remainders of the~\KB graphs form (bipartite) connected components of their own (bipartiteness after deletion of either~$0$- or~$1$-terminals can be easily checked).

Now let us consider the other components of~$G'-S'$. There is a copy of~$G_i-S$ and there are the vertices which were added to the~$n$ cliques as universal vertices ($1$ per clique), but the latter are adjacent to at most one vertex of~$G_i-S$, since they are only adjacent to vertices of their clique and~$S'$ deletes all vertices of other graphs~$G_j$ from~$G'$. Hence any~$2$-coloring of~$G_i-S$ can be easily extended to~$G'-S'$, implying that~$x'$ is a \yes-instance.

$\boldsymbol{(\Leftarrow)}$ Assuming that~$x'$ is \yes, let~$S'$ be an odd cycle transversal of~$G'$ of size at most~$\ell'=(t-1)\cdot n+2nR+\ell$. Clearly,~$S'$ must contain at least~$(t-1)$ vertices from each of the~$n$ cliques, since each of them contains~$t+1$ vertices. Similarly, it must contain at least~$2$ vertices of each~\KB graph, since it contains two vertex-disjoint triangles.

If~$S'$ would contain more than~$2$ vertices from all~$n$ copies of the~\KB graph corresponding to some position~$p\in[R]$ then the total size of~$S'$ would exceed~$\ell'$:
\[
|S'|\geq 3n+2n\cdot(R-1)+(t-1)\cdot n=2nR+(t-1)\cdot n+n>2nR+(t-1)\cdot n+\ell,
\]
since~$\ell<n$. Let us consider one~\KB and the selection of~$S'$ for each~$p\in[R]$. It can be easily checked that there are exactly two odd cycle transversals of size two for~\KB, namely choosing either the~$0$- or the~$1$-terminals. Let~$i\in[t]$ such that the~$p$ths position of its binary expansion matches the choice of terminals of~$S'$ in the corresponding~\KB. We claim that~$x_i$ is \yes.

Let~$v$ be a vertex in one of the cliques that corresponds to a graph~$G_{i'}$ with~$i'\in[t]\setminus\{i\}$. Let~$p\in[R]$ such that the binary expansions of~$i$ and~$i'$ differ in position~$p$. Thus there must be a~\KB graph corresponding to position~$p$ in which~$S'$ picked exactly the terminals that match~$i$, and therefore the other two terminals are present in~$G'-S'$. Since those two terminals form a triangle with~$v$ in~$G'$, we may conclude that~$S'$ contains~$v$.

Hence, in~$G'-S'$ there are no vertices left that correspond to graphs other than~$G_i$. Let us consider the induced copy of~$G_i$ in~$G'$ (as per construction) and the set of vertices~$S$ in which~$S'$ intersects it. Clearly, the induced copy does not contain the~\KB graphs and also does not contain the~$(t-1)\cdot n$ vertices of other graphs~$G_{i'}$. Therefore, since~$S'$ intersects those other parts of~$G'$ in a total of at least~$(t-1)\cdot n+2nR$ vertices, the set~$S$ contains at most~$\ell$ vertices. Since~$G'-S'$ is bipartite, the same must be true for~$G_i-S$, which implies that~$x_i$ is \yes.
\qed
\end{proof}}

\lowerboundproofcluster

\newcommand{\lowerboundproofcocluster}{
\begin{theorem}\label{theorem:lowerbound:cocluster}
$(\cocluster)$-\oct does not admit a polynomial kernelization unless \containment.
\end{theorem}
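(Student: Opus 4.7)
The plan is to prove Theorem~\ref{theorem:lowerbound:cocluster} by cross-composition from the (unparameterized) NP-hard problem \vertexcover. I will reuse essentially the same polynomial equivalence relation $\R$ as in the proof of Theorem~\ref{theorem:lowerbound:cluster}, grouping \vertexcover instances by the triple $(n,m,\ell)$ with WLOG $0<\ell<n$ and dispatching trivial or ill-formed instances to constant-size dummies. Given $t=2^R$ equivalent instances $(G_1,\ell),\ldots,(G_t,\ell)$ on $n$ vertices and $m$ edges (padding by duplication if $t$ is not already a power of two), the cross-composition will produce an instance $(G',\ell',X)$ of $(\cocluster)$-\oct whose parameter $|X|$ is polynomial in $\max_i|x_i|$, independent of $t$, and which is \yes iff some input instance is \yes.

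My design of $G'-X$ will exploit the structural fact that any odd cycle transversal of a cocluster graph can leave at most two of its parts nonempty, since any three vertices from three distinct parts form a triangle. Concretely, I take $G'-X$ to be the complete multipartite graph on parts $V_0,V_1,\ldots,V_t$, where $V_0$ is a large ``anchor'' part of size $L:=(t-1)n+\ell+1$, chosen so that $V_0$ cannot be fully deleted within the budget, and each $V_i=\{v_{i,1},\ldots,v_{i,n}\}$ represents the vertex set of $G_i$. Into the modulator I insert for every edge position $e\in[m]$ two shared ``bridge'' vertices $w_e,w_e'\in X$ together with the edge $\{w_e,w_e'\}$, and for each $i\in[t]$ I add edges $\{w_e,v_{i,p_i^e}\}$ and $\{w_e',v_{i,q_i^e}\}$, where $\{p_i^e,q_i^e\}$ is the $e$-th edge of $G_i$. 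The budget is $\ell':=(t-1)n+\ell$ and $|X|=2m$.

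For correctness, the forward direction takes a vertex cover $S$ of some $G_i$ with $|S|\leq\ell$ and sets $S':=\bigcup_{j\neq i}V_j\cup\{v_{i,s}:s\in S\}$, verifying bipartiteness via the $2$-coloring that sends $V_0\mapsto 0$, the surviving $V_i$-vertices $\mapsto 1$, and extends appropriately to each bridge pair $(w_e,w_e')$ (this works precisely because $S$ is a VC, so at most one of $v_{i,p_i^e},v_{i,q_i^e}$ is still present). For the backward direction, any OCT $S'$ of size at most $\ell'$ restricts to an OCT of the cocluster graph $G'-X$, forcing at most two of the parts to survive; the size bound on $V_0$ forces $V_0$ to be one of them, and a budget count using $\ell<n$ shows that exactly one $V_{j^*}$ must also survive while the other $(t-1)n$ instance-part vertices are deleted. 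In the resulting complete bipartite subgraph between surviving $V_0$ and $V_{j^*}$ both sides must be monochromatic, so for every edge $e$ of $G_{j^*}$ the $5$-cycle $w_e - v_{j^*,p_{j^*}^e} - u - v_{j^*,q_{j^*}^e} - w_e' - w_e$ is present for any $u\in V_0\setminus S'$ and must be broken by $S'$ containing one of $\{w_e,w_e',v_{j^*,p_{j^*}^e},v_{j^*,q_{j^*}^e}\}$. Invoking the elementary identity $\min_{E'\subseteq E(H)}(\mathrm{VC}(H\setminus E')+|E'|)=\mathrm{VC}(H)$ (where $E'$ plays the role of edges handled by bridge deletions) then yields $\mathrm{VC}(G_{j^*})\leq\ell$, and applying Theorem~\ref{crossCompositionNoKernel} completes the proof under \ncontainment.

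The main obstacle I expect is ruling out unintended odd cycles introduced by the cross-instance sharing of bridges: because $V_i$ and $V_j$ are adjacent in the cocluster whenever $i\neq j$, each shared $w_e$ immediately creates triangles $w_e-v_{i,p_i^e}-v_{j,p_j^e}-w_e$. The key observation to overcome this is that every such triangle involves some $v_{j,\cdot}$ with $j\neq j^*$, all of which are killed by the ``obligatory'' instance-part deletions of $(t-1)n$ vertices, while symmetrically in the forward direction each surviving bridge reduces to a pendant $P_3$ attached to $V_i$, contributing no new odd cycles. Careful calibration of $L$ is also needed to ensure that keeping two non-anchor parts strictly exceeds the budget, so that the ``one-instance-selected'' outcome is forced.
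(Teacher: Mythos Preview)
Your proposal is correct and takes a genuinely different route from the paper's own proof.

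The paper cross-composes from unparameterized \oct rather than \vertexcover. It first subdivides every edge of each $G_i$ by a $P_2$, then identifies the $m$ resulting $P_2$'s across all $t$ instances; the original vertex sets $\I_1,\ldots,\I_t$ are joined pairwise to form the co-cluster $G'-X$. Instance selection is done by a separate family of $2n\cdot R n^2$ selector vertices $v_{p,i,j}$ (with $R=\log t$) whose adjacencies encode the binary expansion of the instance index, so that any sufficiently small \oct must fully delete all but one $\I_{i^\ast}$; the $P_2$ vertices then carry the edge information of $G_{i^\ast}$. The modulator $X$ consists of the $2m$ subdivision vertices together with all selector vertices, so $|X|\in\Theta(m+n^3\log t)$.

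Your construction is more economical and conceptually cleaner. By adding a single large anchor part $V_0$ of size $(t-1)n+\ell+1$ you get instance selection for free from the co-cluster structure itself: the ``at most two parts survive'' triangle argument together with $|V_0|>\ell'$ and $\ell<n$ pins down exactly one surviving $V_{j^\ast}$, with no binary encoding and no selector gadgets. Encoding edges via the shared bridges $w_e,w_e'$ and the explicit $5$-cycle through $V_0$ is also tidier than the paper's subdivided-edge mechanism, and your modulator has only $2m$ vertices, independent of $t$. The trade-off is that you compose from \vertexcover rather than \oct, so you need the small auxiliary identity $\mathrm{VC}(H)\leq \mathrm{VC}(H\setminus E')+|E'|$ to absorb bridge deletions; this is elementary and causes no difficulty. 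The paper's approach, by contrast, reuses machinery (the $P_2$-identification trick and \KB-style selectors) shared with its other lower bounds, which may explain the choice there.
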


\begin{proof}
We give a cross-composition from (unparameterized) \oct. An instance~$x$ of \oct consists of a graph~$G=(V,E)$ and an integer~$\ell$, asking whether~$G$ has an odd cycle transversal of size at most~$\ell$. 
We use essentially the same equivalence relation~$\R$ as in the proof of Theorem~\ref{theorem:lowerbound:outerplanar}, except that w.l.o.g.~$\ell<|V|-2$ (since instances with~$\ell\geq|V|-2$ are trivially \yes).

Let~$x_1,\ldots,x_t$ be~$t$ instances of \oct that are equivalent under~\R. 
Each instance~$x_i$ asks whether a graph~$G_i$ on~$n$ vertices and~$m$ edges has an odd cycle transversal of size at most~$\ell$. Again we assume that~$t=2^R$.

Consider a graph~$G_i$ before and after subdividing each edge with two vertices, i.e., with a path~$P_2$ of two vertices. It is easy to see that the subdivisions do not change whether or not~$G_i$ has an odd cycle transversal of size at most~$\ell$, since they do not change the parity of any path (i.e., the number of edges on any path is multiplied by three).
In a slight abuse of notation we use~$G_1,\ldots,G_t$ to denote the graphs obtained after the subdivision. Each consists of~$m$~$P_2$ as well as of an independent set of~$n$ vertices; we assume both the~$P_2$'s as well as the~$n$ independent vertices to be numbered from~$1$ to~$m$ and from~$1$ to~$n$, respectively (in each graph~$G_i$). For~$i\in[t]$ we let~$\I_i$ denote the independent set on the~$n$ numbered vertices of~$G_i$. 

We will now construct a graph~$G'$ for the cross-composed instance, starting from a disjoint union of the graphs~$G_1,\ldots,G_t$:
\begin{itemize}
\item There are~$t$~$P_2$'s in~$G'$ for any given number~$i\in[m]$ at this point. We identify all of them to a single~$P_2$, for each~$i\in[m]$, and obtain~$m$~$P_2$'s.
\item It can be easily seen that~$\bigcup_i\I_i$ is an independent set in~$G'$. We add all edges~$\{u,v\}$ for any~$u\in\I_i$ and~$v\in\I_j$ with~$i\neq j$. Thus~$G'$ now contains~$\I_1\oplus\ldots\oplus\I_t$, i.e., the join of the~$t$ independent sets. Note that each~$G_i$ is still an induced subgraph of~$G'$.
\item We add vertices~$v_{p,i,j}$, with~$p\in[R]$ and~$i,j\in[n]$, which we connect to the independent sets~$\I_1,\ldots,\I_t$ as follows:
\begin{itemize}
\item We add an edge to the~$i$th vertex of the~$r$th independent set if the~$p$th bit in the binary expansion of~$r$ is zero.
\item We add an edge to the~$j$th vertex of the~$r$th independent set if the~$p$th bit in the binary expansion of~$r$ is one.
\end{itemize}
We note that each such vertex~$v_{p,i,j}$ has exactly~$t$ neighbors: one in each independent set of the co-cluster.

We make a total of~$2n$ copies of~$v_{p,i,j}$ for each~$p\in[R]$ and~$i,j\in[n]$.
\end{itemize}
Let~$X$ be a subset of the vertices of~$G'$ containing the~$2m$ vertices of the~$m$~$P_2$'s as well as the~$2n\cdot Rn^2$ copies of vertices~$v_{p,i,j}$. Clearly,~$G'-X$ is a co-cluster since it only contains the join of the~$t$ independent sets. We let~$\ell':=(t-1)\cdot n+\ell$ and define the cross-composed instance as~$x':=(G',\ell',X)$. It is easy to see that the parameter value, i.e., the size of~$X$, is polynomial in~$\max_i|x_i|+\log t$ and that the construction can be performed in polynomial time.

For correctness of the cross-composition we will now show that~$(G',\ell',X)$ is \yes if and only if one of the instances~$(G_i,\ell)$ is \yes:

$\boldsymbol{(\Leftarrow)}$ Let~$i\in[t]$ such that~$(G_i,\ell)$ is \yes and let~$S$ be an odd cycle transversal for~$G_i$ of size at most~$\ell$. To get an odd cycle transversal~$S'$ for~$G'$ we add to~$S$ all vertices of the independent sets~$\I_j$ for~$j\neq i$. The graph remaining after deletion of~$\bigcup_{j\neq i}\I_j$ contains a copy of~$G_i$ plus the vertices~$v_{p,i,j}$ but the latter have only one neighbor in~$G_i$ (since they have exactly one per independent set). Thus any 2-coloring of~$G_i-S$ can be extended to~$G'-S'$, and hence~$S'=S\cup\bigcup_{j\neq i}\I_j$ is an odd cycle transversal of~$G'$ of size at most~$(t-1)\cdot n+\ell$. This implies that~$x'=(G',\ell',X)$ is \yes too.

$\boldsymbol{(\Rightarrow)}$ Let~$S'$ be an odd cycle transversal of~$G'$ of size at most~$\ell'=(t-1)\cdot n+\ell$. We first observe that~$S'$ must contain all vertices of all but at most two of the independent sets~$\I_i$, since any three vertices from different independent sets induce a triangle. Therefore,~$S'$ cannot contain all~$2n$ copies of any~$v_{p,i,j}$ vertex, since then its total size would be at least~$t\cdot n>(t-1)\cdot n+\ell$; taking into account the~$2n$ copies and the at least~$(t-2)\cdot n$ vertices of the~$(t-2)$ independent sets that it contains.

Now, consider any two vertices say~$u$ and~$v$ from different independent sets, say~$\I_r$ and~$\I_s$, with~$r\neq s$. Let~$p$ be a bit position where~$r$ and~$s$ differ. Hence there are integers~$i$ and~$j$ such that~$v_{p,i,j}$ vertices are adjacent to~$u$ and~$v$ in~$G'$. Since~$S'$ cannot contain all~$v_{p,i,j}$ vertices, it must contain at least one of~$u$ and~$v$; as~$u$,~$v$, and~$v_{p,i,j}$ induce a triangle. Thus~$S'$ must contain all vertices from at least~$(t-1)$ independent sets, say from all but~$\I_j$. Restricting~$S'$ to~$\I_j$ and the~$P_2$-vertices we must obtain an odd cycle transversal~$S$ for (the subdivided version of)~$G_j$. Clearly,~$S$ is of size at most~$\ell$ since we remove at least~$(t-1)\cdot n$ vertices from~$S'$ to get it. Thus~$(G_j,\ell)$ is a \yes-instance.
\qed
\end{proof}}

\lowerboundproofcocluster

\newcommand{\lowerboundproofweightedoctbyvertexcover}{
\begin{theorem}\label{theorem:lowerbound:weightedoctbyvertexcover}
\textsc{Weighted Odd Cycle Transversal Parameterized by the size of a Vertex Cover} does not admit a polynomial kernelization unless \containment.
\end{theorem}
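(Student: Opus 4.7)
The plan is to prove this bound by cross-composition from \oct into \weightedoctproblem parameterized by vertex cover size. I will use the polynomial equivalence relation~$\R$ from the preceding proofs, identifying two instances of \oct if they share $n := |V(G)|$, $m := |E(G)|$, and $\ell$, and collecting trivial instances into separate classes. Given $t = 2^R$ such equivalent instances $(G_1,\ell), \ldots, (G_t,\ell)$, I will construct a single weighted instance $(G', w, \ell', X)$ of \weightedoctproblem whose vertex cover $X$ has size $\BigO(n + \log t)$, and for which $G'$ admits a weighted OCT of total weight at most~$\ell'$ if and only if some $G_i$ has an OCT of size at most~$\ell$.

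The construction will proceed in layers. In the vertex cover~$X$ I will place $n$ shared ``slot'' vertices $v_1,\ldots,v_n$ of weight~$1$, representing the common vertex set of the instances, together with $2R$ ``bit-selector'' vertices $\sigma_{r,b}$ for $r \in [R]$, $b \in \{0,1\}$, also of weight~$1$, with the pair-edges $\{\sigma_{r,0},\sigma_{r,1}\}$ included inside~$X$. Outside~$X$ I will place, for each~$r$, a vertex~$\tau_r$ of prohibitively large weight $W \gg \ell'$ adjacent to both $\sigma_{r,0}$ and $\sigma_{r,1}$; these $R$ triangles force any OCT to delete at least one vertex per selector pair, and the surviving pattern encodes an instance index $i \in [t]$ in binary via the convention that $\sigma_{r, 1-b_r(i)}$ is the deleted one. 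For each instance~$i$ and each edge $\{p,q\} \in E(G_i)$, I will add an edge gadget consisting of high-weight vertices in $V(G') \setminus X$ which, together with $v_p$, $v_q$, and the selectors $\sigma_{r, b_r(i)}$, forms an odd cycle in~$G'$. Since the gadget vertices cannot lie in any feasible OCT, when instance~$i$ is selected (every $\sigma_{r, b_r(i)}$ survives) the only feasible way within budget to destroy this cycle is to delete $v_p$ or~$v_q$; when a different instance is selected, at least one $\sigma_{r, b_r(i)}$ is absent, so the cycle is already broken and the gadget imposes no constraint on~$v_p,v_q$.

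Setting $\ell' := R + \ell$, the forward direction is immediate: given any OCT~$S$ of size at most~$\ell$ for some~$G_i$, the set $\{v_p : p \in S\} \cup \{\sigma_{r, 1-b_r(i)} : r \in [R]\}$ has weight $|S| + R \leq \ell'$ and destroys every odd cycle of~$G'$. Conversely, any weighted OCT of~$G'$ of total weight at most~$\ell'$ must contain at least one vertex from each of the $R$ selector triangles, consuming weight~$R$, and so can afford at most~$\ell$ additional weight among the slot vertices; reading off the selected instance~$i$ from the surviving selectors, the set $\{p : v_p \in \mathrm{OCT}\}$ is an OCT of~$G_i$ of size at most~$\ell$, as any uncovered edge $\{p,q\} \in E(G_i)$ would leave the odd cycle of its gadget intact. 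Because the parameter satisfies $|X| = n + 2R \in \BigO(n + \log t)$ and hence is polynomial in $\max_i |x_i| + \log t$, \thmref{crossCompositionNoKernel} will then yield the claimed lower bound.

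The principal technical obstacle will be designing the edge gadget so that three requirements hold simultaneously: $V(G') \setminus X$ remains an independent set (so that~$X$ is indeed a vertex cover); for the selected instance~$i$ each gadget produces exactly one odd cycle through the intended pair $(v_p, v_q)$ and contributes no other; and no spurious even-length paths arise between $v_p$ and $v_{p'}$ from two distinct gadgets for edges of the same~$G_i$ meeting at a shared selector $\sigma_{r, b_r(i)}$, since such accidental equality constraints would degenerate the reduction into one for \textsc{Vertex Cover} on~$G_i$ rather than \oct. I expect this to be resolved by routing each gadget through its own pair of high-weight buffer vertices in $V(G') \setminus X$ flanking the selector hubs, so that the parity imposed between $v_p$ and $v_q$ by the gadget of~$\{p,q\}$ is insulated from the parities imposed by any other edge gadget, and, where finer parity control is required, by replacing the hub with a $K_4$-in-a-box-like subgraph (\defref{K4InABoxDef}) that converts the conditional parity requirement into a conditional odd cycle localized to the endpoints of the encoded edge.
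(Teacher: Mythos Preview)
Your proposal has a genuine gap in the forward direction, and the last paragraph suggests you have misdiagnosed where the difficulty lies.

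As you describe it, the edge gadget for~$\{p,q\}\in E(G_i)$ is an odd cycle through~$v_p$, $v_q$, the selectors~$\sigma_{r,b_r(i)}$, and some heavy vertices; when instance~$i$ is selected this cycle must be hit by deleting~$v_p$ or~$v_q$. But that is a \emph{vertex cover} constraint on~$G_i$, not an odd cycle transversal constraint. Concretely, take an OCT~$S$ of~$G_i$ with~$|S|\leq\ell$ and an edge~$\{p,q\}$ of~$G_i$ with~$p,q\notin S$ (such edges exist whenever~$G_i-S$ is bipartite but not edgeless). In your proposed solution~$\{v_p:p\in S\}\cup\{\sigma_{r,1-b_r(i)}:r\in[R]\}$, neither~$v_p$ nor~$v_q$ is deleted, all~$\sigma_{r,b_r(i)}$ survive, and the heavy gadget vertices survive; hence the odd cycle of that gadget is intact. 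So~``$x_i$ is \yes'' does \emph{not} imply the composed instance is \yes, and the cross-composition fails. Your own phrasing in the converse direction (``any uncovered edge \ldots\ would leave the odd cycle of its gadget intact'') is literally vertex cover reasoning; the problem is not that spurious paths between gadgets might \emph{degenerate} the reduction to \textsc{Vertex Cover}, but that each gadget in isolation already encodes a vertex cover constraint.

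There are two honest repairs. One is to change the source problem to \textsc{Vertex Cover}; then your per-edge odd cycle gadget is exactly what you want, and the ``spurious parity'' worry disappears. The other is the route the paper takes: keep \oct as the source, but encode each edge of~$G_i$ as a \emph{path} (a~$P_2$ subdivision) rather than an odd cycle, so that the odd cycles of the composed instance correspond to the odd cycles of~$G_i$ rather than to its edges. The paper does this by placing the~$2m$ subdivision vertices (shared across instances) into the vertex cover~$X$, keeping a separate independent set~$I_i$ of~$n$ original vertices per instance outside~$X$, and using~$\log t$ heavy~$\KB$ gadgets to force the deletion of all~$I_j$ with~$j\neq i$; this gives~$|X|=2m+8\log t$ and budget~$\ell'=2w\log t+(t-1)n+\ell$. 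Note in particular that the paper does \emph{not} try to share the~$n$ slot vertices across instances as you do; sharing them is precisely what forces your gadget to encode a per-edge (hence vertex cover) rather than a per-odd-cycle constraint.
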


\begin{proof}[sketch.]
The proof is similar to the one for \thmref{theorem:lowerbound:cocluster}. We will only sketch the construction and mention the main idea of how the weights are used. As for \thmref{theorem:lowerbound:cocluster} the proof goes by cross-composition from~$\oct$. We use the same polynomial equivalence relation, so let the input for the cross-composition be instances~$(G_1,\ell),\ldots,(G_t,\ell)$ where each~$G_i$ is a graph with~$n$ vertices and~$m$ edges. Further, for ease of presentation, we assume all graphs to already have passed the subdivision step, where each edge is subdivided by a~$P_2$ (see the proof of \thmref{theorem:lowerbound:cocluster}). Finally w.l.o.g.~$t$ is a power of two.

The construction is as follows, starting from a graph~$G'$ that is a disjoint union of the graphs~$G_1$ through~$G_t$. Recall that each graph~$G_i$ consists of an independent set, denoted by~$I_i$, as well as~$m$ non-adjacent~$P_2$'s, numbered arbitrarily from~$1$ to~$m$.
\begin{itemize}
\item First, we identify all~$P_2$'s of the same number into one. We retain a graph with the~$t$ independent sets~$I_1,\ldots,I_t$ as well as~$m$ non-adjacent~$P_2$'s. Note that all the information about the graphs~$G_i$ now lies in the adjacency of the independent sets to the~$P_2$'s.
We also observe that each~$G_i$ (with the~$P_2$ subdivisions made) is an induced subgraph of our current graph~$G'$ (the identification does not change that). Thus, the final piece is an instance selector which forces the deletion of all but one independent set.
\item Recall the graph~$\KB$ (see Definition~\ref{K4InABoxDef}) with the two pairs of vertices labeled~$0$ and~$1$ respectively. We add~$\log t$ copies of it to our graph and connect the labeled vertices to all independent set vertices according to the binary expansions of their numbers (each~$\KB$ corresponds to one of the~$\log t$ bit positions). Recall that in each~$\KB$ already two vertex deletions are necessary to remove all odd cycles (delete the~$0$ or the~$1$ labeled vertices).
\end{itemize}
Giving the vertices of the~$\KB$ graphs a very high weight~$w$ and allowing a budget of~$\ell'=2w\log t+(t-1)n+\ell$ we force that in any solution of cost at most~$\ell'$ exactly~$2$ vertices can be deleted in each~$\KB$; the cost is~$2w\log t$. By construction there will be exactly one independent set whose number~$i\in[t]$ is such that it is disconnected from the remains of all~$\KB$ graphs, i.e., such that exactly the adjacent labeled vertices in the~$\KB$ graphs were deleted. All vertices of other independent sets are adjacent to two labeled vertices in at least one~$\KB$ graph which gives rise to a triangle. As the budget prohibits the deletion of more labeled vertices, all those vertices of the independent sets must be deleted; this costs~$(t-1)n$. What is left is a copy of one graph~$G_i$ with subdivided edges plus the (disconnected) bipartite remainders of the~$\KB$ graphs. Thus the remaining budget of~$\ell$ must go into an odd cycle transversal for that graph~$G_i$. Clearly, making~$w$ larger than~$(t-1)n+\ell$, e.g.,~$w=tn$ is sufficient for the above to work. It is straightforward to construct an odd cycle transversal for~$G'$ at cost at most~$\ell'$ given a transversal of size at most~$\ell$ for one graph~$G_i$. Finally, the set~$X$ defined to contain the~$2m$ vertices of the (identified)~$P_2$'s as well as the~$8\log t$ vertices of the~$\KB$ graphs can be given as a vertex cover for~$G'$; its size is bounded by a polynomial in the maximum instance size (larger than~$n+m$) and~$\log t$, as required. Clearly, polynomial time is enough to perform the construction.
\qed
\end{proof}}

\lowerboundproofweightedoctbyvertexcover

\section{Bibliography for appendix}
In this section we list the bibliographic information for items which were referenced in the appendix. Since these will not be present in the camera-ready version, we do not want to sacrifice space for these references in the alloted 12 pages.

\putbib[../Paper]
\end{bibunit}

\end{document}